\crefname{line}{line}{lines}
\Crefname{line}{Line}{Lines}
\crefname{table}{Table}{Table}
\crefname{page}{page}{pages}
\Crefname{page}{Page}{Pages}
\crefname{equation}{}{}
\Crefname{equation}{}{}
\crefname{observation}{Observation}{Observations}
\Crefname{observation}{Observation}{Observations}
\crefname{claim}{Claim}{Claims}
\Crefname{claim}{Claim}{Claims}
\crefname{lemma}{Lemma}{Lemmas}
\Crefname{lemma}{Lemma}{Lemmas}
\newtheorem{claim}[theorem]{Claim}
\newtheorem{observation}[theorem]{Observation}
\newcommand{\RMR}{\mathit{RMR}}
\newcommand{\RR}{\mathcal{R}}
\newcommand{\Exec}{Exec}
\newcommand{\Conf}{Conf}
\newcommand{\Sched}[1]{\sigma_{\rightarrow #1}}
\newcommand{\E}[1]{E_{\rightarrow #1}}
\newcommand{\Val}{val}
\newcommand{\Lost}{L}
\newcommand{\K}{K}
\newcommand{\HH}{H}
\newcommand{\Proc}{Proc}
\newcommand{\Cache}{Cache}
\newcommand{\VV}{\mathcal{V}}
\title{An Almost Tight RMR Lower Bound for Abortable Test-And-Set}
\titlerunning{RMR Lower Bound for Abortable TAS}
\author{Aryaz Eghbali}{Department of Computer Science, University of Calgary, Canada}{aryaz.eghbali@ucalgary.ca}{}{}
\author{Philipp Woelfel}{Department of Computer Science, University of Calgary, Canada}{woelfel@ucalgary.ca}{}{}
\authorrunning{A. Eghbali and P. Woelfel}
\subjclass{\ccsdesc[100]{Theory of computation~Shared memory algorithms}}
\keywords{Abortability, Test-And-Set, Leader Election, Compare-and-Swap, RMR Complexity, Lower Bound}
\begin{document}

\maketitle

\begin{abstract}
  We prove a lower bound of $\Omega(\log n/\log\log n)$ for the remote memory reference (RMR) complexity of abortable test-and-set (leader election) in the cache-coherent (CC) and the distributed shared memory (DSM) model.
  This separates the complexities of abortable and non-abortable test-and-set, as the latter has constant RMR complexity \cite{GHW2010a}.
  
  Golab, Hendler, Hadzilacos and Woelfel \cite{GHHW2012a} showed that compare-and-swap can be implemented from registers and TAS objects with constant RMR complexity.
  We observe that a small modification to that implementation is abortable, provided that the used TAS objects are abortable. 
\end{abstract}

\section{Introduction}
In this paper, we study the remote memory references (RMR) complexity of abortable test-and-set.
Test-and-set (TAS) is a fundamental shared memory primitive that has been widely used as a building block for classical problems such as mutual exclusion and renaming, and for the construction of stronger synchronization primitives \cite{KRS1988a,PPTV1998a,EHW1998a,BPSV2006a,AAGGG2010a,AAGG2011a,AACGZ2011a,GHHW2012a}.

We consider a standard asynchronous shared memory system in which $n$ processes with unique IDs communicate by reading and writing shared registers.
A TAS object stores a bit that is initially 0, and provides two methods, \TAS{}, which sets the bit and returns its previous value, and \Read{}, which returns the current value of the bit.
TAS is closely related to mutual exclusion \cite{Dij1965a}: a TAS object can be viewed as a one-time mutual exclusion algorithm, where only one process (the one whose \TAS{} returned 0) can enter the critical section \cite{DHW1997a}.

TAS objects have consensus-number two, and therefore they have no wait-free implementations.
In particular, in deterministic TAS implementations, processes may have to wait indefinitely, by spinning (repeatedly reading) variables.
It is common to predict the performance of such blocking algorithms by bounding remote memory references (RMRs).
These are memory accesses that traverse the processor-to-memory interconnect.
Local-spin algorithms achieve low RMR complexity by spinning on locally accessible variables.
Two models are common:
In \emph{distributed shared memory (DSM)} systems, each shared variable is permanently locally accessible to a single processor and remote to all other processors.
In \emph{cache-coherent (CC)} systems, each processor keeps local copies of shared variables in its cache; the consistency of copies in different caches is maintained by a \emph{coherence protocol}.
Memory accesses that cannot be resolved locally and have to traverse the processor-to-memory interconnect are called \emph{remote memory references} (RMRs).

Golab, Hendler, and Woelfel \cite{GHW2010a} devised deadlock-free TAS algorithms with $O(1)$ RMR complexity for the DSM and the CC model, which in turn have been used to construct equally efficient comparison-primitives, such as compare-and-swap (CAS) objects \cite{GHHW2012a}.
These constructions are particularly useful in the study of the complexity of the mutual exclusion problem, for which the RMR complexity is the standard performance metric  \cite{AK2002a,AK2000a,KA2006a,AHW2008a,DG2010a,Jay2003a,JPN2005a,KA2001a,BG2011a,HW2010a,HW2011a,PW2012a,GW2012a,And1990a,Lee2005a,DL2008a,Lee2010a,GW2014a}.

In the context of mutual exclusion, it has been observed that systems often require locks to support a ``timeout'' capability that allows a process waiting too long for the lock, to abort its attempt \cite{Sco2002a}.
In database systems, such as Oracle's Parallel Server and IBM's DB2, the ability of a thread to abort lock attempts serves the dual purpose of recovering from a transaction deadlock and tolerating preemption of the thread that holds the lock \cite{Sco2002a}.
In real time systems, the abort capability can be used to avoid overshooting a deadline.
Solutions to this problem have been proposed in the form of \emph{abortable} mutual exclusion algorithms \cite{Sco2002a,Jay2003a,PW2012a,Lee2010a,DL2008a,GW2017a}.
In such an algorithm, at any point a process may receive an \emph{abort signal} upon which, within a finite number of its own steps, it must either enter the critical section or abort its current attempt to do so, by returning to the remainder section.

The complexity of the mutual exclusion problem is not affected by abortability: The abortable algorithm by Danek and Lee \cite{DL2008a,Lee2011a} achieves $O(\log n)$ RMR complexity, which asymptotically matches the known lower bound for non-abortable mutual exclusion \cite{AHW2008a}.
But abortable mutual exclusion algorithms seem to be much more difficult to obtain than non-abortable ones, and it is not surprising that all such algorithms preceding \cite{DL2008a,Lee2011a} used stronger synchronization primitives (e.g., LL/SC objects in \cite{Jay2003a}).
Moreover, no RMR efficient randomized abortable mutual exclusion algorithms are known, unless stronger primitives are used \cite{PW2012a,GW2017a};
on the other hand, several non-abortable randomized implementations use only registers~\cite{HW2009a,HW2010a,GW2014a,BG2011a}.

As mentioned earlier, CAS objects with $O(1)$ RMR complexity can be obtained from registers \cite{GHHW2012a}, but they cannot be used in an abortable mutual exclusion algorithm without sacrificing its abortability: if a process receives the abort signal while being blocked in an operation on a CAS object, it has no option to finish that operation in a wait-free manner, and thus can also not abort its attempt to enter the critical section.
In general, implemented blocking strong objects, cannot be used to obtain abortable mutual exclusion objects.

One way of dealing with this impasse can be to make implementations of strong primitives also abortable, and to devise mutual exclusion algorithms in such a way that they accommodate operation aborts.
Similarly, other algorithms and data structures that may require timeout capabilities, can potentially be implemented from abortable objects, but not from non-abortable ones.

We define abortability in the following, natural way:
In a concurrent execution, a process executing an operation on the object may receive an abort signal at any point in time.
When that happens, it must finish its method call within a finite number of its own steps (wait-free), and as a result the method call may fail to take effect, or it may succeed.
The resulting execution must satisfy the safety conditions of the object (e.g., linearizability), if all failed operations are removed.
Moreover, a process must be able to find out, by looking at the return value, whether its aborted operation succeeded, and if it did, then the return value must be consistent with a successful operation.

It may be tempting to define a weaker forms of abortability, e.g., where a return value of an aborted operation does not indicate whether the operation succeeded or not.
But the usefulness of such a weaker notions is not clear. 
For example, abortable TAS objects (according to our definition) can easily be used to implement an abortable mutual exclusion algorithm (TAS-lock): One can store a pointer to a ``current'' TAS object in a single register $R$. 
To get the lock, a process calls \TAS{} on the TAS object that $R$ points to, and if the return value  is 0, then the process has the lock, and otherwise it keeps reading $R$ until its value changes.
To release the lock, the process simply swings the pointer $R$ so that it points to a new, fresh TAS object
(this technique was proposed in \cite{AA2011a}, and \cite{AGW2014a,AW2014a} showed how to bound the number of involved TAS objects).
This also works in the case of aborts, because a process knows whether its operation took effect, and thus whether it is allowed to swing the pointer (and in fact must, to avoid dead-locks).

For the weaker definition of abortability mentioned above, a process whose \TAS{} aborted may not be able to find out whether it has the lock or not, and then it can also not swing the pointer to a new TAS object, even though its \TAS{} may have set the bit from 0 to 1.
In fact, suppose that two processes call \TAS{}, and both \TAS{} calls abort without receiving the information whether the aborted operation took effect.
Then the TAS bit may be set, but none of the processes has received any information regarding who was successful, and reading the TAS object also provides no information.

Even though our notion of abortability may seem strong, any abortable mutual exclusion algorithm can be used to obtain any abortable object from its corresponding sequential implementation, by simply protecting the sequential code in the critical section.
An interesting question is therefore, whether abortable objects can be obtained at a lower RMR cost than mutual exclusion.

We observe that this is true for implementations of abortable CAS objects from abortable TAS objects on the CC model: a straight-forward modification of the constant RMR implementation of non-abortable CAS from TAS objects and registers \cite{GHHW2012a}, immediately yields an abortable CAS object, provided that the used TAS objects are atomic or also abortable.
\begin{theorem}\label{thm:CAS_upper_bound}
  There is a deadlock-free implementation of abortable CAS from atomic registers and deadlock-free abortable TAS objects, which has $O(1)$ RMR complexity on the CC model.
\end{theorem}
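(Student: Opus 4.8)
The plan is to start from the constant-RMR implementation of non-abortable CAS from atomic registers and TAS objects of \cite{GHHW2012a}, and to turn it into an abortable implementation by (i) replacing every atomic TAS object with a deadlock-free abortable TAS object, and (ii) specifying how a CAS operation reacts to an abort signal. The central observation is that in that construction the only steps at which a process can block are its operations on the underlying TAS objects; every other step is a read or write of an atomic register and is therefore inherently wait-free. Consequently, the only point at which a CAS operation can fail to terminate after receiving an abort signal is while it is inside a TAS call, and this is exactly the point that the abortable TAS objects let us unblock.

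First I would describe the reaction to an abort signal. If the signal reaches a CAS operation while it is \emph{not} inside a TAS call, the process simply runs to the end of the current wait-free segment of register accesses and returns as it normally would. If the signal reaches it while it is inside a TAS call, the process forwards the abort to the abortable TAS object. By the abortability of that object, the call returns within finitely many steps and, crucially, reports whether the TAS succeeded. I then split into two cases on this report. If the TAS was won, the process treats its swap as successful: it executes the remaining register writes that install its new value and finalize the operation---which, by the observation above, are wait-free---and returns the success indicator. If the TAS was lost (or the process aborted before engaging any TAS), the swap does not take place; the process reads the relevant register(s), returns a value consistent with an unsuccessful CAS, and this invocation is either linearized as a non-modifying CAS or removed altogether.

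Next I would verify the three requirements of abortability and the RMR bound. For wait-freedom after an abort, there are only finitely many TAS calls per CAS operation, each abortable TAS call terminates in finitely many steps, and all remaining work is register accesses, so the aborted CAS terminates in finitely many of its own steps. For linearizability after deleting failed operations, an aborted CAS whose TAS was lost (or that never competed) performs no write affecting the logical value, so deleting it leaves a legal run of the original algorithm, whereas an aborted CAS whose TAS was won executes exactly the same installing writes as an uninterrupted successful CAS and hence linearizes as an ordinary successful swap. For return-value consistency, the informative TAS return value tells the process whether it won, so it reports a value consistent with a successful swap precisely when one occurred. Finally, the modification introduces no additional loops and at most a constant amount of extra wait-free work per abort, so the worst-case RMR complexity of a CAS operation remains $O(1)$ on the CC model, matching \cite{GHHW2012a}.

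The step I expect to be the main obstacle is the case in which the abort signal arrives after the process has already won its TAS but before it has written back its new value. Here correctness hinges on two facts that must be checked against the details of \cite{GHHW2012a}: that the post-TAS finalization really consists only of register operations, so it can always be completed wait-free, and that the abortable-TAS guarantee of reporting success is exactly what lets the process decide it must finish as a successful swap rather than silently drop an effect that may already be visible to others. Conversely, for a lost TAS one must invoke the TAS abortability guarantee that a failed TAS leaves no trace, so that the corresponding CAS is genuinely removable.
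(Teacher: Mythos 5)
Your overall plan (modify the construction of \cite{GHHW2012a}, substitute abortable TAS, let winners finish and let aborted losers fail and be removed) is the same as the paper's, but your central observation is false for that construction, and the abort handling you build on it does not work. You claim that the only steps at which a process can block are its operations on the underlying TAS objects, every other step being a wait-free register access. Each individual register access is indeed wait-free, but the construction contains two unbounded busy-wait loops on registers: in \NameDecide{}, a process whose TAS returned 1 spins on the register $leader$ until the winner writes its ID there, and in the CAS code, a process that lost the name consensus spins on $d\rightarrow flag$ until the winner sets it. A process that receives the abort signal while spinning in either loop is \emph{not} inside a TAS call, so under your scheme it would ``run to the end of the current wait-free segment and return as it normally would'' --- that is, keep spinning until the (possibly suspended) winner makes progress. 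That violates bounded abort. The paper's entire modification consists of adding an abort check to precisely these two loops (returning $\bot$ and declaring the operation failed); the abortability of the underlying TAS is needed only so that a process blocked inside the TAS call itself can also get out, and so that a failed TAS provably leaves no trace.

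A secondary imprecision: for an aborted loser you propose to read the relevant registers and return ``a value consistent with an unsuccessful CAS,'' linearized as a non-modifying CAS or removed altogether. The first option is not available in general: the loser read $old = cmp$ at the start, so a linearized CAS returning $cmp$ would have to be a \emph{modifying} one, and the loser cannot learn the post-winner value without waiting on $flag$. The paper takes only the second option: the aborted loser returns $\bot$, its operation fails, and it is removable because (as you correctly note) neither a TAS that returned 1 nor a failed abortable TAS changes any shared state. Your treatment of the winner case --- the post-\NameDecide{} finalization is a bounded sequence of register writes and hence completes wait-free regardless of aborts --- is correct and matches the paper.
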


Note that there are efficient randomized implementations of TAS from registers, where the maximum number of steps any process takes in a \TAS{} operation is $O(\log^\ast n)$ against an oblivious adversary \cite{GW2012b}.
In the construction of CAS above, we can use such a randomized TAS implementation in place of abortable TAS.
\begin{corollary}
  There is a deadlock-free randomized implementation of abortable CAS from atomic registers, such that on the CC model against an oblivious adversary each abort is randomized wait-free, and each operation on the object incurs at most $O(\log^\ast n)$ RMRs.
\end{corollary}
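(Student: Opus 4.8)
The plan is to take the deadlock-free abortable CAS implementation provided by \Cref{thm:CAS_upper_bound} and simply substitute the randomized TAS objects of \cite{GW2012b} for its deadlock-free abortable TAS objects, then verify that all three claimed properties survive the substitution. The conceptual heart of the argument is the observation that the \cite{GW2012b} TAS implementation is \emph{wait-free} with a worst-case step bound of $O(\log^\ast n)$ against an oblivious adversary, and that any wait-free object is \emph{trivially abortable} in the sense of this paper: upon receiving an abort signal a process can simply ignore it and run its \TAS{} (or \Read{}) to completion within $O(\log^\ast n)$ of its own steps. Such an operation never ``fails'' due to abort, so the safety requirement that the return value be consistent with a successful operation holds vacuously, and the process always learns the outcome of its \TAS{} from the return value. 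Hence the \cite{GW2012b} objects can be dropped in wherever \Cref{thm:CAS_upper_bound} calls for an abortable TAS object.

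With the substitution fixed, I would redo the RMR accounting. In the construction underlying \Cref{thm:CAS_upper_bound}, each CAS operation performs only $O(1)$ register accesses and $O(1)$ operations on TAS objects, for $O(1)$ RMRs when the TAS operations are counted as primitives. Replacing each TAS object by the \cite{GW2012b} implementation turns each such TAS operation into a subroutine of at most $O(\log^\ast n)$ of the calling process's own steps, each incurring at most one RMR on the CC model, while the register accesses stay at $O(1)$ RMRs each. Summing over the $O(1)$ base-object operations yields $O(\log^\ast n)$ RMRs per CAS operation. Because the TAS subroutine is wait-free, an abort of a CAS operation also completes within $O(\log^\ast n)$ of the process's own steps, i.e.\ is randomized wait-free; and deadlock-freedom is inherited from \Cref{thm:CAS_upper_bound}, since each base object has only been replaced by a (strictly stronger) wait-free implementation.

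The two points that need genuine care are the \emph{correctness} of the composed CAS object and the preservation of the \emph{oblivious-adversary} guarantee. For correctness I would give a standard linearizability-composition argument: \Cref{thm:CAS_upper_bound} uses its abortable TAS objects only through their specified behaviour (a \TAS{} returns the previous bit and sets it to $1$, a \Read{} returns the current bit), and the \cite{GW2012b} objects linearize to exactly this behaviour, so after deleting failed operations every execution of the composed object is indistinguishable from one using atomic TAS objects, which \Cref{thm:CAS_upper_bound} already handles. The main obstacle, and the subtlety I expect to require the most attention, is the second point: the $O(\log^\ast n)$ step bound of \cite{GW2012b} holds only against an oblivious adversary, so I must argue that embedding the TAS objects inside the enclosing CAS code does not turn the effective adversary adaptive. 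This follows because an oblivious adversary fixes the entire schedule and all abort signals in advance, independently of the coins flipped inside the TAS objects, so the control flow of the CAS code cannot leak any information about those coins back to the scheduler; under this assumption the per-operation bound of \cite{GW2012b} carries over unchanged, giving the stated $O(\log^\ast n)$ RMR and randomized-wait-free guarantees.
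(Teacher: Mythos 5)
Your proposal is correct and follows essentially the same route as the paper, which justifies the corollary in two sentences by substituting the randomized $O(\log^\ast n)$-step TAS of \cite{GW2012b} for the abortable TAS objects in the construction of \cref{thm:CAS_upper_bound}. Your additional observations---that a (randomized) wait-free TAS is trivially abortable because the abort signal can be ignored, and that the oblivious adversary fixes the schedule in advance so composition does not make it adaptive---are exactly the details the paper leaves implicit.
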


Recall that there is also a deterministic constant RMR implementation of TAS from registers \cite{GHW2010a}, and thus making this implementation abortable, would, together with the result mentioned above, immediately yield deterministic constant RMR abortable implementations of CAS from registers.
Unfortunately, it turns out that a deterministic constant RMR implementation of abortable TAS from registers cannot exist.
In particular, we define the abortable leader election (LE) problem, which is not harder and possibly easier than abortable TAS (with respect to RMR complexity).
Our main technical result is an RMR lower bound of $\Omega(\log n/\log\log n)$ for that object.

In a (non-abortable) LE protocol, every process decides for itself whether it becomes the leader (it returns $win$) or whether it loses (it returns $lose$).
At most one process can become the leader, and not all participating processes can lose.
I.e., if all participating processes finish the protocol, then exactly one of them returns $win$ and all others return $lose$.
Note that then in an abortable LE protocol all participating processes allowed to return $lose$, provided that all of them received the abort signal.


An abortable TAS object immediately yields an abortable LE protocol:
Each process executes a single \TAS{} operation and returns $win$ if the \TAS{} call returns 0, and otherwise $lose$ (i.e., it returns $lose$ also when the \TAS{} return value indicates a failed abort).

Our main result is the following:
\begin{theorem}\label{thm:main_lower_bound}
  For both, the DSM and the CC model, any deadlock-free abortable leader election algorithm has an execution in which at least one process incurs $\Omega(\log n/\log\log n)$ RMRs.
\end{theorem}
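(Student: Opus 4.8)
The plan is to prove the bound via an adversary argument that constructs a single legal execution in phases, maintaining throughout a large set of processes whose local views are pairwise ``confused,'' so that none of them can yet commit to a decision. The driving intuition is purely combinatorial: since $n = (\log n)^{\Theta(\log n/\log\log n)}$, if each phase forces every surviving process to incur at least one RMR while shrinking the surviving set by at most a $\mathrm{poly}\log n$ factor, then the set remains nonempty for $\Omega(\log n/\log\log n)$ phases, and any process that survives to the end will have been charged one RMR per phase. The abort signal is the tool that makes it possible to keep the surviving set this large; it is also what separates this bound from the $O(1)$ RMR cost of non-abortable TAS.

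The invariant I would maintain is that after phase $i$ there is a configuration $C_i$ and a set $P_i$ of processes, each poised to take its next RMR-incurring step, such that each $p \in P_i$ has a view in $C_i$ that is indistinguishable from an execution in which $p$ runs solo (possibly alongside a fixed environment of already-aborted processes). Here the leader-election specification does the real work. In a solo run a non-aborting process must eventually return $win$, so while $p$'s view is solo-consistent it cannot have safely returned $lose$. But a terminated process must have committed to $win$ or $lose$, and if $p$ had committed to $win$ then by indistinguishability every other $q \in P_i$ would commit to $win$ in its own solo-consistent run, violating the at-most-one-winner condition. Hence, as long as $|P_i| \ge 2$, \emph{no} process in $P_i$ can have terminated: each is still running and is forced to take another step.

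To advance a phase I would let the processes in $P_i$ take their poised steps and then re-establish the invariant. Reads returning the expected solo value are free, so to drive the construction forward the adversary writes a shared register in a way that makes each survivor's next read an RMR. Writes by the survivors threaten indistinguishability---this is the familiar covering problem---so I would use aborts to neutralize offenders: any process poised to perform a distinguishing write, or that has advanced far enough that it is about to commit to $win$, is struck with an abort signal and run to completion as a loser, after which its footprint on shared memory is absorbed or overwritten under cover. The surviving processes are then partitioned according to the register and value their next step depends on, and the largest class is kept and re-symmetrized into $P_{i+1}$; this is the step that yields the $|P_{i+1}| \ge |P_i|/\mathrm{poly}\log n$ shrinkage. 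The same accounting should hold in both models, since forcing a changed remote read is an RMR in CC and in DSM alike.

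The main obstacle is making the phase transition simultaneously (i) bound the number of classes the survivors split into by $\mathrm{poly}\log n$ rather than $\mathrm{poly}(n)$, and (ii) guarantee that every write capable of leaking a survivor's identity is either confined to a private region, overwritten, or erased by aborting its author before any survivor observes it. Controlling the branching factor is exactly what pins the bound at $\log n/\log\log n$ instead of a constant, and the most delicate point is verifying that the abort-and-absorb operation preserves both the solo-indistinguishability invariant and the leader-election safety condition with all failed operations removed---so that the execution the adversary builds is genuinely a legal run of the algorithm and the charged RMRs are real.
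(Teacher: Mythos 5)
Your overall architecture matches the paper's: an inductive, phase-based construction of a single execution in which a large set of mutually invisible, solo-consistent processes is maintained, each forced to pay one RMR per phase, with the set shrinking by a $\mathrm{poly}\log n$ factor per phase, giving $\Omega(\log n/\log\log n)$ phases; and your observation that no two surviving solo-consistent processes can both have terminated (else both would win) is exactly the paper's Claims~\ref{claim:winsStartingInSafe} and~\ref{claim:oneTerminates}. However, there is a genuine gap at the one point where abortability actually enters the argument, and that point is the crux of the whole lower bound. You assume that a process ``struck with an abort signal and run to completion'' becomes ``a loser'' whose footprint can then be absorbed. Under the specification, an aborted process may still win; all participants may return $lose$ only if all of them were aborted, and even then nothing forces any \emph{particular} process to lose. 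Establishing that two specific processes $a,b$, each of which would win in a solo run, admit an $\{a,b\}$-only schedule with abort signals in which \emph{both} lose is the paper's central technical contribution (Lemma~\ref{lemma:bothLose} and Lemma~\ref{lem:bothLose}), and its proof is not routine: it goes through an FLP-style bivalency argument (Lemma~\ref{lem:infiniteExec}) together with the statement that in an abortable two-process leader election no process can be wait-free (Lemma~\ref{lemma:oneWaitFree}). This chain is precisely what fails for non-abortable leader election, so omitting it leaves unjustified the one step that separates the two problems.

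Relatedly, your description of the covering step is inverted relative to what can be made to work. In the high-contention write case (many processes poised to write the same register $r$ --- the case your largest-class partition must eventually face), one cannot abort and retire the many conflicting writers: that would shrink the surviving set by a constant fraction per phase rather than by a $\mathrm{poly}\log n$ factor, and their completion runs could themselves leak information. The paper instead lets \emph{all} of those writers take their RMR write and survive to the next phase; the two selected processes $a$ and $b$ then write $r$ after them, overwriting everyone, and proceed to lose via the both-lose schedule. Because $a$ and $b$ end up in $\Lost(\cdot)$ they never need to be erased from the execution, so their visibility on $r$ is harmless, and the many covered writers remain hidden (this is exactly condition (H1) in the paper's definition of hidden processes). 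So the aborted losers are the ones doing the covering, and the covered processes are the ones that survive --- not the other way around. Without the both-lose lemma, and with the roles reversed, the phase transition in the write-contended case cannot be completed, and the proposal as written does not yield the bound.
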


Leader election is one of the seemingly simplest synchronization primitives that have no wait-free implementation.
In particular, as argued above, the lower bound in \cref{thm:main_lower_bound} immediately also applies to abortable TAS. 
This is in stark contrast to the $O(1)$ RMRs upper bound for non-abortable TAS and even CAS implementations \cite{GHW2010a,GHHW2012a}.
It shows that adding abortability to synchronization primitives is almost as difficult as solving abortable mutual exclusion, which has an RMR complexity of $\Theta(\log n)$ \cite{DL2008a,Lee2011a}.

In our lower bound proof we identify the crucial reason for why abortable LE is harder than its non-abortable variant:
According to standard bi-valency arguments, for any deadlock-free LE algorithm, there is an execution in which some process takes an infinite number of steps.
But it is not hard to see that one can design an (asymmetric) 2-process LE protocol in which one fixed process is wait-free, because the other one waits for the first one to make a decision if it detects contention.
It turns out that this is not the case for abortable LE: Here, for any process, there is an execution in which that process takes an infinite number of steps.
%

\paragraph*{Other Related Work.}
Aguilera, Fr{\o}lund, Hadzilacos, Horn, and Toueg \cite{AFHHT2007a} define a different notion of abortable object, where no abort signals are sent by the system, but a process may decide for itself to abort an ongoing operation, e.g., when it detects contention.
According to their definition, the caller of an aborted operation may not find out whether its operation took effect or not.
Since this uncertainty may not be acceptable, they also introduce query-abortable objects, where a query operation allows a process to determine additional information about its last non-query operation.

Note that their notion of abortability is quite different from the one used commonly for mutual exclusion and adopted by us, where the system, and not the implementation, dictates when a process needs to abort.

\section{Abortable Compare-And-Swap in the CC Model}
In this section we consider the cache-coherent (CC) model.
Each process obtains a cache-copy with each read of a register, and the cache-copy gets only invalidated if some process later writes to the same register.
Writes as well as reads of non-cached registers incur RMRs, while reads of cached registers do not.

A CAS object provides two operations, \CAS{$cmp,new$}, and \Read{}.
Operation \Read{} returns the current value of the object.
Operation \CAS{$cmp,new$} writes $new$ to the object, if the current value is $cmp$, and otherwise does not change the value of the object.
In either case it returns the old value of the object.

Golab et.\,al.~\cite{GHHW2007a} gave an implementation of CAS from TAS and registers, which has constant RMR complexity in the CC model, i.e., each \CAS{} and reach \Read{} operation incurs only $O(1)$ RMRs.
In this section we show how to make that implementation abortable, provided that we have access to abortable TAS objects.
The pseudocode is in \cref{fig:algorithms}.
The original (non-abortable) version of the code is shown in black and our additional code to make it abortable in \textcolor{red}{red} (lines~\ref{line:NameDecide:abort} and \ref{line:CAS:abort}).

\begin{figure}[h!]
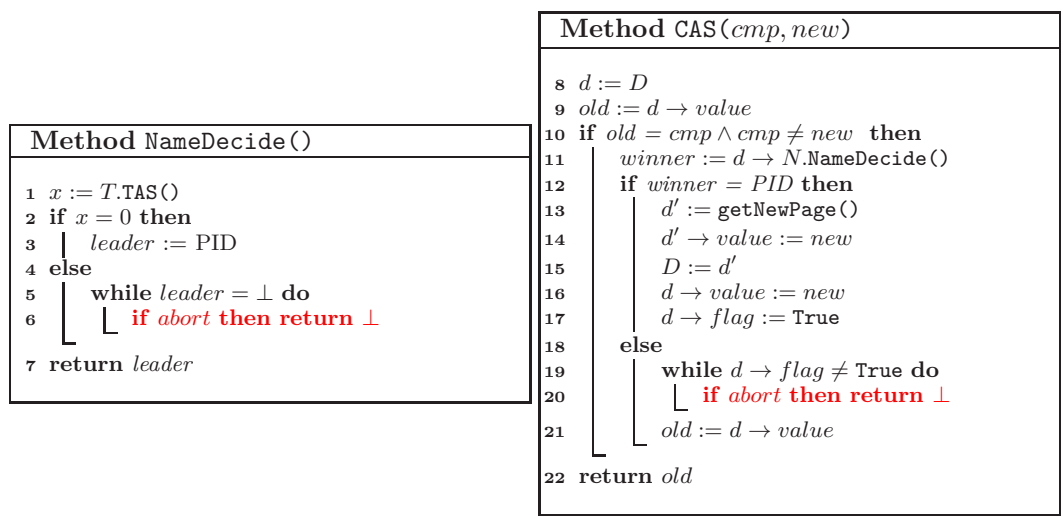
\footnotesize
  \NoCaptionOfAlgo\LinesNotNumbered\RestyleAlgo{boxruled}
  \begin{minipage}{.49\textwidth}
%
    \begin{method}{NameDecide()}
      $x := T.\FuncSty{TAS()}$\;
      \eIf{$x = 0$}
      {$leader$ := PID}
      {
        \While{$leader = \bot$}{
          \textcolor{red}{\IlIf{$abort$}{\Return $\bot$}}\label{line:NameDecide:abort}
        }
      }
      \Return{leader}
    \end{method}
  \end{minipage}  
  \begin{minipage}{.49\textwidth}
    \begin{method}{CAS($cmp,new$)}
      $d := D$\;
      $old := d\rightarrow value$\;
      \If{old = $cmp \wedge cmp \neq new$ \label{code:CAS:checkOld}}
      {
       $winner := d\rightarrow N.$\FuncSty{NameDecide()}\;
       \eIf{winner = PID}
        {
          $d' := \FuncSty{getNewPage()}$\;
          $d'\rightarrow value := new$\;
          $D := d'$\;
          $d\rightarrow value:=new$\;
          $d\rightarrow flag:=\True$\;
        }{
          \While{$d\rightarrow flag\neq\True$}{
            \textcolor{red}{\IlIf{$abort$}{\Return{$\bot$}}}\label{line:CAS:abort}
          }
          $old:=d\rightarrow value$\;
        }
      }
      \Return{$old$}
      \end{method}
\end{minipage}  

\caption{Implementation of \textcolor{red}{(abortable)} \FuncSty{NameDecide()} and \CAS{}.
  Without lines~\ref{line:NameDecide:abort} and \ref{line:CAS:abort} the algorithms are equivalent to the non-abortable implementations in \cite{GHHW2007a}.
  \label{fig:algorithms}
}
\end{figure}

\subsection{From TAS to Name Consensus}
The implementation in \cite{GHHW2007a} first constructs a name consensus object from a single TAS object $T$.
This implementation provides a method \NameDecide{}, which each process is allowed to call at most once.
All \NameDecide{} calls return the same value (agreement), which is the ID of a process calling \NameDecide{} (validity).

The non-abortable implementation in \cite{GHHW2007a} uses a TAS object $T$ and a register $leader$ that is initially $\bot$.
In a \NameDecide{} call, a process $p$ first calls $T$.\TAS{}.
If the \TAS{} returns 0, then $p$ \emph{wins}, and writes $p$ to $leader$.
Otherwise, $p$ \emph{loses}, and so it repeatedly reads $leader$, until $leader\neq \bot$, upon which $p$ can return the value of $leader$.
It is easy to see (and was formally proved in \cite{GHHW2007a}) that this is a correct name consensus algorithm.

We now show how this implementation can be made abortable, assuming the TAS object $T$ is abortable.
We assume that when a process receives the abort signal, a static process-local variable $abort$, which is initially false, changes to \True.

Recall that abortability requires that the return value of a \TAS{} operation indicates whether it failed or succeeded.
We assume a failed \TAS{} simply returns $\bot$.
In \NameDecide{}, processes are only waiting until $leader$ changes.
If a process is receiving the abort signal while waiting for $leader$ to change, then it can also simply return $\bot$.
The rest of the algorithm is the same as the original name consensus algorithm.

Clearly, the new code (\cref{line:NameDecide:abort}) does not affect RMR complexity, and following an abort the code is wait-free.
Moreover, correctness (validity and agreement) in case of no failed \NameDecide{} operations follow immediately from correctness of the original algorithm.
If a \NameDecide{} operation fails (i.e., returns $\bot$), then it did not change any shared memory object (its \TAS{} must have either failed, or returned 1).
Hence, removing an aborted and failed \NameDecide{} operation from the execution does not affect any other processes, and therefore the resulting execution must be correct.

\subsection{From Name Consensus to Compare-And-Swap}
We now show how the abortable name consensus algorithm can be used to obtain abortable CAS.
Consider the implementation of \CAS{$cmp,new$} on the right hand side in \cref{fig:algorithms}.
The black code is logically identical to the one in \cite{GHHW2007a}.
It uses a register $D$ that points to a \emph{page}, which stores two registers, $value$ and $flag$, as well as a name consensus object $N$.
Register $value$ at the page pointed to by $D$ stores the current value of the object.
(Thus, a \Read{} operation, for which we omit the pseudo code, simply returns  $D\rightarrow value$.)
The \CAS{} operation  assumes a wait-free method \FuncSty{getNewPage()}, which returns an unused page from a pool of pages (for simplicity assume this pool has infinitely many pages, but there are methods for wait-free memory management that allow using a bounded pool \cite{GHHW2012a,AW2016a}).

For a description of how the algorithm \CAS{$cmp,new$} works, we refer to \cite{GHHW2007a}.
We can prove that the abortable version presented here is correct, provided that the non-abortable version (with line~\ref{line:CAS:abort} removed) is:
First of all, obviously line~\ref{line:CAS:abort} does not change the RMR complexity.
Moreover, if a process receives the abort-signal, then its abortable \NameDecide{} call terminates within a finite number of steps, and the process also does not wait in the while-loop, so its \CAS{} call completes within a finite number of its steps.
Finally, notice that a \CAS{} call returns $\bot$ only if an abort signal was received, and in that case no shared memory objects are affected
(the process cannot have won the \NameDecide{} call).
Hence, all aborted and failed operations can be removed from the execution without changing anything for the remaining operations.
As a result we obtain \cref{thm:CAS_upper_bound}.

\section{RMR Lower Bound for Abortable Leader Election}
In this section, we give an overview of the RMR lower bound proof for abortable leader election (and thus TAS) as stated in \cref{thm:main_lower_bound}.
First, we define some notation, the system model, RMR complexity, and the abortable leader election problem.

\subsection{Lower Bound Preliminaries}
\textbf{System Model and Notation.}\quad
For a set $Q$, set $Q^k$, for some non-negative integer $k$, denotes the set of all sequences of length $k$ that contain only the elements in $Q$. Furthermore, $Q^\ast$ denotes the sets of all sequences that contain only elements of set $Q$.

For the lower bound we assume a set $\mathcal{P}$ of $n$ processes, and an arbitrary large but finite set $\RR$ of shared registers.
Processes are infinite state machines.
In each shared memory step (corresponding to a state transition), a process either reads or writes a register in $\RR$.
At an arbitrary point, a process may also receive an \emph{abort signal} which does not result in a shared memory access, but in a state change of that process, provided the process has not earlier received the abort signal.
Once a process has reached a halting state, it will remain in that state forever, and does not execute any further shared memory steps.
  
For each process $p\in \mathcal{P}$, we define a special abort symbol $p^\top$.
For a set $P\subseteq\mathcal{P}$ let $P^\top=\{p^\top\,|\,p\in P\}$, and $P^\Delta=P\cup P^\top$.
A \emph{configuration} is a sequence that describes the state of each process in $\mathcal{P}$ and each register in $\RR$.
A \emph{schedule} is a sequence $\sigma$ over $\mathcal{P}^\Delta$.
Thus, any schedule $\sigma$ is in $(\mathcal{P}^\Delta)^\ast$.
The length of an schedule $\sigma$ is denoted by $|\sigma|$.
Let $\sigma_1$ and $\sigma_2$ be two schedules. Then $\sigma_1 \circ \sigma_2$ is the schedule obtained by concatenating $\sigma_2$ to the end of $\sigma_1$, without changing the order within $\sigma_1$ and $\sigma_2$.
Let $\Proc(\sigma)$ denote the set of processes $p\in\mathcal{P}$ that occur in $\sigma$ at least once, not counting symbols in $\mathcal{P}^\top$.

A configuration $C$ and a schedule $\sigma\in P^\Delta$ of length one result in a new configuration $Conf(C,\sigma)$, obtained from $C$ by process $p$ taking its next step, if $\sigma=p\in\mathcal{P}$, or by process $p$ receiving the abort signal, if $\sigma=p^\top\in\mathcal{P}^\top$.
If $\sigma=\sigma_1\sigma_2\dots\sigma_k$ is a schedule of length $k>1$, then the new configuration is determined inductively as $\Conf\big(\Conf(C, \sigma_1); \sigma_2\dots\sigma_k \big)$.
Configuration $C$ and schedule $\sigma = \sigma_1\dots\sigma_k$ also define an \emph{execution} $\Exec(C, \sigma)$, which is a sequence $s_1s_2\dots s_k$, where $s_i$ is the step executed or the abort signal received in the transition from $C_{i-1}=\Conf(C, \sigma_1\dots\sigma_{i-1})$ to $C_i = \Conf(C_{i-1}, \sigma_i)$.
To specify that an execution starting in $C$ and running by schedule $\sigma$ is running algorithm $A$, we use $\Exec_A(C, \sigma)$. 
The length of an execution $E$ is denoted by $|E|$.
We call $s_i$ an \emph{abort step by process $p$}, if in $s_i$ process $p$ receives the abort signal.
Let $E_1$ and $E_2$ be two executions. Then $E_1 \circ E_2$ is the execution obtained by concatenating the steps of $E_2$ after the steps of $E_1$, without changing the order of steps within $E_1$ and $E_2$.

The initial configuration is denoted by $\Gamma$.
A configuration $C$ is \emph{reachable}, if there exists a schedule $\sigma$ such that $\Conf(\Gamma, \sigma)=C$. Since only reachable configurations are important in our algorithms and proofs, we use \emph{configuration} instead of \emph{reachable configuration} from this point on. 
For a configuration $C$ we let $\Sched{C}$ denote an arbitrary but unique schedule such that $\Conf(\Gamma, \Sched{C})=C$, and we define $\E{C}=\Exec(\Gamma,\Sched{C})$.

The projection of a schedule $\sigma$ to a set $Q \subseteq\mathcal{P}^\Delta$ is denoted by $\sigma |Q$.
For an execution $E$ and a set $Q$ of processes, $E | Q$ denotes the sub-sequence of $E$ that contains all (abort and shared memory) steps by processes in $Q$.

Recall that a configuration $C$ determines the state of each process.
I.e., for any two executions $E$ and $E'$ resulting in the same configuration $C$, each process is in the same state at the end of $E$ as at the end of $E'$, and in particular $E|p=E'|p$.
Therefore, we associate the state of a process in configuration $C$ with $\E{C}|p$.
(But note that if two executions $E$ and $E'$ are indistinguishable to each process in $Q\subseteq\mathcal{P}$, then this does not in general imply that $E|Q=E|Q'$.)
The value of register $r$ in configuration $C$ is denoted by $\Val_C(r)$.
Configurations $C$ and $D$ are \emph{indistinguishable} to some process $p$, if $\E{C}|p=\E{D}|p$ and $\Val_C(r)=\Val_D(r)$ for every register $r\in\RR$.
For a set $Q\subseteq\mathcal{P}$, we write $C \sim_Q D$ to denote that configurations $C$ and $D$ are indistinguishable to each process in $Q$; for a set consisting of a single process $p$ we write $C \sim_p D$ instead of $C\sim_{\{p\}} D$.

\bigskip
\noindent\textbf{RMR Complexity.}\quad
Our lower bound applies to both, the standard asynchronous distributed shared memory (DSM) model and cache-coherent (CC) model.
In fact, we use a model that combines both, caches as well as locally accessible registers for each process.

We assume that set of registers, $\RR$, is partitioned into disjoint memory segments $\RR_p$, for $p\in\mathcal{P}$.
The registers in $\RR_p$ are \emph{local} to process $p$ and \emph{remote} to each process $q\neq p$.
We say that at the end of execution $E$ a process $p$ has a valid cache copy of register $r$, if in $E$ process $p$ reads or writes $r$ at some point, and no other processes writes $r$ after that. 
Note that the configuration obtained at the end of an execution starting in $\Gamma$ uniquely determines whether $p$ has a valid cache copy of a register $r$.
The reason is that the state of $p$ in configuration $C$ determines the value that was written to or read from $r$ when $p$ accessed $r$ last, and $p$ has a valid cache copy of $r$ if and only if $val_C(r)$ equals that value. 
Let $\Cache_p(C)$ denote the union of $\RR_p$ and the set of registers of which process $p$ has a valid cache copy in configuration $C$ if $p$ has not terminated in $C$, and the empty set if $p$ is terminated in $C$.

A step in an execution $E$ is either \emph{local} or \emph{remote} (we say it \emph{incurs an RMR} if it is remote).
All abort steps are local.
A non-abort step by process $p$ is local, if and only if it is either a read or a write of a register in $\RR_p$, or it is a read of a register of which $p$ has a local cache copy. 

For an execution $E$ and a process $p$, $\RMR_p(E)$ is the number of RMR steps by process $p$ in execution $E$. Further, $\RMR(E)$ is the number of RMR steps incurred by all processes in execution $E$.
For $Q \subseteq \mathcal{P}$ we define $\RMR_Q(E) = \sum_{q \in Q} \RMR_q(E)$, which is equal to the total number of RMRs incurred by processes in $Q$ in $E$. For the sake of conciseness, we use $\RMR(E)$ instead of $\RMR_{\mathcal{P}}(E)$.

\bigskip
\noindent\textbf{Abortable Leader Election.}\quad
An algorithm solves \emph{abortable leader election}, if for any schedule $\sigma$, in $\Exec(\Gamma, \sigma)$ each process that terminates returns \emph{win} or \emph{lose}, at most one process returns \emph{win}, and if all processes in $\Proc(\sigma)$ return \emph{lose}, then all processes in $\Proc(\sigma)$ receive the abort signal.

We usually assume without explicitly saying so that an abortable leader election satisfies \emph{deadlock-freedom} and \emph{bounded abort}, defined as follows:
Bounded abort means that after a process received the abort signal it terminates within a finite number of its own steps.
An infinite execution $\sigma$ is \emph{$P$-fair} for $P \subseteq \mathcal{P}$, if each process appears infinitely many times in $\sigma$.
An infinite execution $E$ is \emph{$P$-fair} for $P\subseteq\mathcal{P}$, if for some configuration $C$ and a $P$-fair schedule $\sigma$, it holds $E = \Exec(C, \sigma)$.
We use \emph{fair} schedule and \emph{fair} execution, instead of $P$-fair, when $P = \mathcal{P}$.
An algorithm is \emph{deadlock-free} if for any schedule $\sigma$ all processes terminate in $Exec(\Gamma,\sigma)$, provided this execution is fair. 

\subsection{Properties of Abortable Leader Election}\label{sec:abortable_LE_properties}
In this section we derive the critical property that distinguishes non-abortable from abortable leader election for the purpose of the lower bound.
We consider algorithms in which each process returns either $win$ or $lose$ upon termination.
We call such algorithms \emph{binary}.
Note that any (abortable) leader election algorithm is a binary algorithm.

Several results in this section will concern only two arbitrarily selected processes in the $n$-process system for $n\geq 2$.
For ease of notation, we will call these processes $a$ and $b$.

For an execution $E$ of a binary algorithm in which $a$ returns $x$ and $b$ returns $y$, let $(x,y)$ denote the \emph{outcome vector} of $E$.
For a binary algorithm $A$ and a configuration $C$, let $\VV_A(C)$ denote the set of all outcome vectors of $\{a, b\}$-only executions starting in $C$, in which processes $a$ and $b$ terminate.

First we observe that the outcome vectors of two indistinguishable configurations are equal.
\begin{observation}
  \label{claim:indSameOutcome}
  For any binary algorithm $A$, if configurations $C$ and $D$ are indistinguishable to processes $a$ and $b$, then $\VV_A(C)=\VV_A(D)$.
\end{observation}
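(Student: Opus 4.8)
The plan is to show the two inclusions $\VV_A(C)\subseteq\VV_A(D)$ and $\VV_A(D)\subseteq\VV_A(C)$; since the indistinguishability relation $C\sim_{\{a,b\}}D$ is symmetric, it suffices to prove the first inclusion and then invoke symmetry. So fix an outcome vector $(x,y)\in\VV_A(C)$. By definition of $\VV_A(C)$ there is an $\{a,b\}$-only schedule $\sigma$ (a sequence over $\{a,b\}^\Delta$) such that in $\Exec_A(C,\sigma)$ both $a$ and $b$ terminate, with $a$ returning $x$ and $b$ returning $y$. I would prove that running the \emph{same} schedule $\sigma$ from $D$ produces an execution $\Exec_A(D,\sigma)$ in which $a$ and $b$ again return $x$ and $y$, which gives $(x,y)\in\VV_A(D)$.

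The core is an induction over the prefixes of $\sigma$, maintaining the invariant that the configuration reached from $C$ and the configuration reached from $D$ after a common prefix agree on (i) the states of $a$ and $b$ and (ii) the value of every register in $\RR$. The base case is precisely the hypothesis $C\sim_{\{a,b\}}D$: the states of $a$ and $b$ coincide because $\E{C}|a=\E{D}|a$ and $\E{C}|b=\E{D}|b$, and every register value coincides by the second clause of indistinguishability. For the inductive step, the next symbol of $\sigma$ lies in $\{a,b\}^\Delta$, so the acting process is $a$ or $b$, whose state is identical in the two runs by the inductive hypothesis. As processes are deterministic state machines, the resulting state transition is identical in both runs; and if the symbol triggers a shared-memory step, the accessed register is determined by the (common) state, a read returns the same value because all register values agree, and a write stores the same value---so part (ii) of the invariant is preserved, and part (i) follows from the identical transitions. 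When both $a$ and $b$ have reached halting states along $\sigma$ (which happens after some finite prefix, since they terminate in $\Exec_A(C,\sigma)$), the invariant forces them to be in the very same halting states in $\Exec_A(D,\sigma)$, hence to return $x$ and $y$ there as well.

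I expect the only delicate points---none of which are real obstacles---to be two bookkeeping observations. First, although $C$ and $D$ may differ on the states of processes outside $\{a,b\}$, this is irrelevant: an $\{a,b\}$-only schedule never schedules any such process, so only the states of $a$ and $b$ and the register contents ever influence the execution, and these are controlled by the invariant. Second, one must check that every symbol of $\sigma$ that is legal from $C$ is also legal from $D$; the only legality constraint is that an abort symbol $a^\top$ (resp.\ $b^\top$) requires that $a$ (resp.\ $b$) has not already received the abort signal, but this is part of the process's state and is therefore preserved by clause (i) of the invariant. With these remarks the inclusion $\VV_A(C)\subseteq\VV_A(D)$ follows, and symmetry yields equality.
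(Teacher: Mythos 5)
Your proposal is correct and follows essentially the same route as the paper: both arguments induct over a common $\{a,b\}$-only schedule, maintaining that the two resulting configurations remain indistinguishable to $a$ and $b$ (same states and same register values), and then conclude that termination and return values coincide. The paper phrases this as directly establishing $\Conf(C,\sigma)\sim_{\{a,b\}}\Conf(D,\sigma)$ for every $\{a,b\}$-only $\sigma$ rather than splitting into two inclusions by symmetry, but the substance is identical.
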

\begin{proof}
  Since $C$ and $D$ are indistinguishable to processes $a$ and $b$, $\E{C} | a = \E{D} | a$, $\E{C} | b = \E{D} | b$, and for any register $r$, $\Val_C(r) = \Val_D(r)$. Thus, for any $x$ in $\{a, b\}^\Delta$, we have $\big(\E{C} \circ \Exec(C, x)\big) | a = \big(\E{D} \circ \Exec(D, x)\big) | a$, $\big(\E{C} \circ \Exec(C, x)\big) | b = \big(\E{D} \circ \Exec(D, x)\big) | b$, and for any register $r$, $\Val_{\Conf(C, x)}(r) = \Val_{\Conf(D, x)}(r)$. So by induction, for any $\{a, b\}$-only schedule $\sigma$, $\Conf(C, \sigma) \stackrel{}{\sim_{\{a, b\}}} Conf(D, \sigma)$.
  Therefore, if in $\Exec(C, \sigma)$ process $p \in \{a, b\}$ terminates, it also terminates in $\Exec(D, \sigma)$ and it returns the same value in both executions.
  Hence, the outcome vector $\VV_A(C)$ is equal to $\VV_A(D)$.
\end{proof}

For a binary algorithm $A$, configuration $C$ is \emph{bivalent} if $\big\{(win, lose), (lose, win)\big\}=\VV_A(C)$.
This definition of bivalency refers to two fixed but arbitrarily chosen processes, $a$ and $b$. 
In a system with more than two processes, we may write $\{a,b\}$-bivalent to indicate the two processes $a$ and $b$ to which this definition applies.
A configuration is \emph{strongly bivalent} (or strongly $\{a,b\}$-bivalent) if it is bivalent and a solo-run by any process $p \in \{a, b\}$, starting in $C$, results in $p$ winning.

A similar argument to the FLP Theorem \cite{FLP1985a} implies that for any deadlock-free binary algorithm and for any reachable bivalent configuration, there exists an infinite execution, where no process terminates.
\begin{lemma}
  \label{lem:infiniteExec}
  Let $A$ be a deadlock-free binary algorithm and $C$ an $\{a,b\}$-bivalent configuration.
  There exists an infinite schedule $\sigma\in \{a,b\}^\ast$, such that in $\Exec_A(C, \sigma)$ none of $a$ and $b$ terminate.
\end{lemma}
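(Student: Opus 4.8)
The plan is to mimic the FLP impossibility proof, maintaining $\{a,b\}$-bivalency as an invariant and growing the infinite schedule one step at a time. The first ingredient is a structural fact: in an $\{a,b\}$-bivalent configuration neither $a$ nor $b$ has terminated. Indeed, if $a$ had already returned a value $v$, then in every $\{a,b\}$-only continuation $a$ would still return $v$, so every outcome vector would have first component $v$, and $\VV_A$ could not contain both $(win,lose)$ and $(lose,win)$; the symmetric argument handles $b$. Hence in a bivalent configuration both processes are poised to take a shared-memory step, so a successor always exists and the schedule I build can always be extended.

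The core is then a one-step claim: from every $\{a,b\}$-bivalent configuration $C$ there is a non-abort step $s\in\{a,b\}$ such that $\Conf(C,s)$ is again bivalent. Granting the claim, the lemma follows by induction: starting from $C$ I repeatedly apply it to obtain $\sigma=s_1s_2\cdots\in\{a,b\}^\ast$ along which every prefix configuration is bivalent, and by the structural fact neither $a$ nor $b$ ever terminates in $\Exec_A(C,\sigma)$. The monotonicity observation driving the claim is that prefixing any $\{a,b\}$-only continuation of $\Conf(C,s)$ by the step $s$ makes it a continuation of $C$, so $\VV_A(\Conf(C,s))\subseteq\VV_A(C)=\{(win,lose),(lose,win)\}$. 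Thus each non-abort successor is either bivalent or \emph{univalent} (its outcome set is a single vector, or empty, the latter excluded using deadlock-freedom).

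To prove the claim I argue by contradiction, assuming no non-abort successor of $C$ is bivalent, and run the usual valency/commutativity analysis built on \cref{claim:indSameOutcome}. Let $e_a$ and $e_b$ be the pending operations of $a$ and $b$ in $C$. If $e_a$ and $e_b$ do not conflict (they touch different registers, or both are reads), then the two steps commute, so $\Conf(C,a b)\sim_{\{a,b\}}\Conf(C,b a)$, and \cref{claim:indSameOutcome} gives $\VV_A(\Conf(C,ab))=\VV_A(\Conf(C,ba))$; combined with $\VV_A(\Conf(C,ab))\subseteq\VV_A(\Conf(C,a))$ and $\VV_A(\Conf(C,ba))\subseteq\VV_A(\Conf(C,b))$, this forces $\Conf(C,a)$ and $\Conf(C,b)$ to share a single univalent value, which I show makes one of $(win,lose)$, $(lose,win)$ unreachable from $C$, contradicting bivalency. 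The remaining case is the \emph{critical} one, where $e_a$ and $e_b$ conflict (same register, at least one a write); say $e_a$ is a write. Then $a$'s step overwrites, or renders invisible, $b$'s pending operation, so $\Conf(C,a)$ and $\Conf(C,b a)$ agree on all registers and on the state of every process other than $b$. I exploit this partial indistinguishability, together with \cref{claim:indSameOutcome} applied to a continuation chosen so that both $a$ and $b$ terminate, to conclude that $\Conf(C,a)$ and $\Conf(C,ba)$ have the same valence, again contradicting that $C$ is bivalent.

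I expect the critical conflicting-write case to be the main obstacle, precisely because an \emph{outcome vector} requires that \emph{both} $a$ and $b$ terminate, whereas the overwrite only yields indistinguishability to everyone except $b$; the delicate part is choosing the continuation that re-includes $b$ without breaking the indistinguishability needed to invoke \cref{claim:indSameOutcome}. A secondary point needing care is the treatment of abort symbols inside $\VV_A$: a priori $(win,lose)$ or $(lose,win)$ could be reachable only through continuations that begin with $a^\top$ or $b^\top$. However, an abort step is local and commutes with the other process's shared-memory step, and each process aborts at most once, so aborts cannot by themselves sustain bivalency indefinitely; this lets me always locate a genuine non-abort preserving step and keep $\sigma$ inside $\{a,b\}^\ast$ as the statement demands.
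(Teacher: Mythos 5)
Your overall architecture matches the paper's (an FLP-style induction that maintains $\{a,b\}$-bivalency, plus the observation that neither process has terminated in a bivalent configuration), but your one-step claim is stronger than what can be proved, and the critical case fails exactly where you suspect it might. You claim that from every bivalent $C$ some non-abort step leads to a bivalent configuration. In the conflicting case (say $a$ is poised to write register $r$ and $b$ to access $r$), the only indistinguishability available is $\Conf(C,a)\sim_a\Conf(C,b\circ a)$ --- indistinguishability to $a$ alone. To convert this into information about outcome \emph{vectors} you must run $a$ solo from $\Conf(C,a)$ until it terminates before letting $b$ back in; but deadlock-freedom guarantees termination only under fair schedules, so $a$ may never terminate in its solo run, and then the continuation that ``re-includes $b$ without breaking the indistinguishability'' simply does not exist. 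This is precisely why the paper's \Cref{claim:FLP} is stated as a disjunction: either a successor of $C$ is bivalent, \emph{or} there is an infinite $\{a,b\}$-only execution in which neither process terminates (namely $a$'s non-terminating solo run after its write). Both disjuncts feed the lemma's conclusion, so the repair is to weaken your claim to this disjunction and let your induction halt whenever the second alternative fires --- not to hunt for the missing continuation. As written, your one-step claim is not established, and your induction cannot get past a configuration of this kind.

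A secondary, lesser point: your dismissal of abort-first continuations is not a proof. Knowing that each process aborts at most once does not by itself produce a non-abort bivalency-preserving step \emph{at the current configuration}; a priori $(lose,win)$ could be reachable from $C$ only via schedules beginning with $a^\top$ or $b^\top$, in which case $\VV_A(\Conf(C,a))$ and $\VV_A(\Conf(C,b))$ could be equal singletons and the opposite-singleton setup \cref{eq:univalency} underlying both your case analysis and the paper's would not apply. The paper asserts that setup without addressing this either, so you are in the same boat, but you should not present that bookkeeping as settled.
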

To prove this lemma we first prove \cref{claim:FLP} and use the fact that none of $a$ and $b$ can be terminated in an $\{a, b\}$-bivalent configuration.

\begin{claim}
  \label{claim:FLP}
  In any deadlock-free binary algorithm $A$, if configuration $C$ is $\{a, b\}$-bivalent, then either one of $\Conf(C, a)$ and $\Conf(C, b)$ is $\{a, b\}$-bivalent, or there exists an infinite $\{a, b\}$-only execution, where none of $a$ and $b$ terminates. 
\end{claim}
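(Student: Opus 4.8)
The plan is to establish the stated disjunction by assuming that neither $\Conf(C,a)$ nor $\Conf(C,b)$ is $\{a,b\}$-bivalent, and then constructing an infinite $\{a,b\}$-only execution -- in fact a solo run of a single process -- in which neither $a$ nor $b$ terminates. Prepending the first step shows $\VV_A(\Conf(C,a)),\VV_A(\Conf(C,b))\subseteq\VV_A(C)=\{(win,lose),(lose,win)\}$, and deadlock-freedom ensures the fair $\{a,b\}$-only run from each successor terminates, so each outcome set is nonempty; hence each successor is \emph{univalent}. I call a configuration $a$-valent if its outcome set is $\{(win,lose)\}$ and $b$-valent if it is $\{(lose,win)\}$, so under my assumption $\Conf(C,a)$ and $\Conf(C,b)$ are each $a$-valent or $b$-valent.

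The engine of the argument is a principle about silent aborts. Fix a configuration $D$ and let $\{p,q\}=\{a,b\}$. Since an abort signal triggers no shared-memory access and changes only the aborting process's state, $D\sim_p\Conf(D,q^\top)$, so the solo run of $p$ is identical from $D$ and from $\Conf(D,q^\top)$. Suppose $q$ has not yet won in $D$. If the solo run of $p$ from $D$ terminated with $lose$, then so would the solo run of $p$ from $\Conf(D,q^\top)$; scheduling $q$ afterwards, $q$ -- having aborted without having won -- also returns $lose$. Then both participating processes return $lose$ while only $q$ received the abort signal, contradicting the leader-election specification. Hence, whenever $q$ has not won in $D$, the solo run of $p$ from $D$ cannot return $lose$: it either wins or runs forever.

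Suppose first the two successors share a valence, say both are $a$-valent (the $b$-valent case is symmetric). Then $(lose,win)\notin\VV_A(\Conf(C,a))\cup\VV_A(\Conf(C,b))$, so every witness of $(lose,win)$ from $C$ begins with an abort (a non-abort first step would enter an $a$-valent successor and yield $(win,lose)$). It cannot begin with $b^\top$ (a process aborting before taking a step loses), hence it begins with $a^\top$, which shows $a$ has not won in $C$. Applying the principle at $D=C$ with $p=b$, $q=a$: the solo run of $b$ cannot return $lose$, and it cannot return $win$ either, since that would put $(lose,win)$ into $\VV_A(\Conf(C,b))$, contradicting $a$-valence. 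Thus $b$ runs forever from $C$, and the schedule $b\,b\,b\cdots$ is the desired execution. This is precisely where abortability is indispensable: the silent abort renders $C$ and $\Conf(C,a^\top)$ indistinguishable to $b$, so the deferring process can never safely decide $lose$.

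Finally suppose the valences are opposite, say $\Conf(C,a)$ is $a$-valent and $\Conf(C,b)$ is $b$-valent. If the first steps of $a$ and $b$ access different registers or are both reads they commute, so $\Conf(C,ab)=\Conf(C,ba)=:D'$; being reachable from both successors, $D'$ has outcome set $\subseteq\{(win,lose)\}\cap\{(lose,win)\}=\emptyset$. In $D'$ neither process has terminated -- a one-step termination would, by commutativity, make the opposite successor bivalent -- and the solo run of $a$ from $D'$ cannot terminate either, since running $b$ to completion afterwards would yield an outcome in the empty set $\VV_A(D')$; this solo run is therefore the required infinite execution. Otherwise the first steps conflict on a common register with a write, which I expect to be the main obstacle, since the FLP appeal to a third process is unavailable with only two. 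Let $q$ be a process whose first step writes that register and $p$ the other. Because $q$'s write is unaffected by $p$'s step, $\Conf(C,q)\sim_q\Conf(C,pq)$; as $\Conf(C,pq)$ is reachable from the $p$-valent $\Conf(C,p)$, process $q$ loses there, so it cannot already have won in $\Conf(C,q)$. The principle then applies at $D=\Conf(C,q)$: the solo run of $p$ returns neither $lose$ (by the principle) nor $win$ (by $q$-valence), hence runs forever, and prefixing the single step $q$ yields the required infinite $\{a,b\}$-only execution in which neither process terminates.
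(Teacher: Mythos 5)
There is a genuine gap, and it sits at the heart of your argument: the ``silent abort'' principle. From ``$q$ has not yet won in $D$'' and ``$q$ has received the abort signal'' you conclude that $q$ must return $lose$. Nothing licenses this: receiving the abort signal only obliges a process to terminate within finitely many of its own steps; it may still go on to return $win$. So in the scenario where $p$'s solo run from $\Conf(D,q^\top)$ ends with $lose$ and $q$ is then scheduled, the outcome may well be ($p$ loses, $q$ wins), which contradicts nothing. The same unsupported inference reappears as ``a process aborting before taking a step loses.'' Moreover, the claim is stated for arbitrary deadlock-free \emph{binary} algorithms---and via Lemma~\ref{lem:infiniteExec} it is applied to the modified algorithm $A'$ of Lemma~\ref{lemma:oneWaitFree}, which is not an abortable leader-election algorithm---so an appeal to the leader-election safety condition is out of scope even where it would help.

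The failure is not merely cosmetic: in the read/write-conflict case the statement you are trying to prove is false for general binary algorithms. With $\Conf(C,p)$ $p$-valent, $\Conf(C,q)$ $q$-valent, and $q$ the writer, it is perfectly consistent for $p$ to read $q$'s write from $\Conf(C,q)$ and promptly return $lose$ (this is exactly how an inexpensive non-abortable leader election behaves); the infinite execution guaranteed by the claim lives elsewhere. The paper instead examines the solo run of the \emph{writer} $q$ from its own successor $\Conf(C,q)$: if that run terminated, $q$ would have to win there, but $\Conf(C,q)\sim_q\Conf(C,p\circ q)$ forces the identical solo run to make $q$ win inside the $p$-valent cone of $\Conf(C,p)$---a contradiction---so $q\,q\,q\cdots$ is the required infinite execution. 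Your commuting sub-case is correct and matches the paper's Case~1, and you are right that the equal-valence case deserves an argument (the paper dispatches it by simply asserting the two successors have opposite valences, even though a witness for the missing outcome could begin with an abort step); but your treatment of that case rests on the same invalid principle, so it remains unproven in your write-up as well.
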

\begin{proof}
  Since configuration $C$ is $\{a, b\}$-bivalent, $\VV_A(C)=\big\{(win, lose), (lose, win)\big\}$.
  Suppose neither $\Conf(C, a)$ nor $\Conf(C, b)$ is $\{a, b\}$-bivalent.
  Then there exist distinct $x, y \in \{win, lose\}$ such that
  \begin{align}
  \label{eq:univalency}
  & \VV_A\big(\Conf(C, a)\big)=\{(x, y)\} \text{ , and} \\
  & \VV_A\big(\Conf(C, b)\big)=\{(y, x)\} \notag
  \end{align}
  We now distinguish two cases.
  
  \emph{Case 1:}\quad In $C$, processes $a$ and $b$ are poised to access different registers or poised to read the same register.
  Thus,
  \begin{equation}
  \label{eq:abOrderDoesntMatter}
  \Conf(C, a \circ b) = \Conf(C, b \circ a).
  \end{equation}
  By \cref{eq:univalency}, $(y, x) \notin \VV_A\big(\Conf(C, a)\big)$. 
  Since $\VV_A\big(\Conf(C, a \circ b)\big)\subseteq \VV_A\big(\Conf(C, a)\big)$), it holds $(y, x) \notin \VV_A\big(\Conf(C, a \circ b)\big)$.
  Thus, by \cref{eq:abOrderDoesntMatter}, $(y, x) \notin \VV_A\big(\Conf(C, b \circ a)\big)$.
  Since $\VV_A\big(\Conf(C, b \circ a)\big)\subseteq \VV_A\big(\Conf(C,  b)\big)=\{(y, x)\}$, this means that $\VV_A\big(\Conf(C, b \circ a)\big)=\emptyset$.
  But this contradicts deadlock-freedom, as in a fair schedule starting in $\Conf(C, b \circ a)$ both processes must terminate and output something.

  \emph{Case 2:}\quad In configuration $C$, both processes are poised to access the same register $r$, and at least one of them is poised to write $r$.
  Without loss of generality, assume that $a$ is poised to write register $r$. If $a$ takes its write step after $b$'s step, then $a$'s state and shared register values are no different than if only $a$ takes its write step and $b$ does not take its step. So $\Conf(C, a) \stackrel{}{\sim_a} \Conf(C, b \circ a)$.
  If process $a$ does not terminate in a solo-run starting in $\Conf(C, a)$, then the claim is true, because there exists an infinite execution starting in $C$ that neither $a$ nor $b$ terminates.
  However, if process $a$ terminates in a solo-run starting in $\Conf(C, a)$, by \cref{eq:univalency}, we can conclude that $(x, y) \in \VV_A\big(\Conf(C, b \circ a)\big)$.
  Since $\VV_A\big(\Conf(C, b \circ a)\big) \subseteq \VV_A\big(\Conf(C, b)\big)$, it holds that $(x, y) \in \VV_A\big(\Conf(C, b)\big)$.
  This contradicts $\VV_A\big(\Conf(C, b)\big) = \{(y, x)\}$.
\end{proof}

Any deadlock-free (non-abortable) 2-process leader election algorithm has a bivalent initial configuration.
But in any fair schedule, both processes terminate.
Therefore, the infinite execution that is guaranteed by the above corollary cannot be fair; in particular, it requires one of the two processes to run solo at some point.
However, one can construct a deadlock-free (non-abortable) leader election algorithm in which one process never takes an infinite number of steps, no matter what the schedule is.
The lemma below shows that this is not true for abortable two-process leader election algorithm.

\begin{lemma}
  \label{lemma:oneWaitFree}
  Let $A$ be a deadlock-free abortable 2-process leader election algorithm $A$ with bounded aborts.
  For any process $p$, there exists an execution starting in the initial configuration, in which $p$ takes an unbounded number of steps.
\end{lemma}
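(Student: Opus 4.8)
The plan is to prove the claim for $p=a$; the argument for $b$ is symmetric.

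My first observation is that $\Gamma$ is $\{a,b\}$-bivalent (and strongly so, once we dispose of non-terminating solo runs, which are themselves infinite executions of the required kind). Since $\VV_A$ quantifies only over abort-free $\{a,b\}$-executions, leader election forbids both processes from returning $lose$ in such an execution, so $(lose,lose)\notin\VV_A(\Gamma)$, while running the two processes one after the other shows $(win,lose),(lose,win)\in\VV_A(\Gamma)$. The same correctness argument gives the fact I will use repeatedly: if $C$ is reached from $\Gamma$ by a schedule in which $b$ has received the abort signal but $a$ has not, then no abort-free $\{a,b\}$-continuation can make both processes return $lose$ (that would be an execution from $\Gamma$ in which all participants lose yet $a$ never aborted). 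Hence for such a $C$ we have $(lose,lose)\notin\VV_A(C)$, so $C$ is bivalent iff $a$ can still both win and lose from $C$, and moreover every abort-free continuation in which $a$ loses must make $b$ win.

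The reduction I aim for is: it suffices to exhibit a reachable, $\{a,b\}$-bivalent configuration $C$ in which $b$ has already received the abort signal. Given such a $C$, \cref{lem:infiniteExec} produces an infinite abort-free schedule $\sigma\in\{a,b\}^\ast$ for which $\Exec_A(C,\sigma)$ terminates neither $a$ nor $b$; since $b$ aborted before $C$, bounded abort caps the number of steps $b$ may take before it is forced to terminate, so $b$ appears only finitely often in $\sigma$ and therefore $a$ takes infinitely many steps, as required.

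To build $C$ I would track how $a$'s fate depends on $b$'s progress. Fix an abort-free $\{a,b\}$-execution $E^{\ast}$ from $\Gamma$ that terminates with $b$ winning and $a$ losing (one exists because $(lose,win)\in\VV_A(\Gamma)$), and let $\Gamma=C_0,\dots,C_M$ be its configurations. For each $C_i$ let $f(i)\in\{win,lose\}$ be the value $a$ returns in its abort-free solo run started from $C_i$ with $b$ frozen; if any such solo run fails to terminate we are already done. Then $f(0)=win$ and $f(M)=lose$, and since $a$'s own solo run from $C_{i-1}$ begins with the same step it would take at $C_i$, an $a$-step cannot change $a$'s solo outcome. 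Thus the first index $k$ with $f(k)=lose$ satisfies $f(k-1)=win$, and the step $C_{k-1}\to C_k$ is necessarily a \emph{write} by $b$ of a register $r$ whose value $a$ reads. Running $a$'s solo run from $C_{k-1}$ up to just before the read of $r$ that distinguishes the two outcomes yields a configuration $Y$ at which $b$ is poised to write $r$, $a$ is poised to read it, and $r$ still holds its old value. Aborting $b$ before this write, $Z'=\Conf(Y,b^{\top})$, leaves $a$ reading the old value and completing to $win$; performing the write and then aborting, $Z=\Conf(Y,b\circ b^{\top})$, leaves $a$ reading the new value and completing to $lose$ (the abort changes neither the registers nor $a$'s state, so $a$'s solo continuation is unaffected), whence by the correctness fact $b$ can still win from $Z$.

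It remains to combine these into a single bivalent, $b$-aborted configuration, and this is the step I expect to be the main obstacle. We already have that $a$ can lose from $Z$; by the correctness fact, $a$ can win from $Z$ iff $b$ can lose from $Z$, and symmetrically $Z'$ is bivalent iff $b$ can win from $Z'$. The difficulty is precisely that the abort signal may arrive just before or just after $b$'s decisive write to $r$, and the algorithm may try to use that single write to commit $b$'s victory irrevocably; one must rule out that $Z'$ is univalent with $a$ winning while $Z$ is univalent with $a$ losing. This is where abortability is essential and where I would do the real work: the two orderings ``write-then-abort'' and ``abort-then-write'' agree on the value of $r$ and leave $a$ in the same state, so by \cref{claim:indSameOutcome} they induce the same solo continuation of $a$, while driving the aborted $b$ to a terminating $lose$ forces $a$ to win (again by the correctness fact). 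Playing these against each other at the critical write exhibits one $b$-aborted configuration from which $a$ can both win and lose, and \cref{lem:infiniteExec} then finishes the proof. Intuitively this is the phenomenon highlighted in the introduction: because $b$ may abort even after steps that look decisive, the algorithm cannot let $a$ commit to $lose$ on partial information without risking an illegal all-$lose$ outcome, which is exactly what prevents $a$ from being wait-free.
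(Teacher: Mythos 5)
Your overall route is genuinely different from the paper's. The paper argues by contradiction: assuming $a$ always terminates in a bounded number of its own steps, it builds a modified algorithm $A'$ in which aborts are ignored and $b$ treats its first read of a value written by $a$ as an abort signal, shows $A'$ is a deadlock-free binary algorithm that is wait-free yet has a bivalent initial configuration, and contradicts \cref{lem:infiniteExec}. You instead try to directly exhibit a reachable $\{a,b\}$-bivalent configuration in which $b$ has already aborted and then let bounded abort squeeze $b$ out of the infinite schedule that \cref{lem:infiniteExec} provides. That reduction step is sound in spirit and is an attractive, more ``constructive'' framing. However, the proposal has a genuine gap at exactly the point you flag as ``the main obstacle,'' and the sketch you offer for closing it does not work.

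Concretely: after locating the critical write you obtain $Z'=\Conf(Y,b^{\top})$, from which $a$'s solo run wins, and $Z=\Conf(Y,b\circ b^{\top})$, from which $a$'s solo run loses. You then need one $b$-aborted configuration from which $a$ can \emph{both} win and lose, and you propose to get it by arguing that ``write-then-abort'' and ``abort-then-write'' agree on the value of $r$ and on $a$'s state. They do not: receiving the abort signal changes $b$'s state, so $b$'s next step after $b^{\top}$ need not be the write to $r$ at all, and $Z$ and $Z'$ differ precisely in the content of $r$ --- the register $a$ is poised to read --- so \cref{claim:indSameOutcome} gives you nothing here. Ruling out that $Z'$ is univalent with $a$ winning while $Z$ is univalent with $a$ losing is the entire difficulty (an algorithm could in principle let $b$'s single write to $r$ irrevocably commit $b$ to winning, with $b$'s post-abort code finishing that commitment either way), and no argument is supplied. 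A secondary issue: as the paper uses it (see the proofs of \cref{lemma:bothLose} and \cref{lem:bothLose}), $\VV_A(C)$ ranges over schedules in $(\{a,b\}^{\Delta})^{\ast}$, i.e.\ including abort steps, so for your target configuration $C$ the outcome $(lose,lose)$ may well lie in $\VV_A(C)$ (abort $a$ in the continuation), in which case $C$ is not bivalent in the sense required by \cref{lem:infiniteExec}; you would need to restate and reprove that lemma for the abort-free outcome set before your reduction applies. Both points would have to be repaired before this counts as a proof.
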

\begin{proof}
  Let $\Gamma$ be the initial configuration of $A$.
  For the purpose of contradiction, assume there is a fixed process, $a$, that terminates within a finite number of its own steps in all executions.
  Let $b$ be the other process.
  
  By the safety property of abortable leader election, there is no execution in which both processes win, i.e., 
  \begin{equation}
  \label{eq:bothCantWin}
  (win, win) \notin \VV_A(\Gamma).
  \end{equation}
  
  Let algorithm $A'$ be the same as $A$ except that during any execution,
  \begin{enumerate}[(1)]
    \item if any of the two processes receive the abort signal, the abort signal is ignored; and
    \item if in step $s$ process $b$ reads $(a, x)$, where $x \neq \bot$, then $b$ continues its program, as if it had received the abort signal immediately after step $s$.
  \end{enumerate}
  
  In any execution of $A'$, $a$ and $b$ can only both lose, if they both receive the abort signal. Since both ignore the abort signals (and only $b$ possibly simulates having received an abort signal), there is no execution of $A'$ in which $a$ and $b$ both lose.
  Thus, for the initial configuration $\Gamma'$ of $A'$,
  \begin{equation}
  \label{eq:loseloseNotInCAprime}
  (lose, lose) \notin \VV_{A'}(\Gamma').
  \end{equation}
  
  Consider any execution $E'=\Exec(\Gamma',\sigma')$ of algorithm $A'$ starting in $\Gamma'$.
  We now create an execution $E=\Exec(\Gamma,\sigma)$ of $A$ starting in $\Gamma$, by scheduling the processes in exactly the same order as in $E'$, but removing all abort signals.
  Moreover, when for the first time $b$ reads a value of $(a, x)$ in $E$, where $x \neq \bot$ (if that happens), then we send process $b$ the abort signal. 
  By construction of $A'$, processes $a$ and $b$ execute exactly the same shared memory steps in execution $E$ of algorithm $A$ as in execution $E'$ of algorithm $A'$.
  Thus, for every schedule $\sigma'$ there is a schedule $\sigma$ such that processes $a$ and $b$ execute in $\Exec_{A'}(\Gamma',\sigma')$ the same shared memory steps as in $\Exec_A(\Gamma,\sigma)$.
  This implies
  \begin{equation}
  \label{eq:AprimeSubsetA}
  \VV_{A'}(\Gamma') \subseteq \VV_A(\Gamma).
  \end{equation}
  Note that in the construction above, if $\sigma'$ is fair, then so is $\sigma$.
  Hence, the fact that $A$ is deadlock-free implies
  \begin{equation}\label{eq:bothCantWint:deadlock-free}
  \text{$A'$ is deadlock-free}.
  \end{equation}
  
  In algorithm $A$, in a sufficiently long solo-run by $a$, in which $a$ does not receive the abort-signal, process $a$ terminates (by deadlock-freedom) and returns $win$ (by the safety property of abortable leader election).
  Hence, in $A'$ process $a$ also terminates and returns $win$ after a sufficiently long solo-run, because it takes exactly the same steps as in $A$.
  Since $A'$ is deadlock-free by \cref{eq:bothCantWint:deadlock-free}, process $b$ terminates after a sufficiently long solo-run following $a$'s solo-run, and by \cref{eq:bothCantWin} process $b$ returns $lose$.
  With a symmetric argument, for algorithm $A'$, in a sufficiently long solo-run by $b$ followed by a sufficiently long solo-run of $a$, process $b$ returns $win$ and process $a$ returns $lose$.
  Hence, $\{(win,lose),(lose,win)\}\subseteq\VV_{A'}(\Gamma')$.
  Using \cref{eq:bothCantWin} and \cref{eq:loseloseNotInCAprime} we conclude
  \begin{equation} 
  \label{eq:CAprimeBivalent}
  \VV_{A'}(\Gamma')=\big\{(win, lose), (lose, win)\big\}.
  \end{equation}
  
  We will now show that $A'$ is wait-free. 
  This together with \cref{eq:CAprimeBivalent} contradicts Lemma~\ref{lem:infiniteExec}, and thus proves the lemma.

  Recall that in every execution of algorithm $A$ process $a$ terminates within a finite number of its own steps.
  As a result, the same is true for $A'$.
  
  Hence, it suffices to show that $b$ terminates within a finite number of its own steps.
  Suppose there is an execution $E^\ast$ of $A'$ in which $b$ executes an infinite number of steps.
  Then $b$ never reads a value of $(a, x)$, where $x \neq \bot$, as otherwise it would simulate having received the abort-signal in $A$, and then terminate after a finite number of steps.
  Since $b$ never reads a value of $(a, x)$, where $x \neq \bot$, it cannot distinguish $E^\ast$ from a solo-run starting in $\Gamma'$.
  Hence, $b$ does not terminate in such an infinite solo-run.
  This contradicts \cref{eq:bothCantWint:deadlock-free}.
\end{proof}

One of the core properties of the abortable leader election problem that allows us to prove the lower bound is that there are no reachable strongly bi-valent configurations in any execution.

\begin{lemma}
  \label{lemma:bothLose}
  Let $A$ be an abortable $n$-process leader election algorithm with bounded aborts for $n\geq 2$.
  Further, let $C$ be a reachable configuration and $a,b$ two distinct processes that terminate in any $\{a, b\}$-fair execution starting in $C$.
  For any schedule $\sigma\in\mathcal{P}^\ast$ configuration $C=\Conf(\Gamma,\sigma)$ is not strongly $\{a, b\}$-bivalent.
\end{lemma}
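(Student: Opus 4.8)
The plan is to argue by contradiction: assume $C$ is strongly $\{a,b\}$-bivalent, so that $\VV_A(C)=\{(win,lose),(lose,win)\}$ — in particular $(lose,lose)\notin\VV_A(C)$ — and a solo run of either process from $C$ wins. Since by the safety condition any execution in which both $a$ and $b$ return $lose$ forces both of them to have received the abort signal, my goal is to construct an $\{a,b\}$-only execution starting in $C$ in which both processes abort and both return $lose$; this shows $(lose,lose)\in\VV_A(C)$ and contradicts bivalency. I will call a process $p\in\{a,b\}$ \emph{committed} in a configuration $D$ if aborting $p$ in $D$ and then letting $p$ run solo makes $p$ win.

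The two structural facts I would establish first are: (A) for every configuration $D$, the configuration $\Conf(D,a^\top b^\top)$ obtained by aborting both processes is \emph{not} bivalent — because after both aborts, bounded abort makes the two-process subsystem wait-free, so there is no infinite non-terminating $\{a,b\}$-execution from it, and by iterating \cref{claim:FLP} a configuration admitting no such execution cannot be bivalent; and (B) if neither $a$ nor $b$ is committed in $D$, then $(lose,lose)\in\VV_A(D)$. Fact (B) follows because $\Conf(D,a^\top b^\top)$ is not bivalent by (A), cannot be univalent with $a$ winning (a solo aborted run of $a$ would then win, contradicting that $a$ is uncommitted), cannot be univalent with $b$ winning, and cannot produce $(win,win)$; hence its outcome set must contain $(lose,lose)$, which prepended with the abort-free path from $C$ to $D$ yields a both-lose execution.

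It then remains to locate a configuration $D$, reachable from $C$ by abort-free $\{a,b\}$-steps, in which neither process is committed. I would case on the commitment status at $C$ itself: if neither is committed, take $D=C$; if both are committed, then each solo aborted run wins, so $\Conf(C,a^\top b^\top)$ realizes both $(win,lose)$ and $(lose,win)$ and is bivalent, contradicting (A). In the remaining case exactly one process, say $a$, is committed. Here I would take the abort-free witness of $(lose,win)\in\VV_A(C)$ consisting of $b$'s solo winning run followed by $a$'s solo losing run, and track the commitment status step by step: it starts at ``$a$ committed, $b$ not'' and ends (a winner counting as committed, a loser as uncommitted) at ``$b$ committed, $a$ not,'' so it must pass through a configuration where neither process is committed (done, by (B)) or where both are (done, by (A)).

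The main obstacle is the one remaining possibility, where commitment is handed off from $a$ to $b$ at a \emph{single} step, i.e.\ one write by $b$ to some register $r$ simultaneously commits $b$ and uncommits $a$. I expect this case to be ruled out using safety together with an indistinguishability/reordering argument: $a$'s committed solo aborted run wins precisely because it reads the old value of $r$, while $b$'s write to $r$ is what secures $b$'s own win, so one can run $a$'s aborted run up to its decisive read of $r$ (which is unaffected while $r$ still holds its old value), let it read the old value, and only then schedule $b$'s committing write; both processes would then win, violating the at-most-one-winner property. Making this reordering rigorous — controlling $a$'s aborted run up to its decisive read of $r$ and ensuring $b$'s committing write stays enabled — is the technical heart of the argument.
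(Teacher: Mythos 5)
Your overall strategy is sound up to a point, and your first two cases track the paper's proof closely: ``neither process committed'' is the paper's Case~2 and ``both committed'' is its Case~1, and in both your route (derive that $\Conf(\cdot,a^\top b^\top)$ would be bivalent, then contradict bounded abort via \cref{lem:infiniteExec}) is exactly what the paper does. The problem is the third case, exactly one process committed, which is the paper's Case~3 and is where essentially all of the difficulty of the lemma lives. You reduce it to the ``single-step handoff'' subcase and then only sketch a reordering argument that you yourself flag as unfinished; that sketch does not work as stated. Concretely: (i) there is no well-defined ``decisive read'' of $r$ --- $a$'s aborted solo run may read $r$ (and other registers) many times, and its win may depend on the whole sequence of values observed, so stopping $a$ after one read of the old value does not pin down its continuation; (ii) after you interleave $a$'s prefix $a^\top a^j$ with $b$'s write, $b$'s commitment was established relative to $\Conf(D,b)$, not relative to $\Conf(D,a^\top a^j\circ b)$ --- $a$'s intervening writes may be read by $b$'s post-abort solo run, so $b$ need not win anymore; and symmetrically $a$'s continuation past the insertion point may read $b$'s write to $r$ or later registers and diverge from the solo run that won. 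So the claimed $(win,win)$ execution is not actually produced, and no contradiction is reached.

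The paper avoids this local-exchange trap entirely: in its Case~3 it shows that $\Conf(C,a^\top)$ is \emph{strongly} bivalent, observes that in the 2-process algorithm induced by starting at $\Conf(C,a^\top)$ process $a$ is wait-free (it has already received the abort signal, so bounded abort bounds its remaining steps), and then invokes \cref{lemma:oneWaitFree}, whose proof is a global simulation argument (a modified algorithm $A'$ in which $b$ simulates an abort upon seeing $a$'s value, yielding a wait-free algorithm with a bivalent initial configuration, contradicting \cref{lem:infiniteExec}). Some such non-local argument appears to be genuinely needed --- if a one-step reordering sufficed, the paper would not need \cref{lemma:oneWaitFree} at all. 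To complete your proof you would either have to import that lemma and its machinery for your handoff case, or supply a substantially more careful argument than the one you outline; as written, the proposal has a genuine gap at its acknowledged ``technical heart.''
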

\begin{proof}
  Suppose $C$ is strongly $\{a, b\}$-bivalent.
  Then it is $\{a, b\}$-bivalent, so
  \begin{equation}
  \label{eq:loseLoseIsNotAPossibleOutcome}
  \VV_A(C)=\{(lose,win),(win,lose)\},
  \end{equation}
  and if $a$ or $b$ runs solo in $C$, then that process wins.
  Because $\sigma \in \mathcal{P}^\ast$, neither $a$ nor $b$ receives the abort-signal in $\Exec(\Gamma,\sigma)$.
  By the assumption that aborts are bounded, processes $a$ and $b$ both terminate in sufficiently long solo runs starting in $\Conf(C, a^\top)$ and $\Conf(C, b^\top)$, respectively.
  Let $x$ and $y$ be the return values of $a$ in $\Exec(C,a^\top\circ a^{k_a})$ and of $b$ in $\Exec(C,b^\top\circ b^{k_b})$, respectively, for sufficiently large integers $k_a$ and $k_b$.
  
  Since $\Conf(C, a^\top) \sim_a \Conf(C, a^\top b^\top)$, 
  \begin{equation}
  \text{$a$ returns $x$ in $\Exec(C,a^\top b^\top\circ a^{k_a})$}.
  \end{equation}
  Similarly, since $\Conf(C, b^\top) \sim_b \Conf(C, a^\top b^\top)$, 
  \begin{equation}
  \text{$b$ returns returns $y$ in $\Exec(C,a^\top b^\top\circ b^{k_b})$}.
  \end{equation}

  We distinguish the following cases.
  
  \emph{Case 1: $x = y = win$:}\quad    
  In a sufficiently long solo-run by $b$ following $\Exec(C,a^\top b^\top\circ a^{k_a})$, process $b$ must terminate (by deadlock-freedom).
  Since $a$ wins in that execution, $b$ must lose.
  Thus,
  \begin{equation}
  \label{eq:winLoseInAAbort}
  (win, lose) \in \VV_A\big(\Conf(C, a^\top b^\top)\big).
  \end{equation}
  Applying a symmetric argument to a sufficiently long solo-run by $a$ following $\Exec(C,b^\top a^\top\circ b^{k_b})$, we obtain
  \begin{equation}
  \label{eq:loseWinInBAbort}
  (lose,win) \in \VV_A\big(\Conf(C, a^\top b^\top)\big).
  \end{equation}
  Hence, using \cref{eq:loseLoseIsNotAPossibleOutcome}, we get $\big\{ (win, lose), (lose, win) \big\}=\VV_A\big(\Conf(C, a^\top b^\top)\big)$.
  Then by Lemma~\ref{lem:infiniteExec}, there exists an infinite execution starting in $\Conf(C, a^\top b^\top)$, such that $a$ and $b$ do not terminate.
  This contradicts bounded aborts.

  \emph{Case 2: $x = y = lose$:}\quad
  In a sufficiently long solo-run by $b$ following $\Exec(C,a^\top b^\top\circ a^{k_a})$, process $b$ must terminate (by deadlock-freedom).
  Since $a$ loses in that execution, by \cref{eq:loseLoseIsNotAPossibleOutcome}, process $b$ must win.
  Thus, $(lose,win)\in\VV_A\big(\Conf(C, a^\top b^\top)\big)$, and with a symmetric argument $(win,lose)\in\VV_A\big(\Conf(C, a^\top b^\top)\big)$.
  We get a contradiction for the same reasons as in Case~1.
  
  \emph{Case 3:}\quad $\{x, y\} = \{win, lose\}$:
  Without loss of generality, assume $x = win$. 
  Then in $\Exec(C,a^\top a^{k_a})$ process $a$ wins.
  On the other hand, since $C$ is strongly bivalent, $b$ wins in a sufficiently long solo-run starting in $C$.  
  Since $C \stackrel{}{\sim_b} \Conf(C, a^\top)$, process $b$ also wins in a long enough solo-run starting in $\Conf(C,a^\top)$. 
  Hence, we have shown that any of the two processes in $\{a,b\}$ wins in a solo-run starting in $\Conf(C,a^\top)$.
  By deadlock-freedom and \cref{eq:loseLoseIsNotAPossibleOutcome} the other process loses, if it performs a long enough solo-run afterwards.
  This shows that $\Conf(C,a^\top)$ is strongly bivalent.
  
  Now let $A'$ be the 2-process algorithm in which $a$ and $b$ act exactly as in algorithm as $A$, but the initial configuration is $\Gamma'=\Conf(C, a^\top)$.
  Then $A'$ is a deadlock-free abortable 2-process leader election algorithm with bounded aborts:
  The bounded abort property is inherited from $A$. 
  Deadlock-freedom follows from the assumption that $a$ and $b$ terminate in any fair execution starting in $C$.
  The safety property of abortable leader election follows from \cref{eq:loseLoseIsNotAPossibleOutcome} and the fact that each process wins in a long enough solo-run starting in the initial configuration $\Conf(C, a^\top)$ (because that configuration is strongly bivalent).
  
  Moreover, in $A'$ process $a$ always terminates within a finite number of its own steps.
  This follows from the bounded abort property of $A$ and the fact that both processes simulate $A$ starting in configuration $\Conf(C,a^\top)$, in which $a$ has already received the abort-signal.
  This contradicts Lemma~\ref{lemma:oneWaitFree}.
\end{proof}

\subsection{Properties of Executions and Safe Configurations}
\subsubsection{Additional Assumptions}
We make the following assumptions that do not restrict the generality of our results.
Recall that processes are state machines, each using some infinite state space $\mathcal{Q}$.
We assume that during an execution a process never enters the same state twice.
Further, we assume that each register stores a pair in $\mathcal{P}\times(\mathcal{Q}\cup\{\bot\})$, where $\bot\notin\mathcal{Q}$.
The initial value of each register in $\RR_p$ is $(p,\bot)$, and when a process $p$ writes to any register, it writes a pair $(p,x)$, where $x$ is $p$'s state before its write operation. 
I.e., we are using a full information model, where processes write all information they have observed in the past. 
As a result, no two writes in an execution write the same value.
Each process's first shared memory step is a read outside of its local shared memory segment, that we call \emph{invocation read}, and thus incurs an RMR. Adding such a step to the beginning of each process's program does not affect the asymptotic RMR complexity of the algorithm.
We will assume that at the end of its execution, each process $p$ reads all registers in $\RR_p$ once.
Since those reads do not incur any RMRs, this assumption can be made without loss of generality.
We call $p$'s last read of register $r\in\RR_p$ the \emph{terminating read} of $r$, and we assume that after $p$'s last terminating read, $p$ will immediately enter a halting state.

%

\subsubsection{Terminology and Notation}
We define some additional terms and notation.
%

We say process $p$ is \emph{visible} on register $r$ in configuration $C$ if $\Val_C(r) = (p, x)$, for some $x \in \mathcal{Q}$.
Let $\Lost(C)$ be the set of processes that have lost in configuration $C$.

When we construct our high RMR execution, we need to make sure that whenever a process gains information about some other process that has not yet lost, someone pays for that with an RMR.
To keep track of who knows who, we define a set $K(C)$ that contains pairs $(p,q)$ of processes.
Informally, $(p,q)$ is in $K(C)$ if $p$  has already gained information about process $q$ in the execution leading to configuration $C$, or $p$ can gain such information for ``free'' (i.e., without an RMR being paid for that).
Gaining information does not only mean that $p$ reads a register that $q$ has written; it means anything that might affect $p$'s execution, e.g., $p$'s cache copies being invalidated.
$K(C)$ is the union of three sets $K_1(C)$, $K_2(C)$, and $K_3(C)$, defined as follows:
\begin{itemize}
  \item $K_1(C)$ is the set of all pairs $(p, q)$, $p \neq q$, such that in $\E{C}$ process $p$ reads a register while process $q$ is visible on that register. I.e., $p$ reads a value of $(q, x)$, where $x \in \mathcal{Q}$.\\
  \emph{Informally:} $p$ has learned about $q$ in $\E{C}$.
  \item $K_2(C)$ is the set of all pairs $(p, q)$, $p \neq q$, such that in $\E{C}$ process $q$ takes at least one shared memory step and process $p$ reads a register in $\RR_q$.\\
  \emph{Informally:} Process $p$ may have a valid cache copy of a register $r\in\RR_q$, and by writing to $r$ process $q$ can invalidate that cache copy without incurring an RMR.
  \item $K_3(C)$ is the set of all pairs $(p, q)$, $p \neq q$, such that in $\E{C}$ process $p$ takes at least one shared memory step, and $q$ writes to a register $r\in\RR_p$ before $p$'s terminating read of $r$.\\
  \emph{Informally:} $p$ may learn about $q$ without incurring an RMR by scanning all its registers in $\RR_p$.
\end{itemize}
Let $\K(C) = K_1(C) \cup K_2(C) \cup K_3(C)$.
We say process $p$ \emph{knows} process $q$ in configuration $C$ if $(p, q) \in \K(C)$.

Recall that in our inductive construction of an RMR expensive execution, we will sometimes erase processes from the constructed execution.
For that reason, if $p$ knows about $q$, i.e., $(p,q)\in K(C)$, then we will not remove a process $q$ from the execution $\E{C}$.
We achieve this by ensuring that whenever $(p,q)\in K(C)$, $q\in L(C)$, and as discussed earlier no lost processes will be erased.

However, we have to be careful about cases in which $p$ does not know directly about $q$.
For example, suppose process $q$ writes to register $r$ in execution $E$, and later some process $z$ overwrites $r$ and finally $p$ becomes poised to read $r$.
In our inductive construction we may want to remove either $z$ or $p$ from the execution, because we do not want $z$ to be discovered by $p$.
However, removing $z$ reveals $q$ on register $r$, and so now $p$ may discover $q$.
To account for that we introduce the concept of \emph{hidden} processes.

In particular, for a configuration $C$ and a register $r$ we define a set $H_r(C)$ of processes \emph{hidden} on $r$ as follows:
\begin{enumerate}[(H1)]
  \item For $r \notin \RR_p$, $p\in H_r(C)$ if and only if either $p$ does not access $r$ in $\E{C}$, or $p$ accesses $r$ in $\E{C}$ at some point $t$, and either no process writes $r$ after $t$, or at least one process that writes $r$ after $t$ is in $\Lost(C)$; \\
  \emph{Idea:} If $p$'s write to $r$ was overwritten by some processes, then at least one of them has lost and thus will not be erased from the execution. Hence, erasing a process does not reveal $p$'s write to any other process.
  \item For $r \in \RR_p$, $p\in H_r(C)$ if and only if any process other than $p$ that writes to $r$ in $\E{C}$ is in $\Lost(C)$.\\
  \emph{Idea:} If a process $q$ wrote to a register $r$ in $p$'s local memory segment, then $q$ has lost. Therefore, $q$ will not be erased from the execution.
  This is important because $p$ can read $r$ for free and we have to assume that it does so frequently, so erasing $q$ from the execution might change what $p$ observes in the execution.
\end{enumerate}
Let $\HH(C) = \bigcap_{r \in \RR} H_r(C)$.
We say process $p$ is \emph{hidden} in configuration $C$, if $p \in \HH(C)$.

We finally define the concept of a safe configuration as follows.
Configuration $C$ is \emph{safe}, if 
\begin{enumerate}[(S1)]
  \item for any pair $(p, q) \in \K(C)$, $q \in \Lost(C)$, and
  \item if $p \notin H(C)$, then either $p \in \Lost(C)$, or $p$ takes no shared memory step in $\E{C}$.
\end{enumerate}
The first property ensures that no process $p$ knows another process $q$ that has not yet lost, and the second property says that all processes that are not hidden must have lost, or not even started participation.
As a result, in an execution leading to a safe configuration, we can erase all processes that do not lose, without affecting any other processes.
Formally, we will prove for a schedule $\sigma$, a safe configuration $C=\Conf(\Gamma, \sigma)$ and a set of processes $P\supseteq \Lost(C)$,
\begin{compactitem}
  \item $\Exec(\Gamma,\sigma)|P=\Exec(\Gamma,\sigma|P^\Delta)$;
  \item $\RMR_P(\Exec(\Gamma,\sigma))=\RMR_P(\Exec(\Gamma,\sigma|P^\Delta))$; and
  \item $\Cache_p(C) =\Cache_p(\Conf(\Gamma,\sigma|P^\Delta))$ for all $p\in P$.
\end{compactitem}
Moreover, if $C$ is safe, then $\Conf(\Gamma, \sigma|P^\Delta)$ is also safe.

\subsubsection{Forcing Processes to Lose}
Lemma~\ref{lemma:bothLose} is a core lemma in the construction of an RMR-expensive execution, which states that we can force two processes to lose starting in a reachable configuration, that the two processes terminate in any fair execution of those two processes and win in their solo execution.
\begin{lemma}
  \label{lem:bothLose}
  Let $C$ be a reachable configuration, and $a, b \in \mathcal{P} \setminus \Lost(C)$ two distinct processes that do not receive the abort signal in $\E{C}$. 
  Further, assume that processes $a$ and $b$ both terminate in any $\{a,b\}$-fair execution starting in $C$.
  If each process in $\{a, b\}$ wins in its solo-run starting in $C$, then there exists a schedule $\sigma \in \big(\{a, b\}^\Delta\big)^\ast$, such that $a$ and $b$ lose in $\Exec(C, \sigma)$.
\end{lemma}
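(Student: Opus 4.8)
The plan is to argue by contradiction: assume that no schedule in $\big(\{a,b\}^\Delta\big)^\ast$ makes both $a$ and $b$ lose from $C$, i.e.\ $(lose,lose)\notin\VV_A(C)$, and derive that $C$ would have to be strongly $\{a,b\}$-bivalent, contradicting \cref{lemma:bothLose}.

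First I would determine $\VV_A(C)$ completely. Running $a$ solo from $C$, the solo-win hypothesis makes $a$ win and terminate; extending this by steps of $b$ until $b$ terminates (possible by the hypothesis that $a$ and $b$ terminate in every $\{a,b\}$-fair execution from $C$) forces $b$ to lose, since two winners in one $\{a,b\}$-only execution would violate safety. Hence $(win,lose)\in\VV_A(C)$, and symmetrically $(lose,win)\in\VV_A(C)$. Safety gives $(win,win)\notin\VV_A(C)$ and the contradiction hypothesis gives $(lose,lose)\notin\VV_A(C)$, so $\VV_A(C)=\{(win,lose),(lose,win)\}$, i.e.\ $C$ is $\{a,b\}$-bivalent. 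Since moreover each of $a$ and $b$ wins in its solo-run from $C$, configuration $C$ is strongly $\{a,b\}$-bivalent. (The solo-win hypothesis also guarantees that no process has won in $\E{C}$, as such a winner together with a solo win of $a$ would produce two winners.)

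It now remains to contradict \cref{lemma:bothLose}. That lemma states that, under exactly the termination hypothesis assumed here, a configuration reached by a schedule in $\mathcal{P}^\ast$ cannot be strongly $\{a,b\}$-bivalent. The only gap is that $C$ here may have been reached using abort signals of processes other than $a$ and $b$, so $\Sched{C}$ need not lie in $\mathcal{P}^\ast$. I would close this gap by the reduction already used elsewhere in the paper: let $A'$ be the two-process algorithm in which $a$ and $b$ behave exactly as in $A$ but whose initial configuration is $\Gamma'=C$. Because the hypothesis guarantees that $a$ and $b$ have not received the abort signal in $\E{C}$, their states in $\Gamma'$ are non-aborted, so $A'$ is a genuine abortable two-process algorithm. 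Every schedule of $A'$ from $\Gamma'$ is an $\{a,b\}$-only execution of $A$ from $C$ with the same outcomes for $a$ and $b$; hence $A'$ inherits bounded aborts and ``at most one winner'' from $A$, its deadlock-freedom is exactly the termination hypothesis, and the clause ``both lose only if both abort'' holds vacuously under the contradiction hypothesis. Thus $A'$ is a deadlock-free abortable two-process leader election algorithm with bounded aborts, $\Gamma'=\Conf(\Gamma',\epsilon)$ is reached by the empty schedule in $\mathcal{P}^\ast$, and $a,b$ terminate in every fair execution from $\Gamma'$. Applying \cref{lemma:bothLose} to $A'$ yields that $\Gamma'$ is not strongly $\{a,b\}$-bivalent; but $\VV_{A'}(\Gamma')=\VV_A(C)$ and $C$ is strongly $\{a,b\}$-bivalent, a contradiction.

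The main obstacle is the legitimacy of this reduction rather than any single calculation: one must check that $A'$ satisfies the abortable leader election specification for \emph{all} of its schedules even though its initial configuration $C$ is not the standard all-$(p,\bot)$ configuration, and that $a$ and $b$ are in non-aborted states in $C$ so that the abort-based case analysis inside \cref{lemma:bothLose} applies to $A'$; both points are exactly what the hypotheses (solo wins, no aborts of $a,b$ in $\E{C}$, fair termination) together with the contradiction hypothesis supply. An equally valid alternative is to avoid $A'$ entirely and reread the proof of \cref{lemma:bothLose}: that proof invokes $\sigma\in\mathcal{P}^\ast$ only to conclude that neither $a$ nor $b$ aborts in $\E{C}$, which is assumed outright here, so its case analysis applies verbatim to the present, more general $C$.
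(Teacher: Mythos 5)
Your proof is correct and follows the same route as the paper's, which is a three-line argument: assume for contradiction that $(lose,lose)\notin\VV_A(C)$, conclude from the solo-win hypothesis that $C$ is strongly $\{a,b\}$-bivalent, and contradict \cref{lemma:bothLose}. The extra care you take over the mismatch between this lemma's hypothesis (only that $a$ and $b$ do not abort in $\E{C}$) and the $\sigma\in\mathcal{P}^\ast$ hypothesis of \cref{lemma:bothLose} is warranted --- the paper's one-line proof silently ignores this point --- and your closing observation that the proof of \cref{lemma:bothLose} invokes $\sigma\in\mathcal{P}^\ast$ only to deduce that neither $a$ nor $b$ receives the abort signal in $\E{C}$ is the cleanest way to close that gap, simpler than the $A'$ reduction.
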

\begin{proof}
  For the purpose of contradiction, assume that for any execution $\Exec(C,\sigma)$, where $\sigma \in \big(\{a, b\}^\Delta\big)^\ast$, in which $a$ and $b$ both terminate, one of the processes wins.
  Then $(lose, lose) \notin \VV_A(C)$.
  Since a solo-run by either $a$ or $b$, starting in $C$, results in that process winning, $C$ is $\{a, b\}$-strongly bivalent. This contradicts Lemma~\ref{lemma:bothLose}.
\end{proof}

\subsubsection{Projections}
We continue by proving properties of the projection operation.
First, the projection of a schedule to a superset of lost processes, $P$, does not change the execution of those processes, if any process that is known by a process in $P$ is lost.
\begin{claim}\label{claim:projection_same_execution}
  Let $\sigma$ be a schedule, $C=\Conf(\Gamma, \sigma)$, and $P\subseteq\mathcal{P}$.
  If $\Lost(C) \subseteq P$, and $q \in \Lost(C)$ for any pair $(p, q)\in\K(C)$, then
  \begin{equation}\label{eq:projection_same_execution}
  \Exec(\Gamma,\sigma)|P=\Exec(\Gamma,\sigma|P^\Delta).
  \end{equation}
\end{claim}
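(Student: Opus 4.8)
The plan is to prove \eqref{eq:projection_same_execution} by induction on the length of $\sigma$, showing that erasing the steps of processes outside $P$ never changes what the processes in $P$ do. Write $\sigma=\sigma_1\dots\sigma_k$, let $\sigma^{(i)}=\sigma_1\dots\sigma_i$, $C_i=\Conf(\Gamma,\sigma^{(i)})$, and let $D_i=\Conf\big(\Gamma,\sigma^{(i)}|P^\Delta\big)$ be the configuration reached by the projected prefix. I would maintain two invariants: (I1) every process $p\in P$ is in the same state in $C_i$ and $D_i$, i.e.\ $\E{C_i}|p=\E{D_i}|p$; and (I2) for every register $r$, either $\Val_{C_i}(r)=\Val_{D_i}(r)$, or the last process to write $r$ in $\E{C_i}$ is some $z\notin P$. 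Both hold trivially in $\Gamma$, and note that $\sigma^{(i+1)}|P^\Delta$ equals $\sigma^{(i)}|P^\Delta$ when $\sigma_{i+1}\notin P^\Delta$ and $(\sigma^{(i)}|P^\Delta)\circ\sigma_{i+1}$ otherwise, so $D_{i+1}$ is obtained from $D_i$ in the obvious way.

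For the inductive step I would case on $\sigma_{i+1}$. If $\sigma_{i+1}\in P^\Delta$ the symbol survives the projection. An abort symbol $p^\top$ changes only $p$'s state, identically by (I1). A write by $p\in P$ writes the same value $(p,x)$ in both executions by (I1), makes $p$ the last writer of the target register in both, and hence restores the first disjunct of (I2) there, while all other registers are untouched. The delicate case is a read by $p\in P$, where I must show $p$ reads the same value in both executions. If $p$ reads an initial value $(q,\bot)$, then no process wrote that register in $\E{C_i}$, hence none did in the projected execution (its writes to that register form a subset), so the read agrees. Otherwise $p$ reads some $(q,x)$ with $x\in\mathcal{Q}$; by (I2) the two values agree unless $q\notin P$, but reading $(q,x)$ places $(p,q)$ in $K_1(C_{i+1})$, and since $\E{C_{i+1}}|p$ is a prefix of $\E{C}|p$, monotonicity of $K_1$ gives $(p,q)\in K_1(C)\subseteq\K(C)$, whence by hypothesis $q\in\Lost(C)\subseteq P$, a contradiction. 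So $p$ reads the same value, updates its state identically, and both invariants are preserved; moreover this step occurs identically in $\Exec(\Gamma,\sigma)|P$ and in $\Exec(\Gamma,\sigma|P^\Delta)$.

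If instead $\sigma_{i+1}=z$ or $z^\top$ with $z\notin P$, the symbol is deleted, so $D_{i+1}=D_i$. Such a step never changes the state of a process in $P$, so (I1) is preserved, and it leaves every register unchanged except a write by $z$, after which the last writer of that register is $z\notin P$ and the second disjunct of (I2) holds. Since no step of a process in $P$ is produced, this symbol contributes nothing to either side. Combining both cases, the steps of processes in $P$ in $\Exec(\Gamma,\sigma)$ occur in the same order and with identical contents as the steps of $\Exec(\Gamma,\sigma|P^\Delta)$, which is exactly \eqref{eq:projection_same_execution}.

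I expect the main obstacle to be the read case together with fixing the right register invariant (I2): the key insight is that registers whose last write was by a non-$P$ process are allowed to carry ``stale'' values in the full execution, and the whole argument hinges on proving that a process in $P$ can never read such a register. This is precisely where the hypothesis that $q\in\Lost(C)$ for every $(p,q)\in\K(C)$ enters, through the inclusion $K_1\subseteq\K$ and the monotonicity of $K_1$ along prefixes of $\sigma$. I note that $K_2$ and $K_3$ play no role in this particular statement; they are only needed for the companion claims about RMR counts and cache contents.
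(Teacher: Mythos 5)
Your proposal is correct and follows essentially the same route as the paper's proof: induction along the schedule, with the only delicate case being a read by a process in $P$, resolved by observing that reading a value $(q,x)$ with $x\in\mathcal{Q}$ places $(p,q)$ in $K_1$ and hence forces $q\in\Lost(C)\subseteq P$, so the last writer survives the projection. Your explicit register invariant (I2) is just a packaged form of the case analysis the paper performs on the fly (whether $r$ was written in the prefix and by whom), and your remark that only $K_1$ is needed here is accurate.
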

\begin{proof}
  We prove the claim by induction on the length of $\sigma$.
  If $\sigma$ is the empty schedule, then the claim is trivially true.
  
  Now suppose that $\sigma=\sigma'\lambda$, where $\lambda\in\mathcal{P}^\Delta$ is a schedule of length one, and the inductive hypothesis is true for $\sigma'$, i.e.,
  \begin{equation}\label{eq:projection_same_exectuion:IH}
  \Exec(\Gamma, \sigma') | P = \Exec(\Gamma, \sigma' | P^\Delta).
  \end{equation}
  Let $D=\Conf(\Gamma, \sigma')$, and $D'=\Conf(\Gamma, \sigma'|P^\Delta)$.
  We will show that 
  \begin{equation}\label{eq:claim:projection:to_show}
  \Exec(D,\lambda)|P=\Exec(D',\lambda|P^\Delta).
  \end{equation}
  Then it follows from \cref{eq:projection_same_exectuion:IH} that $\Exec(\Gamma,\sigma'\lambda)|P=\Exec(\Gamma,\sigma'\lambda|P^\Delta)$, which completes the inductive step.
  
  If $\lambda\notin P^\Delta$, then each of the two executions on the left and right hand side of \cref{eq:claim:projection:to_show} is the empty execution, so \cref{eq:claim:projection:to_show} is true. 
  Now suppose $\lambda\in\{p,p^\top\}$ for some process $p\in P$.
  Then in $\Exec(D,\lambda)=\Exec(D,\lambda)|P$, either process $p$ receives the abort signal or process $p$ executes a shared memory operation.
  First assume that $p$ receives the abort signal or writes some value $x$ to a shared register $r$ in that step. 
  By \cref{eq:projection_same_exectuion:IH} process $p$ is in the same state in $D$ as in $D'$, so $p$ receives the abort signal or writes $x$ to register $r$, respectively, in $\Exec(D',\lambda)=\Exec(D',\lambda|P^\Delta)$.
  In either case \cref{eq:claim:projection:to_show} follows.
  
  Now assume that in $\Exec(D,\lambda)=\Exec(D,\lambda)|P$, process $p=\lambda$ reads a register $r$.
  Since $p$ is in the same state in $D$ as in $D'$, it reads the same register $r$ in $\Exec(D,\lambda|P^\Delta)$.
  We will show that $\Val_D(r)=\Val_{D'}(r)$.
  As a result, $p$ reads the same value in both executions, and thus \cref{eq:claim:projection:to_show} follows.
  
  For the purpose of a contradiction, assume $\Val_D(r)\neq\Val_{D'}(r)$.
  First assume $\Exec(\Gamma,\sigma')$ contains no write to register $r$.
  Then, by the assumption that $\Val_D(r)\neq\Val_{D'}(r)$, execution $\Exec(\Gamma,\sigma'|P^\Delta)$ contains a write to $r$ by some process $q$.
  Since only processes in $P$ take steps in that execution, $q\in P$. 
  But since $q$ does not write in $\Exec(\Gamma,\sigma')$, we have $\Exec(\Gamma,\sigma')|P \neq\Exec(\Gamma,\sigma'|P^\Delta)$, contradicting \cref{eq:projection_same_exectuion:IH}.
  
  Now assume $\Exec(\Gamma,\sigma')$ contains a write to $r$, and let $w$ be the last such write, executed by some process $q$.
  Thus, $\Val_D(r)=(q,x)$ for some value $x \in \mathcal{Q}$.
  Since in $\Exec(D,\lambda)$ process $p$ reads register $r$, $(p, q) \in K_1\big(\Conf(D, \lambda)\big)$.
  Since $C = \Conf(\Gamma, \sigma) = \Conf(D, \lambda)$, we have $(p, q) \in \K(C)$.
  Therefore, $q \in \Lost(C)$ by the assumption of the claim that $C$ is safe.
  Because $\Lost(C)\subseteq P$, it follows that $q\in P$.
  Therefore, by \cref{eq:projection_same_exectuion:IH}, $q$'s write $w$, with value $(q,x)$, also occurs in $\Exec(\Gamma,\sigma'|P^\Delta)$, and $q$ does not write to $r$ again after $w$.
  By the assumption that $\Val_D(r) \neq \Val_{D'}(r)$, $\Exec(\Gamma,\sigma'|P^\Delta)$ must contain another write $w'$ that is executed after $w$ by some process $q'\neq q$. 
  All steps in that execution are performed by processes in $P$, so $q'\in P$. 
  But then by \cref{eq:projection_same_exectuion:IH}, $w$ and $w'$ are executed in the same order in $\Exec(\Gamma,\sigma')$, contradicting that $w$ is the last write to $r$ in that execution.
\end{proof}

If $C$ is a safe configuration, then by (S1) $q\in\Lost(C)$ for each pair $(p,q)\in\K(C)$.
Hence, from \cref{claim:projection_same_execution} we immediately get:

\begin{corollary}\label{cor:projection_same_execution_from_safe} 
  Let $\sigma$ be a schedule, $C = \Conf(\Gamma, \sigma)$ and $P$ a set of processes such that $\Lost(C) \subseteq P$.
  If $C$ is safe, then 
  \begin{equation}
  \Exec(\Gamma, \sigma)|P = \Exec(\Gamma, \sigma | P^\Delta).
  \end{equation}
\end{corollary}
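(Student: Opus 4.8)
The plan is to obtain this corollary directly from \cref{claim:projection_same_execution} by checking that both hypotheses of that claim are met whenever $C$ is safe and $\Lost(C) \subseteq P$. The claim requires two things: that $\Lost(C) \subseteq P$, and that $q \in \Lost(C)$ holds for every pair $(p,q) \in \K(C)$. The first of these is assumed outright in the corollary. For the second, I would invoke safety property (S1): by definition, a safe configuration satisfies exactly the condition that for every $(p,q) \in \K(C)$ we have $q \in \Lost(C)$. Thus safety supplies precisely the knowledge hypothesis that the claim needs.

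Concretely, I would argue as follows. Assume $C = \Conf(\Gamma, \sigma)$ is safe and $\Lost(C) \subseteq P$. Since $C$ is safe, (S1) gives $q \in \Lost(C)$ for every $(p,q) \in \K(C)$. Together with $\Lost(C) \subseteq P$, this means that the premises of \cref{claim:projection_same_execution} are satisfied for the schedule $\sigma$ and the set $P$. Applying the claim then yields $\Exec(\Gamma, \sigma)|P = \Exec(\Gamma, \sigma | P^\Delta)$, which is the desired conclusion.

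I expect no real obstacle here, since the corollary is just a repackaging of the claim in which the technical knowledge condition has been folded into the single, more usable hypothesis that $C$ is safe. The only step that requires any thought at all is recognizing that property (S1) of safety is literally the second premise of the claim; once that identification is made, the derivation is a one-line invocation of the claim. This is why the statement is presented as a corollary rather than as a separate lemma with its own inductive proof.
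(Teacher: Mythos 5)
Your proposal is correct and matches the paper's own derivation exactly: the paper likewise notes that safety property (S1) yields $q\in\Lost(C)$ for every $(p,q)\in\K(C)$ and then applies \cref{claim:projection_same_execution} directly. Nothing is missing.
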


The projection of a schedule leading to a safe configuration to a superset of lost processes does not change the cached values of those processes.

\begin{claim}
  \label{claim:projection_cache}
  Let $\sigma$ be a schedule, $P\subseteq\mathcal{P}$, $C = \Conf(\Gamma, \sigma)$, and $C'=\Conf(\Gamma, \sigma | P^\Delta)$. 
  If $C$ is safe and $\Lost(C) \subseteq P$, then $\Cache_p(C) = \Cache_p(C')$ for each process $p\in P$.
\end{claim}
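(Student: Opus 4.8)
The plan is to reduce everything to \cref{cor:projection_same_execution_from_safe}, which under the present hypotheses ($C$ safe and $\Lost(C)\subseteq P$) gives $\Exec(\Gamma,\sigma)|P=\Exec(\Gamma,\sigma|P^\Delta)$. In particular every process $p\in P$ performs exactly the same sequence of steps in $\E{C}$ as in $\Exec(\Gamma,\sigma|P^\Delta)$, so $p$ is in the same state in $C$ as in $C'$; hence $p$ is terminated in $C$ iff it is terminated in $C'$. If $p$ is terminated, both $\Cache_p(C)$ and $\Cache_p(C')$ are empty and we are done. Otherwise $p$ has not terminated, so $p\notin\Lost(C)$ (losing entails termination), and both caches contain $\RR_p$. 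It therefore suffices to show, for every register $r\notin\RR_p$, that $p$ has a valid cache copy of $r$ in $C$ iff it has one in $C'$.

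Next I would translate ``valid cache copy'' into a statement about writes. For $r\notin\RR_p$, process $p$ has a valid cache copy of $r$ in a configuration iff $p$ accesses $r$ in the corresponding execution and no other process writes $r$ after $p$'s last access to $r$. Since $p$'s steps agree in both executions, $p$ accesses $r$ in $\E{C}$ iff it does in $\Exec(\Gamma,\sigma|P^\Delta)$, and if it does, its last access to $r$ is the same step and occupies the same position within the common sequence of $P$-steps given by $\Exec(\Gamma,\sigma)|P=\Exec(\Gamma,\sigma|P^\Delta)$. Using this equality once more, a process in $P$ writes $r$ after $p$'s last access in $\Exec(\Gamma,\sigma|P^\Delta)$ iff it does so in $\E{C}$. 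Consequently ``$p$ has a valid cache copy of $r$ in $C'$'' is equivalent to ``$p$ accesses $r$ and no process in $P$ writes $r$ after $p$'s last access in $\E{C}$''. The direction $C\Rightarrow C'$ is then immediate: if no process at all writes $r$ after $p$'s last access in $\E{C}$, then in particular no process in $P$ does.

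The remaining direction, $C'\Rightarrow C$, is the crux, and I would prove it by contradiction: assume no process in $P$ writes $r$ after $p$'s last access in $\E{C}$, yet some process does write $r$ there. The key observation is that $p$ itself is hidden on $r$: since $p\notin\Lost(C)$ and $p$ takes a shared memory step (it accesses $r$), safety property (S2) forces $p\in\HH(C)\subseteq H_r(C)$. Because $r\notin\RR_p$, clause (H1) applied to $p$ and $r$ says that, after $p$'s last access to $r$, either no process writes $r$, or at least one process that writes $r$ there lies in $\Lost(C)$. The first alternative is ruled out by the assumed later write, and the second by the assumption that no process in $P\supseteq\Lost(C)$ writes $r$ after $p$'s last access; this contradicts $p\in H_r(C)$. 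Hence no process writes $r$ after $p$'s last access in $\E{C}$, so $p$ has a valid cache copy of $r$ in $C$. Combining both directions over all $r\notin\RR_p$ yields $\Cache_p(C)=\Cache_p(C')$. I expect the main obstacle to be exactly this last step: a write to $r$ by a process outside $P$ need not be recorded in $\K(C)$ (it is neither a value $p$ reads, nor a write to $\RR_p$, nor a write to $p$'s own segment), so (S1) alone does not close the gap; it is precisely the hidden-process machinery---the fact that $p$ is hidden on $r$---that rules this scenario out.
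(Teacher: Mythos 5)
Your proposal is correct and follows essentially the same route as the paper's proof: it reduces to \cref{cor:projection_same_execution_from_safe} to get $\Exec(\Gamma,\sigma)|P=\Exec(\Gamma,\sigma|P^\Delta)$, disposes of the terminated case and the $\RR_p$ registers, handles the easy inclusion by noting that all writers in $E'$ lie in $P$, and closes the hard inclusion exactly as the paper does, via (S2) to place $p$ in $\HH(C)$ and then (H1) to force a lost (hence $P$-) writer after $p$'s last access. Your closing remark that (S1) alone cannot rule out a write by a process outside $P$ and that the hidden-process machinery is what is really needed is precisely the point of the paper's argument.
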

\begin{proof} 
  Let $E = \Exec(\Gamma, \sigma)$, and $E' = \Exec(\Gamma, \sigma | P^\Delta)$.  
  Since $C$ is safe, and $\Lost(C) \subseteq P$, by \cref{cor:projection_same_execution_from_safe},
  \begin{equation}\label{eq:clm:projection}
  E|P=E'.
  \end{equation}
  Fix a process $p \in P$. First assume $p \in \Lost(C)$. Thus, since $\Lost(C') \subseteq \Lost(C)$, we have $p \in \Lost(C')$. By definition, $\Cache_p(C) = \Cache_p(C') = \emptyset$.
  
  Now assume $p \notin \Lost(C)$.
  We first show $\Cache_p(C)\subseteq\Cache_p(C')$.
  Let $r\in \Cache_p(C)$.
  Then in some step $s$ of $E$ process $p$ accesses $r$, and no process writes to $r$ after step $s$.
  By \cref{eq:clm:projection}, $p$ also executes step $s$ in $E'$.
  For the purpose of a contradiction assume $r\notin\Cache_p(C')$.
  Then in $E'$ some process $q$ writes to $r$ after step $s$.
  Since only processes in $P$ take steps in $E'$, $q\in P$.
  But then by \cref{eq:clm:projection} process $q$ also writes to $r$ after step $s$ in $E|P$ and thus in $E$---a contradiction.
  
  We now prove $\Cache_p(C')\subseteq\Cache_p(C)$.
  Let $r\in\Cache_p(C')$. If $r \in \RR_p$, then by definition $r \in \Cache_p(C)$. So assume $r \notin \RR_p$.
  Then
  \begin{equation}
  \label{eq:pVisibleOnr}
  \text{in $E'$ process $p$ accesses $r$ and no process writes to $r$ after $p$'s last access.}
  \end{equation}
  By \cref{eq:clm:projection}, $p$ also accesses $r$ in $E|P$, and thus in $E$.
  For the purpose of a contradiction assume $r \notin \Cache_p(C)$.
  Therefore, some process writes to $r$ in $E$ after $p$'s last access of $r$.
  Since $C$ is safe, $p \notin \Lost(C)$, and $p$ takes at least one shared memory step in $\E{C}$, we obtain from (S2) that $p \in \HH(C)$. Thus, by the assumption that $r \notin \RR_p$, by (H1) at least one process, $q$, that writes to $r$ in $E$ after $p$'s last access of $r$, must be in $\Lost(C)$.
  Therefore, $q \in P$. Since $p \in P$, by \cref{eq:clm:projection}, $q$ writes $r$ after $p$'s last access in $E'$. This contradicts \cref{eq:pVisibleOnr}.
\end{proof}

Removing a winning process from a schedule that leads to a safe configuration does not affect the state and cache values of other processes.

\begin{claim}
  \label{claim:winnerProjection}
  Let $\sigma$ be a schedule, such that $C = \Conf(\Gamma, \sigma)$ is safe.
  Further, let $p \in \mathcal{P}$ and $P = \mathcal{P} \setminus \{p\}$.
  If $p$ wins in $\Exec(\Gamma, \sigma)$, then
  $\Exec(\Gamma, \sigma)|P = \Exec(\Gamma, \sigma|P^\Delta)$, and $\Cache_q\big(\Conf(\Gamma, \sigma)\big) = \Cache_q\big(\Conf(\Gamma, \sigma|P)\big)$, for all $q \in P$.
\end{claim}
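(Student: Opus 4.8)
The plan is to obtain both equalities by instantiating the general projection results already proved, namely \cref{cor:projection_same_execution_from_safe} for the execution and \cref{claim:projection_cache} for the caches, with the set $P=\mathcal{P}\setminus\{p\}$. Both of those results apply to projections onto any superset of the lost processes of a \emph{safe} configuration, so the whole task reduces to checking that this particular $P$ is such a superset.

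First I would verify the one hypothesis that is not completely immediate: $\Lost(C)\subseteq P$. Since $p$ wins in $\Exec(\Gamma,\sigma)$, process $p$ terminates returning $win$, and because a leader election algorithm is binary no terminating process returns both values; hence $p\notin\Lost(C)$, which gives $\Lost(C)\subseteq\mathcal{P}\setminus\{p\}=P$. Together with the assumed safety of $C=\Conf(\Gamma,\sigma)$, this is exactly the precondition of both invoked statements. Applying \cref{cor:projection_same_execution_from_safe} then yields $\Exec(\Gamma,\sigma)|P=\Exec(\Gamma,\sigma|P^\Delta)$, and applying \cref{claim:projection_cache} yields $\Cache_q(\Conf(\Gamma,\sigma))=\Cache_q(\Conf(\Gamma,\sigma|P^\Delta))$ for every $q\in P$.

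There is essentially no obstacle beyond this bookkeeping; all the real content lives in \cref{claim:projection_same_execution} and \cref{claim:projection_cache}, where the safety conditions (S1) and (S2) (via the hidden-process definitions (H1)/(H2)) guarantee that erasing a non-lost process is invisible to the survivors. The one point I would be careful about is the projection symbol in the cache equality: I read the stated $\sigma|P$ as $\sigma|P^\Delta$, consistent with the first equality and with the statement of \cref{claim:projection_cache}. This matters because $\sigma|P$ and $\sigma|P^\Delta$ differ precisely in the abort symbols of processes in $P$, and dropping an abort of some $q\in P$ can change whether $q$ has terminated and hence change $\Cache_q$; keeping those aborts (i.e.\ using $P^\Delta$) is what makes the equality hold.
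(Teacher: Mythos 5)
Your proposal matches the paper's own proof exactly: both observe that $p$ winning implies $\Lost(C)\subseteq P$ and then invoke \cref{cor:projection_same_execution_from_safe} and \cref{claim:projection_cache}. Your remark that the cache equality should read $\sigma|P^\Delta$ rather than $\sigma|P$ is a fair catch of a typo in the statement, consistent with how the claim is used later.
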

\begin{proof}
  Because $p$ wins in $\Exec(\Gamma, \sigma)$, we have $\Lost(C) \subseteq P \subseteq \mathcal{P}$. 
  Now the claim follows immediately from tje fact hat $C$ is safe and \cref{cor:projection_same_execution_from_safe,claim:projection_cache}. 
\end{proof}

\subsubsection{Safe Configurations}
The following claims and lemmas describe the properties of safe configurations.
First we show that if starting in a safe configuration, a process that has not yet received the abort signal takes a step which does not incur an RMR, then the resulting configuration is also safe.

\begin{claim}
  \label{claim:zeroRMRsafe}
  Let $C$ be a safe configuration and $x\in\Proc(\Sched{C})$, such that $x^{\top}$ does not appear in $\Sched{C}$.
  If $\RMR\big(\Exec(C, x)\big) = 0$, then $C' = \Conf(C, x)$ is safe.
\end{claim}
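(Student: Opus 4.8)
The plan is to show that a single local step $x$ taken by a process that has not aborted preserves both safety conditions (S1) and (S2). Since $\RMR\big(\Exec(C,x)\big)=0$, the step $x$ must be a local operation: either a read or write of a register in $\RR_x$, or a read of a register of which $x$ already has a valid cache copy in $C$. The overall strategy is to analyze how this one step can change the three knowledge sets $K_1,K_2,K_3$ and the hidden sets $H_r$, and to verify that any newly created knowledge pair or newly unhidden process still satisfies the required conditions because $C$ was already safe.

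First I would verify (S1), namely that for every $(p,q)\in\K(C')$ we have $q\in\Lost(C')$. Since $\Lost(C)\subseteq\Lost(C')$ (no process un-loses), it suffices to examine pairs that are newly added by step $x$. I would argue that $x$ being local severely restricts what new pairs can appear: a new pair in $K_1(C')$ would require $x$ to read a register on which some $q$ is visible, but a local read is either of a register in $\RR_x$ or of a cached copy, and in the latter case $x$ already accessed that register in $\E{C}$ so the pair was already present; for the $\RR_x$ case, I would invoke (H2)-type reasoning to conclude that any process visible on a register in $\RR_x$ has already lost. A new pair in $K_2(C')$ or $K_3(C')$ involving the moving process $x$ would similarly be controlled by the assumptions. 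The key observation is that since $x$ incurs no RMR, it gains no genuinely new remote information, so essentially no new knowledge pair with a non-lost second coordinate can arise.

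Next I would verify (S2): for every $p\notin H(C')$, either $p\in\Lost(C')$ or $p$ takes no shared memory step in $\E{C'}$. Here the subtle point is that a write by $x$ to a register $r\in\RR_x$ can change the hidden status of other processes, since $H_r$ depends on which writes to $r$ occur and whether overwriting processes have lost. I would check case (H2): a write by $x$ to its own register $r\in\RR_x$ does not remove any $q\neq x$ from $H_r(C)$ because the set of \emph{other} processes writing $r$ is unchanged. For (H1) on remote registers, $x$ takes no write outside $\RR_x$ in a local step unless it is a cached read, which performs no write at all. The only process whose hidden status could newly fail is $x$ itself, and I would handle this by noting that if $x$ writes, it may leave $H(C')$, but then I must show $x\in\Lost(C')$ or that the step does not actually unhide $x$—using that $x$ writing to its own $\RR_x$ register keeps $x\in H_r(C')$ under (H2), and on remote registers $x$ does not write.

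The main obstacle I anticipate is the careful bookkeeping of the hidden-set condition for the moving process $x$ itself, together with the interaction between a local write to $\RR_x$ and the sets $K_2, K_3$ which track cache-invalidation and local-scan information. Specifically, a write to $r\in\RR_x$ could create pairs of the form $(q,x)\in K_2(C')$ or $(x,q)\in K_3(C')$, and I would need to argue that such pairs either already existed in $C$ or have their second coordinate in $\Lost(C')$. I expect this to reduce to the observation that these sets only grow through steps that were already counted, and that the full-information model together with safety of $C$ forces any relevant ``other'' process to have already lost. The bulk of the proof will be a methodical case distinction on whether $x$ is a read or a write, and whether the accessed register lies in $\RR_x$ or is cached, checking (S1) and (S2) in each case; none of the individual cases should be deep, but the completeness of the case analysis is where care is required.
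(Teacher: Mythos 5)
Your overall strategy coincides with the paper's: analyze how the single zero-RMR step can alter $K_1$, $K_2$, $K_3$ and the sets $H_r$, and use the safety of $C$ to rule out every bad change. However, at the two genuinely delicate sub-cases your sketch either asserts something false or points at the wrong difficulty, so as written it has gaps.

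First, for (S1) the hard case is not the one you flag as the ``main obstacle.'' A local write by $x$ to $r\in\RR_x$ creates no new pairs $(q,x)\in K_2$ or $(x,q)\in K_3$ at all: it adds no read, $x$ writing into its own segment cannot put a pair with distinct components into $K_3$, and $s$ is not $x$'s first shared memory step because $x\in\Proc(\Sched{C})$ (a fact you never record but need repeatedly). The real difficulty is a cached read: if $x$ reads $r\in\RR_q$ without an RMR because $r\in\Cache_x(C)$, the cache copy may be valid because $x$ previously \emph{wrote} $r$ rather than read it. Then $(x,q)\in K_2(C')\setminus K_2(C)$ is a genuinely new pair, and your assertion that ``the pair was already present'' fails. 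The paper closes this by observing that $x$'s earlier, un-overwritten write to a register in $\RR_q$ makes $q\notin H_r(C)$ by (H2), so (S2) applied to the safe configuration $C$ forces $q\in\Lost(C)$, which is exactly what (S1) for $C'$ requires.

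Second, for (S2) your statement that a write by $x$ to $r\in\RR_x$ ``does not remove any $q\neq x$ from $H_r(C)$ because the set of other processes writing $r$ is unchanged'' conflates (H1) and (H2). That reasoning is valid only for $x$'s own status on $r$, which is governed by (H2). For $q\neq x$ the register $r\in\RR_x$ is remote to $q$, so $q$'s membership in $H_r$ is governed by (H1), and the set of writes to $r$ occurring after $q$'s last access \emph{does} change: it now contains $x$'s write, and $x\notin\Lost(C')$. To close this you must argue separately that no problematic $q$ exists: if $q$ had read $r\in\RR_x$ in $\E{C}$ then $(q,x)\in K_2(C)$ and (S1) would force $x\in\Lost(C)$, impossible since $x$ takes step $s$; if $q$ had written $r\in\RR_x$ then $(x,q)\in K_3(C)$ and (S1) gives $q\in\Lost(C)$, so (S2) holds for $q$ vacuously. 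Both repairs stay entirely within your framework, but neither appears in your sketch, and they sit precisely where the content of the claim lies.
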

\begin{proof} 
  Let $s$ be the single step $\Exec(C, x)$, and $r$ the register accessed in $s$. 
  Since $x$ takes at least one shared memory step in $\E{C}$ (because $x \in \Proc(\Sched{C})$ and $x^\top$ does not appear in $\Sched{C}$), 
  \begin{equation}
  \label{eq:sNotFirstSharedMemoryStep}
  s\text{ is not }x\text{'s first shared memory step in }\E{C} \circ s.
  \end{equation}
  
  Suppose $s$ does not incur an RMR.
  To prove that $C'$ is safe, we will first show that $C'$ satisfies (S1).
  Suppose not.
  Then there exists a pair $(p, q) \in \K(C')$, such that $q \notin \Lost(C')$.
  Since $\Lost(C) \subseteq \Lost(C')$
  \begin{equation}\label{eq:zeroRMRsafe:q_not_lost}
  q \notin \Lost(C).
  \end{equation}
  Since $C$ is safe, $(p, q) \notin \K(C)$, i.e., 
  \begin{equation}\label{eq:zeroRMRsafe:p_gets_to_know_q}
  (p,q)\in K(C')\setminus K(C).
  \end{equation}
  By \cref{claim:KChangeSharedMemoryStep}, 
  \begin{equation}
  \label{eq:xInpq}
  x \in \{p, q\}.
  \end{equation}
  By \cref{eq:zeroRMRsafe:p_gets_to_know_q} there is an index $j\in\{1,2,3\}$ such that $(p,q)\in K_j(C')\setminus K_j(C)$.
  For each of $j\in\{1,2,3\}$ we will show that this is impossible.
  
  If $(p, q) \in K_1(C') \setminus K_1(C)$, then in step $s$ process $p$ reads a register $r$ while process $q$ is visible on $r$. Therefore, the last write to $r$ in $\E{C}$ is by $q$.
  If $r \in \RR_p$, then $(p, q) \in K_3(C)$, which contradicts \cref{eq:zeroRMRsafe:p_gets_to_know_q}. Hence, $r \notin \RR_p$.
  Because $(p, q) \notin K_1(C)$, $p$ does not read $r$ in $\E{C}$ at a point when $q$ is visible on $r$. More specifically, $p$ does not read the value $\Val_C(r)$ from $r$ in $\E{C}$. Thus, in $C$ process $p$ does not have a valid cache copy of $r$. Hence, step $s$ incurs an RMR, which is a contradiction.
  
  Now assume $(p, q) \in K_2(C') \setminus K_2(C)$.
  Since by \cref{eq:sNotFirstSharedMemoryStep}, $s$ is not $q$'s first shared memory step in $\E{C} \circ s$, in step $s$ process $p$ reads $r\in\RR_q$, and $p$ does not read any register in $\RR_q$ throughout $\E{C}$.
  Hence,
  \begin{equation}
  \label{eq:qTakesStepInEC}
  q \text{ takes at least one shared memory step in } \E{C}.
  \end{equation}
  Since $s$ does not incur an RMR, $r \in \Cache_p(C)$, and so $p$ reads or writes $r$ in $\E{C}$, and no other process writes $r$ after that.
  If $p$ reads $r \in \RR_q$ during $\E{C}$, then by \cref{eq:qTakesStepInEC} $(p, q) \in K_2(C)$, which is a contradiction.
  Hence, in $\E{C}$ process $p$ writes $r$, and no other process writes $r$ after that.
  Since $r \in \RR_q$ and $p \notin \Lost(C)$ (as in $C$ process $p$ is poised to executes step $s$), we have $q \notin H_r(C)$ according to (H2), and thus, $q \notin \HH(C)$.
  By \cref{eq:zeroRMRsafe:q_not_lost}, $q \notin \Lost(C)$ and by the claim assumption $q$ takes at least one step in $\E{C}$.
  Therefore, (S2) is not satisfied, which contradicts the assumption that $C$ is safe.
  
  If $(p, q) \in K_3(C') \setminus K_3(C)$, then either $s$ is a write by process $q$ and $r \in \RR_p$, or $s$ is $p$'s first shared memory step. 
  The latter is not possible because of \cref{eq:sNotFirstSharedMemoryStep}.
  And if the former is the case, then $s$ incurs an RMR, which contradicts the assumption that $\RMR\big(\Exec(C, x)\big) = 0$.
  Thus, we have shown that $C'$ satisfies (S1).
  
  We will now prove that $C'$ also satisfies (S2).
  Suppose not.
  Then there exists a process $p \notin \HH(C')$, such that $p \notin \Lost(C')$ and $p$ takes at least one shared memory step in $\E{C'}$.
  Since $\Lost(C)\subseteq \Lost(C')$, we have $p \notin \Lost(C)$.
  
  Recall that $C$ is safe.
  If $p \notin \HH(C)$, then by (S2) process $p$ takes no shared memory steps in $\E{C}$.  
  As $p$ takes a shared memory step in $\E{C'} = \E{C} \circ s$ we have $x = p$, and in particular $s$ is $x$'s first shared memory step.
  This contradicts \cref{eq:sNotFirstSharedMemoryStep}.
  
  If $p \in \HH(C)$, then $p \in \HH(C) \setminus \HH(C')$, which means there exists some register $v$, such that $p \in H_v(C) \setminus H_v(C')$.
  If $v \in \RR_p$, then since $p \notin \HH(C')$, by (H2) in $\E{C'}$ some process $z \notin \Lost(C')$, $z \neq p$, writes to $v$.
  Then $z \notin \Lost(C)$, and so since $p \in \HH(C)$, by (H2) process $z$ does not write $v$ in $\E{C}$.
  Hence, $\Exec(C, x)$ is a write to $v \in \RR_p$ by $z \neq p$, and this write incurs an RMR. 
  This contradicts the claim assumption, $\RMR\big(\Exec(C, x)\big) = 0$.
  
  Now suppose $v \notin \RR_p$.
  Let $q' \neq p$ be the process such that $v \in \RR_{q'}$.
  Because $p \in H_v(C)$, there is a non-empty set $Z$ of processes that write $v$ after $p$'s last access of $v$ during $\E{C}$, and $Z \cap \Lost(C) \neq \emptyset$.
Since $\Lost(C) \subseteq \Lost(C')$, we have  $Z \cap \Lost(C') \neq \emptyset$.
  If step $s$ is not an access of register $v$, $Z$ is also the set of processes that write to $v$ after $p$'s last access of $v$ during $\E{C'}$. So $p$ is in $H_v(C')$.
  If step $s$ is an access of register $v$, then because $\RMR\big(\Exec(C, x)\big) = 0$, process $p$ is not the process performing step $s$.
  Thus, $Z$ is a subset of processes that write to $v$ after $p$'s last access of $v$ during $\E{C'}$. Hence, $p$ is in $H_v(C')$.
\end{proof}

We now show that a process $p$,  which executes a solo-run starting from a safe configuration, must eventually either terminate or incur an RMR.

\begin{claim}
  \label{claim:terminateorRMR}
  Let $C$ be a safe configuration, and let $p$ be an arbitrary process in $\Proc(\Sched{C}) \setminus \Lost(C)$, such that $p^\top$ does not appear in $\Sched{C}$.
  There exists a non-negative integer $k$, such that
  in $\Exec(C, p^k)$, process $p$ terminates or incurs an RMR.
\end{claim}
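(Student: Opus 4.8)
The plan is to argue by contradiction: assume that for every $k$, process $p$ neither terminates nor incurs an RMR in $\Exec(C, p^k)$, so that $p$'s solo run from $C$ is infinite and entirely RMR-free. The goal is to turn this solo run into a \emph{fair} execution in which $p$ never terminates, contradicting deadlock-freedom.

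First I would record two consequences of the contradiction hypothesis. Since each step of $p$'s solo run incurs no RMR and $p^\top$ never appears, repeatedly applying \cref{claim:zeroRMRsafe} shows that every configuration $\Conf(C, p^i)$ is safe. Moreover, because an RMR-free step of $p$ is either an access of a register in $\RR_p$ or a read of a register in $p$'s cache, throughout the solo run $p$ writes only to $\RR_p$ and reads only registers in $\Cache_p(C)$, whose values (being remote and never rewritten by $p$, the only process taking steps) stay fixed. Thus $p$'s entire infinite solo run is determined by $p$'s state together with the values of the registers in $\Cache_p(C)$.

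The obstacle is that the fair execution must schedule every \emph{participating} process infinitely often, and in $C$ there may be started, not-yet-lost processes other than $p$; scheduling them could overwrite registers $p$ has cached and derail $p$'s solo run. This is exactly what safety lets me avoid. Writing $\sigma = \Sched{C}$ and $P = \Lost(C) \cup \{p\}$, I would invoke \cref{cor:projection_same_execution_from_safe} and \cref{claim:projection_cache} (both applicable because $C$ is safe and $\Lost(C) \subseteq P$) to pass to $C' = \Conf(\Gamma, \sigma | P^\Delta)$, which is again safe and in which $p$ has the same state and the same cache as in $C$, with matching values on all cached registers. Hence $p$'s solo run from $C'$ takes exactly the same steps, so it is again infinite and RMR-free. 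Crucially, in $C'$ only processes of $P$ have moved, and every such process other than $p$ lies in $\Lost(C)$ and has therefore terminated.

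Finally I would consider the infinite execution $\Exec\big(\Gamma, (\sigma | P^\Delta) \circ p^\omega\big)$ obtained by letting $p$ run solo forever from $C'$. Its only participating process that is not already terminated is $p$, so scheduling $p$ infinitely often gives a schedule that is fair with respect to all participating processes, while $p$ never terminates --- contradicting deadlock-freedom. The main technical point to get right is the passage to $C'$: that erasing the interfering started processes via projection genuinely preserves $p$'s solo run, which is where safety (through the projection lemmas and the preservation of cache contents) does the real work. The appeal to deadlock-freedom afterwards must be read, as elsewhere in the paper, as applying to the set of participating processes.
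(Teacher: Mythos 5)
Your proposal is correct and follows essentially the same route as the paper's proof: assume an infinite RMR-free solo run, use \cref{claim:zeroRMRsafe} to keep configurations safe, project onto $P=\Lost(C)\cup\{p\}$ via \cref{cor:projection_same_execution_from_safe} (and \cref{claim:projection_cache}) to preserve $p$'s solo run, and then observe that the resulting infinite schedule is fair yet $p$ never terminates, contradicting deadlock-freedom. The only cosmetic difference is that the paper applies the projection corollary to each extended configuration $\Conf(C,p^t)$ rather than arguing directly that $p$'s run is determined by its state and cache contents; both justifications are sound.
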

\begin{proof}
  Assume that there exists a process $p$ that does not terminate and does not incur any RMRs in an infinite solo-run starting in $C$. 
  Let $P = \Lost(C) \cup \{p\}$ and $\sigma = \Sched{C}$. 
  Since $C$ is safe, $p \in \Proc(\Sched{C})$, and $p$ incurs no RMRs in its solo-run starting in $C$, the conditions of \cref{claim:zeroRMRsafe} are met. Hence, for any non-negative integer $t$, by applying \cref{claim:zeroRMRsafe} $t$ times,
  \begin{equation}
  \label{eq:CtIsSafe}
  C_t = \Conf(C, p^t) = \Conf(\Gamma, \sigma \circ p^t) \text{ is safe.}
  \end{equation}
  Since only $p$ takes steps in $\Exec(C, p^t)$, and $p$ does not terminate in its solo-run starting in $C$, we obtain $\Lost(\Conf(C, p^t))=L(C)\subseteq P$.
  This together with 
  \cref{eq:CtIsSafe}
  allows us to apply 
  \cref{cor:projection_same_execution_from_safe} to obtain
  \begin{equation}
  \Exec(\Gamma,\sigma \circ p^t) | P \stackrel{\text{\cref{cor:projection_same_execution_from_safe}}}{=} \Exec\big(\Gamma, (\sigma \circ p^t) | P^\Delta \big) = \Exec\big(\Gamma, (\sigma | P^\Delta) \circ p^t \big).
  \end{equation}
  Therefore, if process $p$ does not terminate or incur any RMRs in its $t$-step solo-run starting in $C$, then $p$ does not terminate or incur any RMRs in its $t$-step solo-run starting in $\Conf(\Gamma, \sigma|P^\Delta)$.
  Since this is true for all $t \geq 0$, in the infinite execution $\Exec(\Gamma, \sigma')$, where $\sigma' = (\sigma | P^\Delta) \circ p \circ p \circ ...$, process $p$ does not terminate.
  But schedule $\sigma'$ is fair, because each process in $\Proc(\sigma')\setminus\{p\}$ is in $L(C)$ and thus loses in $\Exec(\Gamma, \sigma')$, and $p$ performs infinitely many shared memory steps.
  This contradicts deadlock-freedom.
\end{proof}

If at the end of an execution, which starts in a safe configuration, a process that terminates knows the same set of processes as in the beginning of that execution, then that process returns win.

\begin{claim}
  \label{claim:winsStartingInSafe}
  Let $C$ be a safe configuration, $p\in\mathcal{P} \setminus \Lost(C)$, and $\sigma$ a schedule, such that $p^{\top}$ does not appear in $\Sched{C}\circ\sigma$, and
  \begin{equation}
  \label{eq:pKnowsTheSame}
  \text{for any $(p,q)\in\K\bparen{\Conf(C, \sigma)}$ either $q \in \Lost(C)$ or $(p,q)\in\K(C)$}.
  \end{equation}
  If $p$ terminates in $\Exec(C, \sigma)$, then $p$ wins.
\end{claim}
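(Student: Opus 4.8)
The plan is to argue that, from $p$'s perspective, $\Exec(C,\sigma)$ is indistinguishable from a \emph{solo} run, and then to produce an execution with the same return value for $p$ whose only participants are $p$ and processes that have already lost, so that the safety condition forces $p$ to win.

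First I would reduce to a solo run of $p$ from $C$. Let $m$ be the (finite) number of steps $p$ takes in $\sigma$; I claim $\Exec(C,\sigma)\,|\,p=\Exec(C,p^m)\,|\,p$. The crucial observation is that whenever $p$ reads a register $r$ during $\sigma$ and some process $q\neq p$ is visible on $r$ at that moment, then $(p,q)\in K_1\bparen{\Conf(C,\sigma)}\subseteq\K\bparen{\Conf(C,\sigma)}$, so by \eqref{eq:pKnowsTheSame} together with property (S1) of the safe configuration $C$ we get $q\in\Lost(C)$. A lost process is halted and takes no step after $C$, so $q$'s last write to $r$ already occurred in $\E{C}$, and no register $p$ reads is overwritten during $\sigma$ before $p$'s read; hence $p$ always reads $\Val_C(r)$ (or a value it wrote itself). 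Since that is precisely what $p$ reads in $\Exec(C,p^m)$, an induction on $p$'s steps shows that $p$ traverses the same states and, in particular, terminates with the same return value $v$ in both executions.

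Next I would remove every process other than $p$ and the processes already lost in $C$. Put $P=\Lost(C)\cup\{p\}$ and project the schedule $\Sched{C}\circ p^m$ onto $P^\Delta$. Because only $p$ moves after $C$, we have $\Lost\bparen{\Conf(C,p^m)}\subseteq P$, and using \cref{claim:projection_same_execution} I would obtain
\[
\Exec\bparen{\Gamma,\Sched{C}\circ p^m}\,\big|\,P=\Exec\bparen{\Gamma,(\Sched{C}\circ p^m)\,|\,P^\Delta}.
\]
In this projected execution $p$ still returns $v$, every process in $\Lost(C)$ still returns $lose$, and no process outside $P$ participates. Now suppose, for contradiction, that $v=lose$. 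Then every process in $\Proc\bparen{(\Sched{C}\circ p^m)\,|\,P^\Delta}$ returns $lose$, so the safety property of abortable leader election forces all of them to have received the abort signal; but $p^\top$ appears neither in $\Sched{C}\circ\sigma$ nor in its projection, so $p$ never aborts---a contradiction. Hence $v=win$, and since $p$ returns $v$ in $\Exec(C,\sigma)$, it wins there.

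The step I expect to be most delicate is the projection: \cref{claim:projection_same_execution,cor:projection_same_execution_from_safe} are stated for safe configurations, yet $\Conf(C,p^m)$ need not satisfy the full condition (S1)---e.g.\ some $p'\in\Lost(C)$ may hold a stale cache copy of a register in $p$'s local segment, yielding a pair $(p',p)\in K_2$ with $p\notin\Lost(C)$. I would resolve this by observing that execution equality depends only on reads returning matching values, and that the only writers a retained process in $P$ actually observes (those witnessed through $K_1$) are, by the argument above, themselves in $\Lost(C)\subseteq P$ and hence retained, whereas the offending $K_2$-pairs concern only cache invalidation and are irrelevant here. Thus the hypothesis that the proof of \cref{claim:projection_same_execution} genuinely uses---that every value read by a process in $P$ is written by a process in $P$---does hold, so the projection goes through.
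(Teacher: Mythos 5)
Your proof is correct and follows essentially the same route as the paper's: project the execution onto $\Lost(C)\cup\{p\}$ via \cref{claim:projection_same_execution} (justified through \cref{eq:pKnowsTheSame} and (S1)), then observe that if $p$ lost, every participant of the projected execution would lose, so by the safety property of abortable leader election all of them---including $p$---must have received the abort signal, contradicting the absence of $p^\top$ from $\Sched{C}\circ\sigma$. The only deviations are your explicit intermediate reduction of $\Exec(C,\sigma)$ to the solo run $\Exec(C,p^m)$, which is sound but not needed (the paper projects $\Sched{C}\circ\sigma$ directly, and the lost processes contribute no further steps anyway), and your closing remark about which instances of the hypothesis of \cref{claim:projection_same_execution} are genuinely used---a fair observation that applies equally to the paper's own, similarly informal, invocation of that claim.
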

\begin{proof}
  Let $C' = \Conf(C, \sigma)$ and $P = \Lost(C) \cup \{p\}$.
  First note that for any pair $(p,q)\in\K(C')$ either $q \in \Lost(C)$ or $(p,q)\in\K(C)$ by \cref{eq:pKnowsTheSame}, and since $C$ is safe, $q\in\Lost(C)$ according to (S1).
  Thus, we can apply \cref{claim:projection_same_execution} to configuration $C'$ and obtain
  \begin{displaymath}
  \Exec(\Gamma, \Sched{C'} | P^\Delta) = \E{C'} | P.
  \end{displaymath}
  If $p$ terminates in $\Exec(C, \sigma)$, then $p$ also terminates in $\Exec(\Gamma,\Sched{C}\circ\sigma)=\E{C'}$, and thus by the above in $\Exec(\Gamma, \Sched{C'} | P^\Delta)$.
  Thus, it suffices to show that $p$ does not lose in that execution.
  Suppose it does lose.
  Since $\Exec(\Gamma, \Sched{C})$ is a prefix of $\Exec(\Gamma, \Sched{C'} | P^\Delta)$, and all processes in $P\setminus\{p\}$ lose in $\Exec(\Gamma, \Sched{C})$ (we defined $P=\Lost(C) \cup \{p\}$), all processes lose in $\Exec(\Gamma, \Sched{C'} | P^\Delta)$.
  By the safety property of abortable leader election, then all processes that take at least one step in that execution must receive the abort signal.
  In particular, $p$ receives the abort signal in $\Exec(\Gamma, \Sched{C'} | P^\Delta)$, and thus $p^\top$ appears in $\Sched{C'}=\Sched{C}\circ\sigma$.
  This contradicts the claim assumption.
\end{proof}

Starting in a safe configuration, if a process does not get to know any process in its solo execution, then that process wins in its solo-run.

\begin{lemma}
  \label{lemma:winsInaSolorun}
  Let $C$ be a safe configuration, and $p \in \Proc(\Sched{C}) \setminus \Lost(C)$, such that $p^\top$ does not appear in $\Sched{C}$, and
  \begin{equation}
  \label{eq:pDoesntKnowNew}
  \text{for any $k\in \IIN$ and any $(p,q)\in\K\bparen{\Conf(C, p^k)}$ it holds $(p,q)\in\K(C)$.}
  \end{equation}
  Then process $p$ wins in its solo-run starting in $C$.
\end{lemma}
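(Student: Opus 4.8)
The plan is to reduce the statement to a pure termination property and then invoke \cref{claim:winsStartingInSafe}. Indeed, hypothesis \cref{eq:pDoesntKnowNew} asserts that every pair $(p,q)\in\K\big(\Conf(C,p^k)\big)$ already lies in $\K(C)$, which is exactly (the second disjunct of) the knowledge precondition of \cref{claim:winsStartingInSafe} for the schedule $\sigma=p^k$; moreover, since $p^\top$ does not appear in $\Sched{C}$, it does not appear in $\Sched{C}\circ p^k$ either. Hence, once we know that $p$ terminates in $\Exec(C,p^k)$ for some $k$, \cref{claim:winsStartingInSafe} immediately yields that $p$ wins, which is the assertion. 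So the only thing left to show is that $p$'s solo-run starting in $C$ terminates. (If $p$ has already terminated in $C$, it cannot have lost, so it has won, and there is nothing to prove.)

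To prove termination I would argue by contradiction with deadlock-freedom, using a projection onto the lost processes together with $p$. Suppose $p$ never terminates in its solo-run starting in $C$. Set $P=\Lost(C)\cup\{p\}$ and $C''=\Conf(\Gamma,\Sched{C}|P^\Delta)$. Since $C$ is safe and $\Lost(C)\subseteq P$, \cref{cor:projection_same_execution_from_safe} gives $\E{C}|P=\E{C''}$. In particular $\E{C}|p=\E{C''}|p$, so $p$ is in the same state in $C$ as in $C''$, and every process in $\Lost(C)$ has already terminated (with \emph{lose}) in $C''$.

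The heart of the argument is to show that $p$'s solo-run from $C$ and its solo-run from $C''$ are indistinguishable to $p$, i.e.\ $p$ performs the same steps and reads the same values in both. I would prove this by induction on the number $t$ of solo-steps, mirroring the argument of \cref{claim:projection_same_execution}. The base case $t=0$ holds because $p$ has the same state in $C$ and $C''$. In the inductive step $p$ is in the same state in $\Conf(C,p^t)$ and $\Conf(C'',p^t)$, so it accesses the same register $r$ with the same operation, and writes agree since they are determined by $p$'s state. For a read, the only nontrivial case is that the last writer of $r$ in the run from $C$ is some $q\neq p$ not overwritten by $p$ during the solo-run; then $q$ is visible on $r$, so $(p,q)\in K_1\big(\Conf(C,p^{t+1})\big)\subseteq\K\big(\Conf(C,p^{t+1})\big)$, whence $(p,q)\in\K(C)$ by \cref{eq:pDoesntKnowNew} and $q\in\Lost(C)\subseteq P$ by property (S1). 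As $q\in P$, the identity $\E{C}|P=\E{C''}$ preserves $q$'s write and its position as the last write to $r$, so the read returns the same value in both runs; the subcases where $r$ was last written by $p$ or never written are analogous and use only $\E{C}|p=\E{C''}|p$. Consequently, a non-terminating solo-run from $C$ gives a non-terminating solo-run from $C''$. Then the schedule $\tau=(\Sched{C}|P^\Delta)\circ p\circ p\circ\cdots$ is fair, since every process in $\Proc(\tau)\setminus\{p\}\subseteq\Lost(C)$ has terminated while $p$ takes infinitely many steps, yet $p$ never terminates in $\Exec(\Gamma,\tau)$, contradicting deadlock-freedom. Hence $p$ terminates in $\Exec(C,p^k)$ for some $k$, and the reduction above completes the proof.

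The step I expect to be the main obstacle is the indistinguishability induction. It is essentially \cref{claim:projection_same_execution} applied to the schedule $\Sched{C}\circ p^t$, but that claim cannot be cited verbatim: the intermediate configuration $\Conf(C,p^t)$ need not be safe, because $p$ may write into the local segment of a non-lost hidden process $q$, creating a pair $(q,p)\in K_3$ that violates (S1). The key observation is that such a pair has $q\notin P$ and is therefore irrelevant to the projection onto $P$; the projection only needs value-consistency for reads performed by processes in $P$, and \cref{eq:pDoesntKnowNew} supplies precisely this for $p$'s reads. Carefully separating ``knowledge that $p$ gains'' (constrained by the hypothesis) from ``knowledge that others gain about $p$'' (irrelevant here) is the delicate part.
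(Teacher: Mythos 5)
Your proposal is correct and follows essentially the same route as the paper's proof: reduce to termination via \cref{claim:winsStartingInSafe}, project onto $P=\Lost(C)\cup\{p\}$ using \cref{cor:projection_same_execution_from_safe}, prove by induction on the solo-run length that $p$ cannot distinguish the two runs (with the key read case resolved by \cref{eq:pDoesntKnowNew} and (S1) forcing the last writer into $\Lost(C)\subseteq P$), and then invoke deadlock-freedom on the resulting fair schedule. Your closing remark about why \cref{claim:projection_same_execution} cannot be cited verbatim for the intermediate configurations is exactly the reason the paper carries out the induction explicitly.
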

\begin{proof}
  We prove that $p$ terminates in $\Exec(C, p^k)$, for some positive integer $k$. Then by \cref{eq:pDoesntKnowNew}, and \cref{claim:winsStartingInSafe}, $p$ wins in its solo-run starting in $C$, and the lemma follows.
  
  Let $P = \Lost(C) \cup \{p\}$.
  Since $C$ is safe, by \cref{cor:projection_same_execution_from_safe},
  \begin{equation}
  \label{eq:projectionOnP_sameExecution}
  \Exec(\Gamma, \Sched{C} | P^\Delta) = \E{C} | P.
  \end{equation}
  We will show by induction for all $k\geq 0$ that
  \begin{equation}\label{eq:winsInaSolorun:projection}
  \Exec\big(\Gamma, \Sched{C} \circ p^k \big) | p = \Exec\big(\Gamma, (\Sched{C} | P^\Delta) \circ p^k \big) | p.
  \end{equation}
  Note that in $\Exec\big(\Gamma, (\Sched{C} | P^\Delta) \circ p^k \big)$ all processes in $P\setminus\{p\}=\Lost(C)$ lose. 
  Hence, by deadlock-freedom, there is an integer $k_0\in\IIN$ such that $p$ terminates in $\Exec\big(\Gamma, (\Sched{C} | P^\Delta) \circ p^{k_0} \big)$.
  Then by \cref{eq:winsInaSolorun:projection} $p$ also terminates in $\Exec\big(\Gamma, \Sched{C} \circ p^{k_0} \big)$, and by \cref{claim:winsStartingInSafe} it wins in that execution.
  Thus, $p$ wins in a solo-run starting in $C$.
  
  It remains to prove the inductive hypothesis \cref{eq:winsInaSolorun:projection}.
  By \cref{eq:projectionOnP_sameExecution} the hypothesis is true for $k=0$.
  Now assume \cref{eq:winsInaSolorun:projection} is true for some integer $k\geq 0$. 
  Let $x$ be the the last step in $\Exec(\Gamma, \Sched{C} \circ p^{k+1}) $, and $y$ the last step in $\Exec\big(\Gamma, (\Sched{C} | P^\Delta) \circ p^{k+1} \big) $.
  To complete the inductive step, it suffices to show that $x=y$.
  By the inductive hypothesis, $p$ is in the same state in $\Conf(\Gamma, \Sched{C} \circ p^{k})$ as in $\Conf(\Gamma, (\Sched{C} | P^\Delta) \circ p^{k} \big)$.
  Thus, either $x$ and $y$ are both read steps, or they are both write steps, and in the latter case, the value written in step $x$ also gets written in step $y$.
  Thus, if $x$ and $y$ are both write steps, then $x=y$.
  
  Hence, assume $x$ and $y$ are both read steps.
  In that case, $p$ reads the same register $r$ in $x$ as in $y$.
  Let $(a, b)$ be the value $p$ reads in $x$, and $(c, d)$ the value $p$ reads in $y$.
  It suffices to show that $(a,b)=(c,d)$.
  
  First assume that $r$ gets written in the last $k$ steps of $\Exec(\Gamma,\Sched{C}\circ p^k)$.
  Then it must be $p$ that writes $(a,b)$ to $r$ itself (i.e., $a=p$), and by the inductive hypothesis \cref{eq:winsInaSolorun:projection}, $p$ writes the same pair in the last $k$ steps of $\Exec(\Gamma,(\Sched{C}|P^\Delta)\circ p^k)$.
  Moreover, in neither execution it writes to $r$ after writing $(a,b)$ to that register.
  Hence, $(a,b)=(c,d)$.
  
  Now assume that $r$ does not get written in the last $k$ steps of $\Exec(\Gamma,\Sched{C}\circ p^k)$.
  Then by the inductive hypothesis, $r$ does not get written in the last $k$ steps of $\Exec(\Gamma,(\Sched{C}|P^\Delta)\circ p^k)$.
  In particular, $r$ has value $(a,b)$ in configuration $C=\Conf(\Gamma,\Sched{C})$, and value $(c,d)$ in configuration $D=\Conf(\Gamma,(\Sched{C}|P^\Delta))$.
  
  First assume no process writes to $r$ in $\Exec(\Gamma,\Sched{C})$.
  Then by \cref{eq:projectionOnP_sameExecution} no process writes to that register in $\Exec(\Gamma,\Sched{C}|P^\Delta)$, so $(a,b)=(c,d)$ is the initial value of $r$.
  
  Hence, suppose $r$ gets written in $\Exec(\Gamma,\Sched{C})$, and thus the last process writing to $r$ in that execution is $a$.
  Recall that in step $x$ process $p$ reads $(a,b)$ from register $r$, so $(p,a)\in\K_1(C_1)\subseteq\K(C_1)$.
  Then $(p,a)\in\K(C)$ by \cref{eq:pDoesntKnowNew}.
  Since $C$ is safe and by (S1), $a \in \Lost(C)\subseteq P$.
  Since $a\in P$ is the last process to write to $r$ in $\Exec(\Gamma,\Sched{C})$, by \cref{eq:projectionOnP_sameExecution}, it is also the last process to write $r$ in $\Exec(\Gamma,\Sched{C}|P^\Delta)$, and in both executions it writes the value $(a,b)$.
  Hence, $r$ has the same value $(a,b)$ in configuration $C$ as in $D$.
\end{proof}

Starting in a safe configuration, if the executions of two schedules from two disjoint sets of processes do not incur any RMRs, then the execution made up of the concatenation of those schedules does not incur any RMRs and the ordering does not matter.

\begin{claim}
  \label{claim:zeroInformation}
  Let $C$ be a safe configuration, and $Q_0, Q_1 \subseteq \Proc(\Sched{C})$ two disjoint sets of processes, such that for any $j \in \{0, 1\}$ there exists $\sigma_j \in (Q_j^\Delta)^\ast$ with $\RMR\big(\Exec(C, \sigma_j)\big) = 0$. Then
  \begin{enumerate}[(a)]
    \item $\Exec(C, \sigma_0 \circ \sigma_1) | Q_j = \Exec(C, \sigma_j)$, for all $j \in \{0, 1\}$, and
    \item $\RMR\big(\Exec(C, \sigma_0 \circ \sigma_1)\big) = 0$.
  \end{enumerate}
\end{claim}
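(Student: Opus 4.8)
The plan is to reduce both parts of the claim to the single statement
\[
\Exec(C_0,\sigma_1)=\Exec(C,\sigma_1)\quad\text{and this execution is RMR-free,}
\]
where $C_0:=\Conf(C,\sigma_0)$, and then to prove that statement by induction on the length of $\sigma_1$, using safety to rule out any cache interference between the two disjoint process groups. First I would note that $\Exec(C,\sigma_0\circ\sigma_1)=\Exec(C,\sigma_0)\circ\Exec(C_0,\sigma_1)$. Since $\sigma_0\in(Q_0^\Delta)^\ast$ and $\sigma_1\in(Q_1^\Delta)^\ast$ with $Q_0\cap Q_1=\emptyset$, projecting to $Q_0$ annihilates the $\sigma_1$-part and projecting to $Q_1$ annihilates the $\sigma_0$-part. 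Hence part (a) for $j=0$ is immediate, namely $\Exec(C,\sigma_0\circ\sigma_1)|Q_0=\Exec(C,\sigma_0)$, while part (a) for $j=1$ gives $\Exec(C,\sigma_0\circ\sigma_1)|Q_1=\Exec(C_0,\sigma_1)$, which equals $\Exec(C,\sigma_1)$ exactly by the reduced statement. Part (b) follows by additivity of the RMR count over concatenations: the $\sigma_0$-part contributes $\RMR(\Exec(C,\sigma_0))=0$ and the $\sigma_1$-part is the sub-execution $\Exec(C_0,\sigma_1)$, which the reduced statement asserts is RMR-free.

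The heart of the argument, and the step I expect to be the main obstacle, is a no-interference lemma. Because $\Exec(C,\sigma_0)$ is RMR-free, every write it performs is local, so the set $W_0$ of registers written during $\sigma_0$ satisfies $W_0\subseteq\bigcup_{p\in Q_0}\RR_p$; consequently $\Val_{C_0}(r)=\Val_C(r)$ for every $r\notin W_0$, and each $q\in Q_1$ is in the same state in $C_0$ as in $C$ (since $\sigma_0$ contains no $Q_1$-step). The crucial point is that \emph{no register of $W_0$ is read by a $Q_1$-process during $\Exec(C,\sigma_1)$}. Suppose $q\in Q_1$ reads $r\in\RR_p$ with $p\in Q_0$ during $\sigma_1$; as $\sigma_1$ is RMR-free and $r$ is remote to $q$, the read is served from a valid cache copy, so $q$ already accessed $r$ in $\E{C}$, and $q$ is not terminated in $C$ (it takes the reading step). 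A remote \emph{write} would make $q$ visible on $r$, hence by the definition of $K_3$ and property (S1) put $q\in\Lost(C)$, contradicting that $q$ is not terminated; so the access was a \emph{read}, giving $(q,p)\in K_2(C)$ and therefore $p\in\Lost(C)$ by (S1). A terminated $p$ takes no step in $\sigma_0$, and no other process can write $r\in\RR_p$ without an RMR, so $r\notin W_0$. This is precisely where the abortable-leader-election structure enters through safety: anyone whose written value a still-active $Q_1$-process depends on must already have lost, and lost processes are exactly the ones $\sigma_0$ cannot move.

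Finally I would prove the reduced statement by induction on a prefix of $\sigma_1$, maintaining the invariant that the two executions $\Exec(C_0,\cdot)$ and $\Exec(C,\cdot)$ agree step for step, that $\Val(r)$ agrees for every $r\notin W_0$, and that each $q\in Q_1$ holds the same valid cache copies of non-$W_0$ registers in both configurations. The acting process is always in the same state, so it issues the same operation: a write goes to $\RR_{Q_1}$, which is disjoint from $W_0$, and behaves identically in both executions; a local read agrees by the value part of the invariant; and a remote read is, by the previous paragraph, of a register outside $W_0$, so by the cache part of the invariant it is still served from the cache and returns the same value, creating no new RMR. Thus the invariant is preserved, which yields $\Exec(C_0,\sigma_1)=\Exec(C,\sigma_1)$ RMR-free and completes the claim. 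The only genuinely non-routine ingredient is the $K_2$/(S1) safety argument of the second paragraph; everything else is bookkeeping on disjoint memory segments.
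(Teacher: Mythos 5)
Your proposal is correct and follows essentially the same route as the paper: part (a) for $j=0$ is immediate from disjointness, and the rest is an induction over $\sigma_1$ whose only non-trivial case is a $Q_1$-process reading a register last written in $\sigma_0$, which is ruled out exactly as in the paper by noting that the $\sigma_0$-writer must own the register (RMR-freeness of $\sigma_0$), the $Q_1$-reader must have cached it in $\E{C}$ (RMR-freeness of $\sigma_1$), and hence $K_2$ or $K_3$ together with (S1) would force one of the two active processes into $\Lost(C)$. Your factoring of this into a standalone ``no-interference'' lemma before the induction is only an organizational difference (and the phrase about a remote write making $q$ ``visible'' is loose --- $K_3$ needs no visibility --- but the $K_3$/(S1) conclusion you draw from it is the right one).
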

\begin{proof}
  In $\Exec(C, \sigma_0 \circ \sigma_1)$ all the steps by processes in $Q_0$ are executed before any of the steps by processes in $Q_1$.
  Thus, using $Q_0 \cap Q_1 = \emptyset$, we obtain $\Exec(C, \sigma_0 \circ \sigma_1) | Q_0 = \Exec(C, \sigma_0)$.
  Hence, Part~(a) is true for $j = 0$.
  We now use induction on $|\sigma_1|$ to prove Part~(a) for $j = 1$, as well as to prove Part~(b).
  
  First consider the base case, $|\sigma_1| = 0$.
  Then $\sigma_0 \circ \sigma_1 = \sigma_0$ and
  \begin{equation}
  \label{eq:baseCaseSig0Sig1EqSig0}
  \Exec(C, \sigma_0 \circ \sigma_1) = \Exec(C, \sigma_0).
  \end{equation}
  Therefore, $\Exec(C, \sigma_0 \circ \sigma_1) | Q_0 = \Exec(C, \sigma_0)$.
  Since $Q_0 \cap Q_1 = \emptyset$ and $\sigma_0 \in (Q_0^\Delta)^\ast$, $\Exec(C, \sigma_0) | Q_1$ is the empty execution, which is equal to $\Exec(C, \sigma_1)$. Thus $\Exec(C, \sigma_0 \circ \sigma_1) | Q_1 = \Exec(C, \sigma_1) | Q_1$.
  This proves Part~(a).
  From the claim's assumption $\RMR\big(\Exec(C, \sigma_0)\big) = 0$, and \cref{eq:baseCaseSig0Sig1EqSig0} we obtain $\RMR\big(\Exec(C, \sigma_0 \circ \sigma_1)\big) = 0$.
  This proves Part~(b).
  
  Now suppose $|\sigma_1|>0$, and the inductive hypothesis has been proven for the prefix $\sigma_1'$ of $\sigma_1$ of length $|\sigma_1|-1$.
  I.e., 
  \begin{equation}
  \label{eq:IHPartA}
  \Exec(C, \sigma_0 \circ \sigma_1') | Q_1 = \Exec(C, \sigma_1'); \text{ and}
  \end{equation}
  \begin{equation}
  \label{eq:IHPartB}
  \RMR\big(\Exec(C, \sigma_0 \circ \sigma_1')\big) = 0,
  \end{equation}

  First, assume that $\sigma_1 = \sigma_1' \circ p^\top$, for $p \in Q_1$. Then $\RMR\big(\Exec(C, \sigma_0 \circ \sigma_1)\big) = \RMR\big(\Exec(C, \sigma_0 \circ \sigma_1')\big)$. Thus, by \cref{eq:IHPartB}, Part~(b) is true.
  Moreover, 
  \begin{multline*}
  \Exec(C, \sigma_0 \circ \sigma_1) | Q_1 
  =
  \Exec(C, \sigma_0 \circ \sigma_1' \circ p^\top) | Q_1 
  \stackrel{\cref{eq:IHPartA}}{=}
  \Exec\bparen{C,((\sigma_0 \circ \sigma_1')|Q_1)\circ p^\top}
  \\ =
  \Exec(C,\sigma_1' \circ p^\top)
  =
  \Exec(C,\sigma_1).
  \end{multline*}
  This proves Part~(a) for $j=1$.
  
  
  Now assume $\sigma_1 = \sigma_1' \circ p$, for $p \in Q_1$.
  Let $s$ be the last step in $\Exec(C, \sigma_0 \circ \sigma_1)$, and $s'$ the last step in $\Exec(C, \sigma_1)$. 
  We will show:
  \begin{align}\label{eq:zeroInformation:to_show_s=s'}
  &s=s';\ \text{and}\\
  \label{eq:zeroInformation:to_show_no_RMR}
  &\text{step $s$ incurs no RMR in execution $\Exec(C,\sigma_0 \circ \sigma_1)=\Exec(C,\sigma_0 \circ \sigma_1')\circ s$}.
  \end{align}
  Then Part~(b) follows immediately from \cref{eq:IHPartB,eq:zeroInformation:to_show_no_RMR}, and Part~(a) for $j=1$ from
  \begin{multline*}
  \Exec(C, \sigma_0 \circ \sigma_1) | Q_1 
  = \bparen{\Exec(C,\sigma_0 \circ \sigma_1')|Q_1}\circ s
  \stackrel{\cref{eq:IHPartA}}{=}
  \Exec(C,\sigma_1')\circ s
  \\ \stackrel{\cref{eq:zeroInformation:to_show_s=s'}}{=}
  \Exec(C,\sigma_1')\circ s'
  = \Exec(C,\sigma_1).
  \end{multline*}
  
  First note that using \cref{eq:IHPartA} and because $p \in Q_1$ we have
  \begin{equation}\label{eq:zeroInformation:p_same_state}
  \text{in $\Conf(C, \sigma_1')$ process $p$ is in the same state as in $\Conf(C,\sigma_0 \circ \sigma_1')$}.
  \end{equation}
  We separately consider the case that $s$ is a read and that $s$ is a write. 
  
  \emph{Case 1: Step $s$ is a write}.\quad 
  By \cref{eq:zeroInformation:p_same_state} process $p$ writes the same value to the same register in $s$ as in $s'$. 
  This implies \cref{eq:zeroInformation:to_show_s=s'}.
  Moreover, 
  \begin{displaymath}
  \RMR\big(\Exec(C, \sigma_1') \circ s' \big) = \RMR\big(\Exec(C, \sigma_1' \circ p)\big) = \RMR\big(\Exec(C, \sigma_1)\big) = 0,
  \end{displaymath}
  where the last equality follows from the claim's assumption.
  Hence, $s'$ does not incur an RMR, which is only possible if in $s'$ process $p$ writes a register in $\RR_p$.
  Because $s=s'$, $s$ does not incur an RMR either, and so \cref{eq:zeroInformation:to_show_no_RMR} follows.
  
  \emph{Case 2: Step $s$ is a read.}\quad
  Let $r$ be the register process $p$ reads in step $s$, and thus by \cref{eq:zeroInformation:p_same_state}, also in $s'$.
  
  We first prove \cref{eq:zeroInformation:to_show_s=s'}.
  To that end we will show that the value of $r$ is the same in $\Conf(C, \sigma_0 \circ \sigma_1')$ as in $\Conf(C, \sigma_1')$.
  As a result, in step $s$ process $p$ reads the same value from $r$ as in step $s'$, and so $s=s'$.
  
  All writes to $r$ in $\Exec(C, \sigma_1')$ are by processes in $Q_1$ and thus they occur also in $\Exec(C, \sigma_0 \circ \sigma_1')$ in the same order.
  Hence if there is a write to $r$ in $\Exec(C, \sigma_1')$, then the value at the end of $\Exec(C, \sigma_1')$ is the same as at the end of $\Exec(C, \sigma_0 \circ \sigma_1')$.
  In that case $p$ reads the same value in $s$ as in $s'$.
  
  Therefore, assume that $r$ does not get written in $\Exec(C,\sigma_1')$.
  If it also does not get written in $\Exec(C, \sigma_0 \circ \sigma_1')$, then $r$ has the same value at the end of both executions, and $p$ reads that value in both, $s$ and $s'$.
  So suppose $r$ gets written in $\Exec(C, \sigma_0)$ but not in $\Exec(C, \sigma_0 \circ \sigma_1')$, and for the last time it gets written by a process $q$.
  Then $q\in Q_0$, and since $\RMR\big(\Exec(C, \sigma_0)\big)=0$, $r\in\RR_q$.
  
  Since $p\in Q_1$, we have $p\neq q$, and thus $r\not\in\RR_p$.
  Process $p$ reads $r$ during $\Exec(C, \sigma_1)$ at least once (in its last step $s$).
  By the claim's assumption no such read by $p$ incurs an RMR, so $r \in \Cache_p(C)$.
  But then in $\E{C}$ process $p$ reads or writes register $r\in\RR_q$ before $q$'s terminating read (because $q$ writes $r$ in $\Exec(C,\sigma_0)$.
  If $p$ reads $r$ in $\E{C}$, then $(p, q) \in \K_2(C)$, and if $p$ writes $r$ in $\E{C}$, then $(q,p)\in \K_3(C)$.
  Hence, we have either $(p,q)\in \K(C)$ or $(q,p)\in\K(C)$.
  Since $C$ is safe, (S1) implies either $q \in \Lost(C)$ or $p \in \Lost(C)$.
  But neither is possible, as $q$ takes a step in $\Exec(C,\sigma_0)$ (its write to $r$) and $p$ a step in $\Exec(C,\sigma_1)$ (step $s$).
  This is a contradiction, and completes the proof of \cref{eq:zeroInformation:to_show_s=s'}.
  
  Thus, it remains to show \cref{eq:zeroInformation:to_show_no_RMR}, i.e., that $s$ incurs no RMR in $\Exec(C, \sigma_0 \circ \sigma_1')\circ s$.
  If $r\in\RR_p$, then this is obviously true, so assume $r \notin \RR_p$.
  Since $s'=s$ does not incur an RMR in $\Exec(C, \sigma_1') \circ s'$ process $p$ reads $r$ during $\Exec(C, \sigma_1')$, and $r$ does not get written afterwards.
  By \cref{eq:IHPartA} the same is true in $\Exec(C,\sigma_0 \circ \sigma_1')$.
  Hence, at the end of that execution $p$ has a valid cache copy of $r$, so $s$ does not incur an RMR in $\Exec(C,\sigma_0 \circ \sigma_1')\circ s$.
\end{proof}

Starting in a safe configuration, if a process terminates without incurring any RMR steps, it does not gain information and hence wins.
\begin{claim}
  \label{claim:zeroRMROneWins}
  Let $C$ be a safe configuration, and $p$ a process in $\Proc(\Sched{C}) \setminus \Lost(C)$, such that $p^\top$ does not appear in $\Sched{C}$. If $p$ terminates without incurring any RMRs in $E = \Exec(C, p^k)$, for some positive integer $k$, then $p$ wins in $E$.
\end{claim}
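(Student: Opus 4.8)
The plan is to reduce the claim to \cref{claim:winsStartingInSafe} by instantiating it with the schedule $\sigma = p^k$ and verifying its knowledge hypothesis. The key observation is that a zero-RMR solo-run preserves safety step by step, so the terminal configuration is safe, and then property (S1) forces every process that $p$ knows to already be lost in $C$.

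First I would argue that every intermediate configuration is safe. Since $\RMR(E)=0$ and the RMR count of an execution is the sum of the (non-negative) RMR indicators of its steps, each single step of the solo-run incurs no RMR. Writing $C_t = \Conf(C, p^t)$, I would apply \cref{claim:zeroRMRsafe} inductively for $t = 0, 1, \dots, k-1$: its hypotheses—namely $p \in \Proc(\Sched{C_t})$, that $p^\top$ does not appear in $\Sched{C_t}$, and that $\RMR\big(\Exec(C_t, p)\big) = 0$—all hold, because $p$ already takes a shared memory step in $\E{C}$ (so $p \in \Proc(\Sched{C_t})$ with $\Sched{C_t} = \Sched{C}\circ p^t$), never receives the abort signal along this run, and each of its steps is local. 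Hence $C_t$ is safe for every $0 \le t \le k$; in particular $C_k = \Conf(C, p^k)$ is safe.

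Next I would pin down $\Lost(C_k)$. Because only $p$ takes steps in $\Exec(C, p^k)$, every process $q \neq p$ remains in the state it had in $C$, so $q$ has lost in $C_k$ if and only if it has lost in $C$; thus $\Lost(C_k) \subseteq \Lost(C) \cup \{p\}$. Now for any pair $(p, q) \in \K(C_k)$, safety of $C_k$ gives $q \in \Lost(C_k)$ by (S1), and since every pair in $\K$ has distinct components we have $q \neq p$, whence $q \in \Lost(C)$. This establishes hypothesis \cref{eq:pKnowsTheSame} of \cref{claim:winsStartingInSafe} for $\sigma = p^k$ (in fact the stronger disjunct $q \in \Lost(C)$, never needing the alternative $(p,q)\in\K(C)$).

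Finally, the remaining hypotheses of \cref{claim:winsStartingInSafe} are immediate: $C$ is safe, $p \in \mathcal{P} \setminus \Lost(C)$, and $p^\top$ appears in neither $\Sched{C}$ nor $p^k$, so it does not appear in $\Sched{C} \circ p^k$. Since $p$ terminates in $\Exec(C, p^k) = E$, \cref{claim:winsStartingInSafe} yields that $p$ wins in $E$, as required. I expect the only part needing genuine care is the inductive application of \cref{claim:zeroRMRsafe}—specifically, confirming that $p$ continues to satisfy that claim's hypotheses after each local step—after which the knowledge bookkeeping follows directly from (S1) and the observation that the lost set can only grow by $p$ itself.
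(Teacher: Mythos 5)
Your proof is correct. It ends at the same place as the paper's --- an application of \cref{claim:winsStartingInSafe} with $\sigma=p^k$ --- but gets there by a different route. The paper verifies the no-new-knowledge condition directly: by a case analysis on $K_1$, $K_2$, $K_3$ it shows $\big(\K(\Conf(C,p^{k'}))\setminus\K(C)\big)\cap(\{p\}\times\mathcal{P})=\emptyset$ for every $k'\in\{1,\dots,k\}$, and then invokes \cref{lemma:winsInaSolorun} (which itself calls \cref{claim:winsStartingInSafe}), discharging hypothesis \cref{eq:pKnowsTheSame} through the $(p,q)\in\K(C)$ disjunct. You instead propagate safety along the local solo-run by iterating \cref{claim:zeroRMRsafe}, deduce from (S1) at $\Conf(C,p^k)$ that every $q$ with $(p,q)\in\K(\Conf(C,p^k))$ lies in $\Lost(\Conf(C,p^k))\setminus\{p\}\subseteq\Lost(C)$, and so discharge \cref{eq:pKnowsTheSame} through the $q\in\Lost(C)$ disjunct. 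Your version buys reuse --- the $K_i$ bookkeeping is done once, inside \cref{claim:zeroRMRsafe}, rather than repeated --- and yields the stronger byproduct that $\Conf(C,p^k)$ is safe (which in fact supplies (S1) for \emph{all} pairs of $\K(\Conf(C,p^k))$, not only those with first component $p$, exactly what the projection step inside \cref{claim:winsStartingInSafe} relies on). The paper's version is lighter in that it never needs safety of the intermediate configurations, only that $p$'s own knowledge does not grow. Since termination of $p$ is a hypothesis of the claim, you are also right that \cref{claim:winsStartingInSafe} suffices and the extra termination argument inside \cref{lemma:winsInaSolorun} is not needed.
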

\begin{proof}
  Let $C' = \Conf(C, p^{k'})$, for arbitrary $k' \in \{1, ..., k\}$.
  Because $p$ is the only process that takes steps in $E$, it is true that $\big( K_3(C') \setminus K_3(C) \big) \cap (\{p\} \times \mathcal{P}) = \emptyset$ (remember that $K_3(C)$ is the set of all pairs $(a, b)$, $a \neq b$, such that in $\E{C}$ process $a$ takes at least one shared memory step, and $b$ writes to a register $r \in \RR_a$ before $a$'s terminating read of $r$).
  Since $p$ does not incur any RMRs in $E$, if $p$ reads some register $r$ during $E$, then either $r \in \RR_p$, or $r \in \Cache_p(C)$.
  Thus, $\big( K_2(C') \setminus K_2(C) \big) \cap (\{p\} \times \mathcal{P}) = \emptyset$, and $\big( K_1(C') \setminus K_1(C) \big) \cap (\{p\} \times \mathcal{P}) = \emptyset$.
  Hence, $\Big( \K(C) \setminus \K\big(\Conf(C, p^{k'})\big) \Big) \cap (\{p\} \times \mathcal{P}) = \emptyset$, for any $k' \in \{1, ..., k\}$.
  Thus, by Lemma \ref{lemma:winsInaSolorun}, $p$ wins in $E$.
\end{proof}

As long as the set of knowing relations does not change during an execution starting from a safe configuration, at most one process terminates.

\begin{claim}
  \label{claim:oneTerminates}
  Let $C$ be a safe configuration, such that if $p^\top \in \mathcal{P}^\top$ appears in $\Sched{C}$, then $p \in \Lost(C)$.
  Then for any schedule $\sigma \in \mathcal{P}^\ast$, when $\K(C) = \K\big(\Conf(C, \sigma)\big)$, at most one process terminates in $\Exec(C, \sigma)$.
\end{claim}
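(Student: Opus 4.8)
The plan is to show that every process which terminates during $\Exec(C, \sigma)$ must \emph{win}, and then to invoke the safety property of abortable leader election, which permits at most one process to return $win$ in any execution starting in $\Gamma$. Since $\Exec(C, \sigma)$ is the suffix of $\Exec(\Gamma, \Sched{C} \circ \sigma)$, at most one process can win in it, and hence at most one process can terminate in it. The entire argument reduces to verifying that the hypotheses of \cref{claim:winsStartingInSafe} are met for each terminating process.

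First I would observe that any process $p$ that terminates in $\Exec(C, \sigma)$ cannot belong to $\Lost(C)$: lost processes have already returned $lose$ and are in a halting state in $C$, so they take no further steps and do not newly terminate during $\sigma$. Thus every terminating process $p$ lies in $\mathcal{P} \setminus \Lost(C)$. I would then check that $p^\top$ does not appear in $\Sched{C} \circ \sigma$. Since $\sigma \in \mathcal{P}^\ast$ contains no abort symbols, it suffices to rule out $p^\top$ in $\Sched{C}$; but the hypothesis of the present claim guarantees that any process whose abort symbol occurs in $\Sched{C}$ is in $\Lost(C)$, and $p \notin \Lost(C)$, so $p^\top$ is indeed absent from $\Sched{C}$.

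The key step is to feed the assumption $\K(C) = \K\big(\Conf(C, \sigma)\big)$ directly into \cref{claim:winsStartingInSafe}. That claim requires, for the fixed process $p$, that every pair $(p,q) \in \K\big(\Conf(C, \sigma)\big)$ satisfy either $q \in \Lost(C)$ or $(p,q) \in \K(C)$. Under the equality of knowledge sets, any such pair already lies in $\K(C)$, so the condition holds trivially; in particular no monotonicity of $\K$ along prefixes of $\sigma$ is needed, as \cref{claim:winsStartingInSafe} handles that internally. Applying it to each terminating process $p$ then yields that $p$ wins in $\Exec(C, \sigma)$.

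Finally I would combine these facts: every process that terminates during $\sigma$ wins, while the safety condition of abortable leader election allows at most one process to return $win$ in the whole execution $\Exec(\Gamma, \Sched{C} \circ \sigma)$. Consequently at most one process terminates in $\Exec(C, \sigma)$, which is the claim. I do not expect a genuine obstacle here, since \cref{claim:winsStartingInSafe} does the heavy lifting; the only care needed is the bookkeeping around the phrase ``terminates in $\Exec(C,\sigma)$'', namely arguing that such processes are neither lost nor abort-tainted, so that the hypotheses of \cref{claim:winsStartingInSafe} are satisfied verbatim.
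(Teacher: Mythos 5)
Your proposal is correct and follows essentially the same route as the paper's own proof: both arguments reduce the claim to \cref{claim:winsStartingInSafe} by checking that a terminating process is not in $\Lost(C)$, that its abort symbol cannot appear in $\Sched{C}\circ\sigma$ (using the claim's hypothesis together with $\sigma\in\mathcal{P}^\ast$), and that the equality $\K(C)=\K\bigl(\Conf(C,\sigma)\bigr)$ trivially yields the knowledge condition, so every terminating process wins and safety forbids two winners. The only cosmetic difference is that the paper phrases this as a contradiction with two terminating processes, whereas you argue directly; the substance is identical.
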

\begin{proof}
  Let $\sigma \in \mathcal{P}^\ast$, such that $\K(C) = \K\big(\Conf(C, \sigma)\big)$.
  Assume that in $E = \Exec(C, \sigma)$ two distinct processes, $p$ and $q$, terminate.
  Since we assumed that $p$ terminates in $E$, process $p$ is not terminated in $C$, and hence, $p \in \mathcal{P} \setminus \Lost(C)$.
  Because $\K(C) = \K\big(\Conf(C, \sigma)\big)$, the set $\K(C) \setminus \K\big(\Conf(C, \sigma)\big) \cap (\{p\} \times \mathcal{P}) = \emptyset$.
  Further, by the claim statement, $p^\top$ does not appear in $\Sched{C}$ and $\sigma$.
  Thus, by \cref{claim:winsStartingInSafe}, $p$ wins in $\Exec(C, \sigma)$, and by symmetry, $q$ wins in $\Exec(C, \sigma)$. This contradicts the safety property of abortable leader election.
\end{proof}

Projecting a schedule, that leads to a safe configuration, to a superset of all lost processes leads to a safe configuration.

\begin{claim}
  \label{claim:projectionSafe}
  Let $\sigma$ be a schedule, such that $C = \Conf(\Gamma, \sigma)$ is safe. Let $P$ be a set of processes, such that $\Lost(C) \subseteq P \subseteq \Proc(\sigma)$. Then $C' = \Conf(\Gamma, \sigma | P^\Delta)$ is safe.
\end{claim}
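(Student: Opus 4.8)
The plan is to verify directly the two defining conditions (S1) and (S2) of safety for $C'$, using as the single main engine the identity $E|P=E'$, where $E=\Exec(\Gamma,\sigma)$ and $E'=\Exec(\Gamma,\sigma|P^\Delta)$, which \cref{cor:projection_same_execution_from_safe} supplies because $C$ is safe and $\Lost(C)\subseteq P$. As a preliminary I would record that $\Lost(C')=\Lost(C)$: every process outside $P$ takes no step in $E'$ and hence cannot have lost, while each $p\in P$ has $E'|p=E|p$ and is therefore in the same state, with the same loss status, in $C'$ as in $C$; together with $\Lost(C)\subseteq P$ this gives the equality. I would also note that every step appearing in $E'$ is a step of a process in $P$ and is literally the corresponding step of $E$ (same value read, same value written, same relative order among $P$-steps).

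For (S1), I would take an arbitrary $(p,q)\in\K(C')$ and transfer it backward to $C$. Whichever of $K_1$, $K_2$, $K_3$ witnesses membership at $C'$, the witnessing events (a read of a value $(q,x)$ by $p$; a step of $q$ together with a read of $\RR_q$ by $p$; or a write of $r\in\RR_p$ by $q$ before $p$'s terminating read) all occur in $E'$ by processes of $P$, hence occur identically and in the same order in $E$. Thus $(p,q)\in\K(C)$, and (S1) for the safe configuration $C$ forces $q\in\Lost(C)=\Lost(C')$. This direction is essentially a purely formal transfer along the projection.

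The substantive part is (S2). Here I would assume $p\notin\HH(C')$ and $p\notin\Lost(C')=\Lost(C)$ and show that $p$ takes no shared-memory step in $E'$. If $p\notin P$ this is immediate, so assume $p\in P$; then $E'|p=E|p$, and if $p$ took a shared-memory step in $E'$ it would take one in $E$, whence (S2) for $C$ would give $p\in\HH(C)$. So there is a register $v$ with $p\in H_v(C)\setminus H_v(C')$, and I would reach a contradiction by cases on whether $v\in\RR_p$. If $v\in\RR_p$, the failure of (H2) at $C'$ yields a writer $z\neq p$ of $v$ in $E'$ with $z\notin\Lost(C')$; but $z\in P$, so $z$ also writes $v$ in $E$, and (H2) at $C$ forces $z\in\Lost(C)=\Lost(C')$, a contradiction. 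If $v\notin\RR_p$, the failure of (H1) at $C'$ gives that $p$ accesses $v$ (at a last access $t$ that is the same step in $E$ and $E'$, since $E|p=E'|p$) and that some process writes $v$ after $t$ in $E'$ while no such writer is in $\Lost(C')$; this write also sits after $t$ in $E$, so (H1) at $C$ forces some writer of $v$ after $t$ in $E$ to lie in $\Lost(C)\subseteq P$, and that writer's step survives into $E'$ after $t$, contradicting the absence of lost writers there.

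I expect the (H1) case of (S2) to be the main obstacle: the care is in checking that the ``last access point'' $t$ and the after-$t$ ordering of writes are faithfully preserved when passing between $E$ and $E'$. This rests precisely on $E'=E|P$ preserving the relative order of all $P$-steps, together with the containment $\Lost(C)\subseteq P$ guaranteeing that the crucial lost writer is never projected away. Once that bookkeeping is pinned down, (S1) and the (H2) subcase are routine consequences of the same projection identity.
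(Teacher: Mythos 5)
Your proposal is correct and follows essentially the same route as the paper: both rest on the projection identity $\Exec(\Gamma,\sigma)|P=\Exec(\Gamma,\sigma|P^\Delta)$ from \cref{cor:projection_same_execution_from_safe} and then transfer the (S1) and (S2)/(H1)/(H2) witnesses between the two executions, using $\Lost(C)\subseteq P$ to ensure lost writers survive the projection. If anything, your treatment of the (H1) subcase (arguing that a lost writer after $t$ in $E$ would persist into $E'$, contradicting the failure of (H1) at $C'$) is spelled out more carefully than the paper's rather terse version of the same step.
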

\begin{proof}
  For the purpose of contradiction assume that $C'$ is not safe.
  First assume there exists a process $p \notin \HH(C')$, such that $p$ takes at least one shared memory step in $\E{C'}$ and $p \notin \Lost(C')$.
  Because $p$ takes at least one shared memory step in $\E{C'}$, $p \in P$.
  Since $C$ is safe, for any pair $(p, q) \in \K(C)$, process $q$ is in $\Lost(C)$.
  Hence, by \cref{claim:projection_same_execution}, $\Exec(\Gamma, \sigma) | P = \Exec(\Gamma, \sigma | P^\Delta)$.
  Therefore, $p$ takes at least one shared memory step in $\E{C}$ and $p \notin \Lost(C)$.
  Because $p \notin \HH(C')$, there exists a register $r \in \RR$, such that $p \notin \HH_r(C')$.
  
  If $r \in \RR_p$, then at least one process that writes to $r$ in $\E{C'}$, is not in $\Lost(C')$.
  Let $q$ be one of the processes that write to $r$ in $\E{C'}$ and are not in $\Lost(C')$. Since $q$ takes a step in $\E{C'}$, process $q$ is in $P$, and by \cref{claim:projection_same_execution}, takes the same write step to $r$ and is not in $\Lost(C)$.
  Therefore, $p \notin \HH_r(C)$, which contradicts $C$ being safe.
  
  If $r \notin \RR_p$, then in $\E{C'}$ process $p$ writes to $r$, and at least one process, $q$, writes to $r$ after $p$'s write, such that $q \notin \Lost(C')$.
  Since $q$ takes a step in $\E{C'}$, process $q$ is in $P$, and by \cref{claim:projection_same_execution}, takes the same write step to $r$ and is not in $\Lost(C)$.
  Therefore, $p \notin \HH_r(C)$, which contradicts $C$ being safe.
  
  Now assume that for any $p \notin \HH(C')$, either $p \in \Lost(C')$ or $p$ does not take any shared memory steps in $\E{C'}$. Then there exists a pair $(p, q) \in \K(C') \setminus \K(C)$, such that $q \notin \Lost(C')$.
  
  If $(p, q) \in K_1(C') \setminus K_1(C)$, then both $p$ and $q$ take steps in $\E{C'}$ ($p$ takes at least a read step, and $q$ takes at least a write step), and thus, are in $P$.
  If $(p, q) \in K_2(C') \setminus K_2(C)$, then both $p$ and $q$ take steps in $\E{C'}$ ($p$ takes at least a read step, and $q$ takes at least a shared memory step), and thus, are in $P$.
  If $(p, q) \in K_3(C') \setminus K_3(C)$, then both $p$ and $q$ take steps in $\E{C'}$ ($p$ takes at least a shared memory step, and $q$ takes at least a write step), and thus, are in $P$.
  Hence, by \cref{claim:projection_same_execution}, $p$ and $q$ take the same steps in $\E{C}$ and $\E{C'}$. This contradicts $(p, q) \in K(C') \setminus K(C)$.
  
\end{proof}

\subsubsection{Auxiliary Claims}

We now show that during an execution, the knowing relations can only change as a result of a shared memory step by one of the processes, that is in the difference of the relation sets.

\begin{claim}
  \label{claim:KChangeSharedMemoryStep}
  Let $\sigma \in \mathcal{P}^\Delta$, $C$ a configuration, and $C' = \Conf(C, \sigma)$.
  If there exists a pair $(p,q)$ in the symmetric set difference of $\K(C')$ and $\K(C)$, then $\Exec(C, \sigma)$ is a shared memory step by $p$ or by $q$.
\end{claim}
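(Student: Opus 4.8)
The plan is to use that $\E{C'}$ extends $\E{C}$ by the single step $s:=\Exec(C,\sigma)$, so I would work throughout with the execution $\E{C}\circ s$ (which ends in $C'$) and read off the three defining conditions of $K_1,K_2,K_3$ along it. First I would establish monotonicity: since $\E{C}$ is a prefix of $\E{C}\circ s$, every witness for membership in some $K_i(C)$---a read by $p$ with $q$ visible $(K_1)$, a shared-memory step of $q$ together with a read of $\RR_q$ by $p$ $(K_2)$, or a shared-memory step of $p$ together with a write of $q$ to $\RR_p$ before $p$'s terminating read $(K_3)$---survives unchanged in $\E{C}\circ s$. Hence $K_i(C)\subseteq K_i(C')$ for each $i$, so $\K(C)\subseteq\K(C')$ and the symmetric difference collapses to $\K(C')\setminus\K(C)$. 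It therefore suffices to take $(p,q)\in K_j(C')\setminus K_j(C)$ for some $j\in\{1,2,3\}$ and show that $s$ is a shared-memory step by $p$ or by $q$.

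Then I would do the case analysis on $j$. For $K_1$, the new read-witness must be $s$ itself, so $s$ is a read by $p$, hence a shared-memory step by $p$. For $K_2$, both conjuncts are individually monotone, so the conjunction can newly hold only because $s$ supplies the previously missing conjunct: either $s$ is $q$'s first shared-memory step (so $s$ is by $q$), with $p$ having already read some $r\in\RR_q$ in $\E{C}$, or $s$ is $p$'s read of some $r\in\RR_q$ (so $s$ is by $p$), with $q$ having already moved. Symmetrically, for $K_3$ the conjunction becomes newly true only if $s$ is $p$'s first shared-memory step (so $s$ is by $p$) or $s$ is $q$'s write to some $r\in\RR_p$ occurring before $p$'s terminating read of $r$ (so $s$ is by $q$). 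In every branch $s$ is a shared-memory step performed by $p$ or by $q$, which is exactly the claim.

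The main obstacle I anticipate is twofold. First, the $K_3$ bookkeeping around ``before $p$'s terminating read of $r$'': I must check that appending $s$ never retroactively invalidates an existing witness (so that monotonicity really holds even when $s$ happens to be $p$'s terminating read), and that a newly created $K_3$-witness by $q$ genuinely lands before $p$'s terminating read, which forces $p$ not to have completed that read in $\E{C}$. Second, I would separately verify the degenerate abort case: if $\sigma=x^\top$ then $s$ accesses no register and counts as no shared-memory step, so none of the three conditions can be switched on and $\K(C')=\K(C)$; thus a nonempty symmetric difference already guarantees that $s$ is a genuine (read or write) shared-memory step, and the case analysis above then pins it to $p$ or $q$.
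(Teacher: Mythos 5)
Your proposal is correct and follows essentially the same route as the paper's proof: a case analysis on which of $K_1$, $K_2$, $K_3$ changed, attributing the change to the single appended step $s$ and reading off that $s$ must be a shared memory step by $p$ or $q$. Your explicit treatment of monotonicity (reducing the symmetric difference to $\K(C')\setminus\K(C)$), of the terminating-read bookkeeping for $K_3$, and of the degenerate abort case makes the argument somewhat more careful than the paper's terse version, but it is the same argument.
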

\begin{proof}
  Let $s=\Exec(C, \sigma)$, and $(p, q)$ be a pair in the symmetric set difference of $\K(C)$ and $\K(C')$.
  Step $s$ causes the difference between $K_1(C) \cup K_2(C) \cup K_3(C)$ and $K_1(C') \cup K_2(C') \cup K_3(C')$.
  If $K_1(C) \neq K_1(C')$, then in step $s$ process $p$ reads a register on which $q$ is visible.
  If $K_2(C) \neq K_2(C')$, then either $s$ is $q$'s first shared memory step, or in $s$ process $p$ reads a register in $\RR_q$. 
  Finally, if $K_3(C) \neq K_3(C')$, then $s$ is $p$'s first shared memory step, or in $s$ process $q$ writes to a register in $r\in\RR_p$.
  In all cases, $s$ is a shared memory step by $p$ or $q$.
\end{proof}

If two executions are equal when projected to a set of processes, $P$, then each process in $P$ takes the same number of RMR steps and knows the same set of processes in $P$ at the end of the execution.

\begin{claim}
  Let $P$ be a set of processes, $\sigma$ and $\sigma'$ schedules, and define $E=\Exec(\Gamma, \sigma)$, $E'=\Exec(\Gamma, \sigma')$, $C=\Conf(\Gamma, \sigma)$, and $C'=\Conf(\Gamma, \sigma')$.
  If $E|P=E'|P$, then
  \begin{enumerate}[(a)]
    \item $\RMR_p(E) = \RMR_p(E')$, for any process $p \in P$, and
    \item $\K(C) \cap (P \times P) = \K(C') \cap (P \times P)$.
  \end{enumerate}
\end{claim}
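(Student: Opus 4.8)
The plan is to reduce both parts to the single- and two-process projections of the executions, which the hypothesis pins down. From $E|P=E'|P$ I would first extract two facts: for each $p\in P$, $E|p=E'|p$ (since $E|p=(E|P)|p$), and for each pair $p,q\in P$, $E|\{p,q\}=E'|\{p,q\}$. Recalling that any two executions reaching the same configuration induce the same per-process projection, I also have $\E{C}|p=E|p$ and $\E{C'}|p=E'|p$, so the knowledge sets $\K(C),\K(C')$ may be analyzed through $E,E'$ on a per-process basis. Part (a) will depend only on the single-process projections, and part (b) on the two-process projections.

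For part (a), I would show that the RMR status of each step of $p$ is a function of $E|p$ alone, and then sum. Abort steps are local; a read or write of a register in $\RR_p$ is local; a write of a register outside $\RR_p$ is remote. Each of these is read off the step itself, which is visible in $E|p$. The only delicate case is a read by $p$ of a register $r\notin\RR_p$, which incurs an RMR iff $p$ has no valid cache copy of $r$. Here I would use the full-information assumption: since every write writes a fresh pair $(p,x)$, the value of any fixed register $r$ runs through a sequence of pairwise distinct values, so it never returns to an earlier value. Consequently $p$ holds a valid cache copy of $r$ at the moment of the read iff the value it now reads equals the value of $p$'s immediately preceding access of $r$ (and such an access exists); otherwise the read is remote. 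Both the value read and the value of the preceding access of $r$ appear in $E|p$, so the step's RMR status is determined by $E|p$, and $\RMR_p(E)=\RMR_p(E')$ follows by summation.

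For part (b), I would handle $K_1,K_2,K_3$ separately, fixing a pair $(p,q)\in P\times P$. Membership $(p,q)\in K_1$ holds iff $p$ reads a value $(q,x)$ with $x\in\mathcal{Q}$, an event recorded in $E|p$. Membership $(p,q)\in K_2$ holds iff $q$ takes at least one shared-memory step (from $E|q$) and $p$ reads some register in $\RR_q$ (from $E|p$); both are single-process conditions. Membership $(p,q)\in K_3$ requires that $p$ takes a shared-memory step (from $E|p$), that $q$ writes some $r\in\RR_p$ (from $E|q$), and that this write precedes $p$'s terminating read of $r$. Since $p,q\in P$, the single-process projections agree with the primed execution, and the ordering in the last condition is captured by the joint projection $E|\{p,q\}=E'|\{p,q\}$. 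Hence each of $K_1,K_2,K_3$ restricted to $P\times P$ agrees for $C$ and $C'$, giving $\K(C)\cap(P\times P)=\K(C')\cap(P\times P)$.

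I expect the two delicate points to be the cache analysis in (a) and the ordering condition of $K_3$ in (b). In (a), cache validity looks like a global property that could be disturbed by writes of processes outside $P$, and the crux is that the pairwise-distinctness of written values turns it into a purely local property of $E|p$. In (b), $K_1$ and $K_2$ are per-process and would already follow from $E|p=E'|p$ for each $p\in P$; it is only $K_3$ that genuinely involves the relative order of a write by $q$ and the terminating read by $p$, which is precisely why the hypothesis is phrased as $E|P=E'|P$ (preserving two-process order) rather than as single-process agreement. Care is also needed to confirm that $p$'s terminating read of $r$ is identified consistently in both executions, which again follows from $E|p=E'|p$.
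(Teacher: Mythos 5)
Your proof is correct and follows essentially the same route as the paper's: part (a) via the observation that, because no value is ever written twice in the full-information model, the RMR status of each of $p$'s steps is determined by $E|p$ alone (a non-local read is cached iff the value read equals that of $p$'s preceding access of the same register), and part (b) by checking $K_1$, $K_2$, $K_3$ separately, with only $K_3$ requiring the two-process ordering guaranteed by $E|P=E'|P$. Your explicit remark on which conditions are single-process versus order-dependent is a slight refinement of, but not a departure from, the paper's argument.
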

\begin{proof}
  Recall that we assume without loss of generality, that a value does not get written twice in the same execution. Hence, if $p$ reads a value $v$ from register $r$ in execution $E$, then that read incurs no RMR if and only if $p$ accessed $r$ earlier, and in its preceding access of $r$ process $p$ either read or wrote the same value $v$. Therefore, $E|p$ uniquely determines which of $p$'s steps are RMRs, and in particular $\RMR_p(E)$. This proves Part~(a).
  
  We will show that $\K(C) \cap (P \times P) \subseteq \K(C') \cap (P \times P)$. By symmetry, this implies $\K(C') \cap (P \times P) \subseteq \K(C) \cap (P \times P)$, and thus Part~(b).
  Let $(a,b) \in \K(C) \cap (P \times P)$. Then $a, b \in P$, and $(a, b) \in K_1(C) \cup K_2(C) \cup K_3(C)$.
  
  If $(a, b) \in K_1(C)$, then in some step of execution $E$ process $a$ reads a value of $(b, x)$, where $x \in \mathcal{Q}$, from some register $r$. Since $E|P = E'|P$, in $E'$ process $a$ reads $(b, x)$ from $r$. Thus, $(a, b) \in \K(C')$.
  
  If $(a, b) \in K_2(C)$, then in $E$ process $a$ reads a register $r \in \RR_b$, and $b$ takes at least one shared memory step. As $E|P = E'|P$, process $b$ takes at least one shared memory step in $E'$ and $a$ reads $r$ in $E'$. Therefore, $(a, b) \in \K(C')$.
  
  If $(a, b) \in K_3(C)$, then in $E$ process $a$ takes at least one shared memory step, and $b$ writes a register, $r \in \RR_a$, before $a$'s terminating read of $r$. Since $E|P = E'|P$, process $a$ takes at least one shared memory step in $E'$, and $b$ writes $r$ in $E'$, before $a$'s terminating read of $r$. Hence, $(a, b) \in \K(C')$.
  
  Thus, $\K(C) \cap (P \times P) \subseteq \K(C') \cap (P \times P)$.
\end{proof}

\subsection{Constructing an RMR-Expensive Execution}\label{sec:construction}
We now consider an abortable leader election algorithm.
We will construct a schedule such that in an execution starting in the initial configuration at least one process takes $\Omega(\log n / \log\log n)$ RMR steps, where $n$ is the number of processes.

\subsubsection{Overview of the Construction}
Let $n \geq 4$, $\ell =\floor{\log n / c \log\log n}$ for some sufficiently large constant $c$ (which we determine in the appendix).
We inductively construct a schedule $\sigma_i$ and a set of processes $P_i \subseteq \mathcal{P}$, for all $i \in \{0, ..., \ell\}$.
For the sake of conciseness, let $E_i = \Exec(\Gamma, \sigma_i)$, $C_i = \Conf(\Gamma, \sigma_i)$, and $L_i = \Lost(C_i)$.

The construction will satisfy the following invariants for $i \in \{0, ..., \ell\}$:
\begin{compactenum}[(\text{I}1)]
  \item $C_i$ is safe.
  \item $|P_i \setminus L_i| \geq (n-1) / (\log n)^{ci}$.
  \item $\RMR_{P_i \setminus L_i}(C_i) \geq i\,|P_i \setminus L_i| - i$.
  \item For each process $p \in P_i \setminus L_i: \RMR_p(C_i) \leq i$.
  \item For each process $p \in P_i \setminus L_i$, $p^\top$ does not appear in $\sigma_i$.
\end{compactenum}
Invariant  (I2)  for $i=\ell$ implies $|P_\ell\setminus L_\ell|\geq 2$.
Hence, by (I3) there are at least two processes that each incur $\Omega(\ell)=\Omega(\log n/\log\log n)$ RMRs.
\Cref{thm:main_lower_bound} follows.

We now sketch how we construct $\sigma_i$ and $P_i$ inductively so that the invariants are satisfied.
We start with $P_0=\mathcal{P}$ and the initial configuration $C_0$.
We then schedule processes in rounds.
In round $i$, we choose a subset $P_{i+1}$ of the processes in $P_i\setminus L_i$ and remove all processes in $\mathcal{P}\setminus(P_{i+1}\cup L_i)$ from the execution constructed so far.
This does not affect any of the remaining processes, because $C_i$ is safe.
Then we schedule the processes in $P_{i+1}$ in such a way that each of them incurs an RMR, and only a small fraction of them lose.

To decide which processes to remove and to schedule the remaining processes, we proceed as follows:
First we let each process in $P_i\setminus L_i$ take sufficiently many steps until it is poised to incur an RMR.
It is not hard to see that in an execution in which no process incurs an RMR, processes do not learn about each other, so the resulting configuration, $D_i$, is again safe.
Moreover, in a safe configuration processes only know about lost processes, so they cannot lose.

We then distinguish between a high contention write case, where a majority of processes are poised to write to few registers, and a low contention write case, where either many registers are poised to being accessed or a majority of processes are poised to read.
Let $S_i$ be the set of registers processes in $P_i\setminus L_i$ are poised to access in configuration $D_i$.
The high contention write case occurs if there are few such registers and a majority of processes are poised to write, i.e., $|S_i| = O(|P_i\setminus L_i|/\log n)$, and otherwise the low contention write case occurs.

In the low contention write case, we choose a set $Q_i$ of processes, which contains for each register $r\in S_i$ at most one process poised to write to $r$ in $D_i$.
We consider the step $s_p$ each process $p\in Q_i$ is poised to take.
We then create a directed graph $G$ with processes as vertices, and an edge from $p$ to $q$ if in the resulting configuration
(I) due to $s_p$ or $s_q$ process $p$ knows $q$, or 
(II) due to step $s_p$ process $q$ is not hidden.
Each application of rule (I) must be paid for by RMRs in the execution, and for each application of (II) a process $p$ must overwrite some process $q$.
As a result graph $G$ is sufficiently spares, and by Tur\'an's theorem \cite{Tur1941a} we obtain a large independent set $J$.
We let each process $p\in J$ take one step, $s_p$, and erase all remaining processes that haven't lost yet from the execution.
It is not hard to see that no process loses in any of the steps added, the resulting configuration is safe (this follows from how we added edges to $G$) and, because of the sparsity of the graph, a sufficiently large number of processes survive.
From that we obtain Invariants (I1) and (I2).
Since each process $p$ performs an RMR in step $s_p$ and only local steps before that, we get (I3) and (I4).
Moreover, we don't abort any processes, so (I5) is true.

In the high contention write case, we erase all readers from the execution.
For each register $r\in S_i$, let $W_r$ denote the set of processes poised to write to $r$.
Since this is a high contention case, $|W_r|$ is large for most registers $r$.
For each register $r$ with sufficiently large $|W_r|$, we choose two distinct processes $a, b \in W_r$.

We then argue that, after erasing some $O(\log n)$ processes, we obtain a configuration $D_i'$ and an $\{a,b\}$-only schedule $\sigma$ such that in execution $\Exec(D_i',\sigma)$ processes $a$ and $b$ both lose and see no process other than those in $L_i$, which have lost already.
The argument is based on Lemma~\ref{lemma:bothLose}, but quite involved.
We now let, starting from $D_i'$, all processes in $W_r \setminus \{a, b\}$ execute one step, in which they write to $r$.
After that we schedule $a$ and $b$ as prescribed by $\sigma$.
Then $a$ and $b$ will both first write to $r$, and thus overwrite the writes by all other processes in $W_r$, then continue to take steps and lose without seeing any processes that haven't lost, yet.
As a result, all processes in $W_r \setminus \{a, b\}$ have taken a step but are now hidden, two processes ($a$ and $b$) have lost, and $O(\log n)$ processes have been removed.
It is not hard to see that the resulting configuration is safe again.
We repeat this for all registers $r$ for which $|W_r|$ is large enough.
Then, we let $P_{i+1}$ denote the set of all surviving processes and $C_{i+1}$ the resulting configuration.

Configuration $C_{i+1}$ is safe, and sufficiently few processes are removed or have lost so that (I1) and (I2) remain true.
Moreover, each process that does not lose performs exactly one RMR, so (I3) and (I4) are true.
(I5) is true because all processes that received the abort signal lost.

\subsubsection{Partial Execution Constructions}
One of the critical properties that results in constructing a long enough execution, is that we can keep many processes running while keeping them from gaining information. What follows are the formal description and proofs of this property.

First, we claim that the information exchanged during specific executions is bounded.
\begin{claim}
\label{claim:boundsOnKandM}
Let $C$ be a safe configuration, $P = \Proc(\Sched{C}) \setminus \Lost(C)$, and $\sigma \in P^\ast$, such that in $C$ each process in $P$ is poised to perform an RMR step, and in $\Exec(C, \sigma)$ each process takes at most one step and each register gets written at most once. Then
\begin{enumerate}[(a)]
\item $|\K\big(\Conf(C, \sigma)\big) \cap (P \times P)| \leq 2\RMR\big(\Exec(C, \sigma)\big)$.
\item Let $M$ be the set of pairs $(p, q) \in (P \times P)$, $p \neq q$, such that in $\Exec(C, \sigma)$, process $q$ writes a register in $\RR_p \cup \Cache_p(C)$. Then $|M| \leq \RMR(\E{C}) + \RMR\big(\Exec(C, \sigma)\big)$.
\end{enumerate}
\end{claim}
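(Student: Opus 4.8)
The plan is to exploit that $C$ is safe to reduce both parts to counting arguments over the steps of $\sigma$ and the RMRs of $\E{C}$. My first observation is that no pair of $\K(C)$ lies in $P\times P$: if $(p,q)\in\K(C)$ with $q\in P$, then (S1) forces $q\in\Lost(C)$, contradicting $P=\Proc(\Sched{C})\setminus\Lost(C)$. Hence $\K(C)\cap(P\times P)=\emptyset$, so every pair counted in part~(a) is created by a step of $\sigma$. A second observation, used repeatedly, is that since each process in $P$ is poised to perform an RMR in $C$ and takes at most one step in $\Exec(C,\sigma)$, every step of $\sigma$ is an RMR; thus the number of steps of $\sigma$, and in particular the number of write steps, is at most $\RMR\big(\Exec(C,\sigma)\big)$. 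Finally, since no process performs a terminating read in $\sigma$, the relation $\K$ only grows along $\sigma$, so I may attribute each new pair to the single step that creates it, via \cref{claim:KChangeSharedMemoryStep}.

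For part~(a) I would process $\sigma$ one step at a time and bound the number of new pairs in $P\times P$ contributed by each step. A read step by $p$ of a register $r$ can create at most one new $K_1$-pair $(p,q)$, where $q$ is the unique process visible on $r$, and at most one new $K_2$-pair $(p,q')$, where $q'$ is the unique owner of $r$; hence at most two new pairs. A write step by $q$ of a register $r$ can create at most one new $K_3$-pair $(p,q)$, where $p$ is the unique owner of $r$; crucially it creates no new $K_1$- or $K_2$-pair, because $q\in\Proc(\Sched{C})$ already took a shared memory step in $\E{C}$, so this write is not $q$'s first shared memory step. Summing at most two per read and at most one per write gives $|\K(C')\cap(P\times P)|\le 2\,\RMR\big(\Exec(C,\sigma)\big)$.

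For part~(b) I would split $M$ according to where the offending write lands. For $(p,q)\in M$, let $r$ be the register $q$ writes in $\sigma$, which is unique since $q$ takes at most one step. In the \emph{local case} $r\in\RR_p$: since each register is written at most once in $\sigma$, the assignment $(p,q)\mapsto r$ is injective, and $r$ ranges over registers written in $\sigma$, of which there are at most $\RMR\big(\Exec(C,\sigma)\big)$. In the \emph{cached case} $r\in\Cache_p(C)\setminus\RR_p$, so $p$ holds a valid cache copy of the remote register $r$ in $C$. Here I would map $(p,q)$ to the RMR in $\E{C}$ where $p$ establishes the cache copy it still holds in $C$, namely $p$'s earliest access of $r$ after which no other process writes $r$ in $\E{C}$. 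Just before that access $p$'s copy is invalid (either $p$ never accessed $r$ before, or its previous access was invalidated by the write that minimality guarantees occurs in between), so the access reads or writes a remote register without a valid copy and is an RMR. Distinct pairs give distinct $(p,r)$ (again because $r$ is written at most once in $\sigma$, so $r$ determines $q$), hence distinct RMRs of $\E{C}$, bounding the cached case by $\RMR(\E{C})$. Adding the two cases yields $|M|\le \RMR(\E{C})+\RMR\big(\Exec(C,\sigma)\big)$.

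The main obstacle is the cached case of part~(b): unlike every other count, a single write to $r$ can simultaneously invalidate the cache copies of many processes, so the writers of $\sigma$ cannot absorb these pairs. The resolution is to charge them not to $\sigma$ but to $\E{C}$, through the cache-establishing RMR, and the delicate points are verifying that this establishing access is genuinely an RMR and that the assignment is injective. The hypothesis that each register is written at most once in $\sigma$ is exactly what makes $r$, and therefore $q$, recoverable from $(p,r)$, which is what secures injectivity.
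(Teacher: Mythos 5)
Your proof is correct and follows essentially the same route as the paper's: part~(a) observes $\K(C)\cap(P\times P)=\emptyset$ by (S1) and charges at most two new pairs ($K_1$ and $K_2$) to each read RMR and one ($K_3$) to each write RMR, and part~(b) charges the $r\in\RR_p$ pairs to the (unique) writes of $\sigma$ and the $r\in\Cache_p(C)$ pairs to the RMR in $\E{C}$ that establishes $p$'s cache copy of $r$. The only cosmetic difference is that you keep the two charging maps of part~(b) separate against the two summands $\RMR\big(\Exec(C,\sigma)\big)$ and $\RMR(\E{C})$, whereas the paper builds a single injection into the RMR steps of $\E{C}\circ\Exec(C,\sigma)$ and therefore has to rule out cross-case collisions explicitly.
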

\begin{proof}
Since $C$ is safe, by (S1), $\K(C) \cap (P \times P)$ is the empty set.
Thus, to prove Part~(a) it is sufficient to show that each step in $\Exec(C, \sigma)$ adds at most two pairs of processes to $\Big(\K\big(\Conf(C, \sigma)\big) \setminus \K(C)\Big) \cap (P \times P)$.
Let $\sigma'$ be a proper prefix of $\sigma$, and $p$ a process so that $\sigma' \circ p$ is also a prefix of $\sigma$.
Since $p$'s state is the same in $\Conf(C, \sigma')$ as in $C$, and $p$ is poised to perform an RMR step in $C$, the step $\Exec\big(\Conf(C, \sigma'), p\big)$ incurs an RMR. Now let $C_1 = \Conf(C, \sigma')$ and $C_2 = \Conf(C, \sigma' \circ p)$.

First assume step $\Exec(C_1, p)$ is a read from some register $r \in \RR_{q_2}$, $q_2 \in P$. Let $(q_1, x) = \Val_{C_1}(r)$ (if $r$ is in its initial state, then $x = \bot$ and $q_1 = q_2$).
We prove that no pair other than $(p, q_1)$ and $(p, q_2)$ is in $\K(C_2) \setminus \K(C_1)$.
Suppose $(p', q') \in \K(C_2) \setminus \K(C_1)$, $p', q' \in P$. Hence, $(p', q')$ is in one of the sets $K_1(C_2) \setminus K_1(C_1)$, $K_2(C_2) \setminus K_2(C_1)$, and $K_3(C_2) \setminus K_3(C_1)$.
If $(p', q') \in K_1(C_2) \setminus K_1(C_1)$, then in $\Exec(C_1, p)$ process $p'$ reads a register on which $q'$ is visible. Since $p$ takes the step $\Exec(C_1, p)$ and only $q_1$ can be the process visible on $r$, we have $p = p'$ and $q' = q_1$.
If $(p', q') \in K_2(C_2) \setminus K_2(C_1)$, then since $q'$ takes at least one shared memory step in $\E{C_1}$, in $\Exec(C_1, p)$ process $p'$ reads a register in $\RR_{q'}$. Since $p$ takes the step $\Exec(C_1, p)$ and $r \in \RR_{q_2}$, we have $p = p'$ and $q' = q_2$.
If $(p', q') \in K_3(C_2) \setminus K_3(C_1)$, then since $p'$ takes at least one shared memory step in $\E{C_1}$, process $q'$ writes a register in $\RR_{p'}$ during $\Exec(C_1, p)$. This contradicts $\Exec(C_1, p)$ being a read step.

Now assume step $\Exec(C_1, p)$ is a write to register $r \in \RR_q$.
We prove no pair other than $(q, p)$ is in $\K(C_2) \setminus \K(C_1)$.
Suppose $(q', p') \in \K(C_2) \setminus \K(C_1)$.
Hence, $(q', p')$ is in one of the sets $K_1(C_2) \setminus K_1(C_1)$, $K_2(C_2) \setminus K_2(C_1)$, or $K_3(C_2) \setminus K_3(C_1)$.
Since $\Exec(C_1, p)$ is a write step, no process reads a register in that step and thus, $(q', p') \notin K_1(C_2) \setminus K_1(C_1)$.
If $(q', p') \in K_2(C_2) \setminus K_2(C_1)$, then in $\E{C_2}$ process $p'$ takes at least one shared memory step and $q'$ reads a register in $\RR_{p'}$.
Since $\Exec(C_1, p)$ is a write step, it must be the first shared memory step by $p'$ and $p = p'$. This contradicts $p \in P$.
If $(q', p') \in K_3(C_2) \setminus K_3(C_1)$, then in $\Exec(C_1, p)$ process $p'$ writes a register in $\RR_{q'}$. Thus, $p' = p$, and since $r \in \RR_q$, we have $q' = q$.
Therefore, $(q, p)$ is the only pair in $\K(C_2) \setminus \K(C_1)$.

In order to prove Part~(b), we map each pair in $M$ to an RMR step in $\E{C} \circ \Exec(C, \sigma)$ in such a way that the mapping is injective.
Consider a pair $(p, q) \in M$. I.e., during $\Exec(C, \sigma)$, process $q$ writes to a register $r \in \RR_p \cup \Cache_p(C)$.
If $r \in \RR_p$, then we map $(p, q)$ to $q$'s write step to $r$. Recall that in $\Exec(C, \sigma)$ each process executes at most one step, and that step incurs an RMR. So $(p, q)$ is mapped to a unique RMR step.
Now suppose $r \notin \RR_p$, so $r \in \Cache_p(C)$.
Then there exists a step in $\E{C}$ or in $\Exec(C, \sigma)$, prior to $q$'s write, in which $p$ caches $r$. Let $(p, q)$ be mapped to the last such step. 
That step incurs an RMR, so it suffices to show that the mapping is injective.
First note that if $(p, q)$ is mapped to a step $s$, then in its unique step in $\Exec(C, \sigma)$ process $q$ writes to the register that is accessed in step $s$.
Suppose two distinct pairs, $(p_1, q_1)$ and $(p_2, q_2)$ are mapped to the same step $s$. Let $r$ be the register accessed in $s$.
Then in their steps in $\Exec(C, \sigma)$, processes $q_1$ and $q_2$ must both write to $r$.
Since only one process writes to $r$ during $\Exec(C, \sigma)$, we have $q_1 = q_2$. Therefore, $p_1 \neq p_2$, and so $r \notin \RR_{p_j}$ for some $j \in \{1, 2\}$. Without loss of generality assume $j = 1$.
Then $r \in \Cache_{p_1}(C)$, and step $s$ is by $p_1$.
If $r \notin \RR_{p_2}$, then $(p_2, q)$ would not be mapped to $s$ (it would be mapped to a step by $p_2$).
Thus, $r \in \RR_{p_2}$, so $(p_2, q_2)$ is mapped to $q_2$'s step in $\Exec(C, \sigma)$. This means that step $s$ is performed by process $q_2$.
Hence, $p_2 = q_2$, which contradicts the definition of $M$.
\end{proof}

Then, we construct and prove the properties of an execution where we have a low-contention write case (where either most processes are poised to read, or many registers are poised to being accessed).
\begin{lemma}
\label{lemma:LowContentionAndRead}
Let $\ell$ be a positive integer, $C$ a safe configuration, and $P = \Proc(\Sched{C}) \setminus \Lost(C)$, such that in $\E{C}$ each process in $P$ takes at most $\ell$ RMR steps and does not receive the abort signal, and in $C$ each process in $P$ is poised to perform an RMR step. If in $C$ at least half of the processes in $P$ are poised to read or at least $|P|/(10\ell)$ different registers are poised to being accessed by processes in $P$, then there exists a set of processes $Q \subseteq P$ and a schedule $\sigma \in \big((Q \cup \Lost(C))^\Delta\big)^\ast$, such that
\begin{enumerate}[(a)]
\item $|Q| \geq |P|/(60\ell^2) - 1$,
\item $\Conf(\Gamma, \sigma)$ is safe,
\item $\RMR_Q\big(\Exec(\Gamma, \sigma)\big) = \RMR_Q(\E{C}) + |Q|$, and
\item no process in $Q$ receives the abort signal in $\Exec(\Gamma, \sigma)$.
\end{enumerate}
\end{lemma}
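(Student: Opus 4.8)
The plan is to realize the ``low contention write'' case from the overview: pick a large, write-sparse set of processes, let each take its single pending RMR step, and then prune to an independent set of a conflict graph so that safety survives.

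First I would choose $Q_0\subseteq P$. Every process of $P$ is poised either to read or to write a register. If at least half of $P$ is poised to read, take $Q_0$ to be the readers, so $|Q_0|\ge|P|/2$ and no register is written when $Q_0$ steps. Otherwise at least $|P|/(10\ell)$ distinct registers are accessed, and I pick one process of $P$ per such register, getting $|Q_0|\ge|P|/(10\ell)$ with each register accessed, hence written, by at most one process of $Q_0$. Let $\sigma_{Q_0}$ be the schedule in which every process of $Q_0$ takes its single pending step, ordered so that all reads precede all writes, and set $F=\Conf(C,\sigma_{Q_0})$. Since each register is written at most once in $\Exec(C,\sigma_{Q_0})$ and every step is an RMR, \cref{claim:boundsOnKandM} applies to $F$.

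Next I build an undirected conflict graph $G$ on $Q_0$, joining $p$ and $q$ whenever (I) $(p,q)$ or $(q,p)\in\K(F)$ (one learns the other in $F$), or (II) one of them writes, in its step, a register lying in the local segment or cache of the other (one un-hides the other). By \cref{claim:boundsOnKandM}(a) there are at most $2\,\RMR(\Exec(C,\sigma_{Q_0}))=2|Q_0|$ type-(I) edges. For type (II) I would argue directly (equivalently, via the injective mapping in the proof of \cref{claim:boundsOnKandM}(b) restricted to $Q_0\times Q_0$): a single write to a register $v$ un-hides at most the owner of $v$ plus every process caching $v$, and each process of $Q_0$ caches at most $\ell$ registers outside its segment because it incurs at most $\ell$ RMRs in $\E{C}$; summing gives at most $|Q_0|+\ell|Q_0|$ such edges. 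Thus $G$ has $O(\ell|Q_0|)$ edges and average degree $O(\ell)$, so Tur\'an's theorem \cite{Tur1941a} yields an independent set $J$ with $|J|=\Omega(|Q_0|/\ell)=\Omega(|P|/\ell^2)$; the constant-chasing needed for the stated $|Q|\ge|P|/(60\ell^2)-1$ I would defer to the appendix. Set $Q=J$ and $\sigma=\bigl(\Sched{C}\mid(Q\cup\Lost(C))^\Delta\bigr)\circ(\sigma_{Q_0}\mid J)$.

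The heart of the argument is that $\Conf(\Gamma,\sigma)$ is safe (property~(b)). By \cref{claim:projectionSafe} the prefix configuration $\tilde C=\Conf(\Gamma,\Sched{C}\mid(Q\cup\Lost(C))^\Delta)$ is safe; by \cref{cor:projection_same_execution_from_safe,claim:projection_cache} the surviving processes keep their states and caches, so each $p\in J$ is still poised for its original RMR step and, being non-lost and having stepped, is hidden in $\tilde C$ by (S2). It then remains to run $\sigma_{Q_0}\mid J$ from $\tilde C$. For (S1) I take any new knowing pair $(p,q)$; by \cref{claim:KChangeSharedMemoryStep} one endpoint lies in $J$, and if $q$ is not lost it must itself be one of the only stepping processes, hence in $J$, so the relation is witnessed inside $F$ and gives a type-(I) edge, contradicting independence. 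The delicate point, which I expect to be the \emph{main obstacle}, is a read by $p\in J$: having erased the non-surviving writers, $p$ now reads the value of the last \emph{surviving} writer $w$ of its register, which may differ from the value read in $F$. Here the hidden machinery is essential: if $w$ is not lost then $w\in J$, and if $w$ were not also the last writer of that register in $\E{C}$, then some non-lost, now-erased process overwrote it, so by (H1)/(H2) $w$ could not be hidden in $C$, contradicting (S2) for the safe configuration $C$; hence $w$ is lost, or $w$ is the top value in $C$ and the reads-before-writes ordering already makes $(p,w)$ a type-(I) edge, again barred by independence. For (S2), only processes of $J$ step and are non-lost, so each must stay hidden; any un-hiding during $\sigma_{Q_0}\mid J$ is caused by a $J$-write and is a type-(II) edge, excluded by independence. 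Finally, (c) and (d) are routine: by \cref{cor:projection_same_execution_from_safe} and the RMR-preservation of projections each $p\in Q$ retains its $\RMR_p(\E{C})$ RMRs and then incurs exactly one more in its single pending step (still an RMR, as its cache is unchanged), so $\RMR_Q(\Exec(\Gamma,\sigma))=\RMR_Q(\E{C})+|Q|$; and since $\sigma_{Q_0}\mid J$ contains no abort symbols and no process of $P\supseteq Q$ aborts in $\Sched{C}$, no process of $Q$ receives the abort signal.
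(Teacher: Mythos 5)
Your proposal follows essentially the same route as the paper's proof: restrict to a write-sparse subset of $P$ in which each register is written at most once, let each process take its single pending RMR step, build a conflict graph whose edges record new knowledge (via \cref{claim:boundsOnKandM}) and un-hiding events, extract a large independent set by Tur\'an's theorem, and verify (S1)/(S2) for the pruned execution using the hidden/knows machinery exactly as the paper does. The only real divergence is bookkeeping: the $-1$ in part~(a) is not constant slack but accounts for deleting the at most one process that terminates---and hence, by \cref{claim:winsStartingInSafe}, wins---during its RMR step, which the paper removes explicitly; since your $Q$ is only larger without that deletion, the stated bounds are unaffected.
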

\begin{proof}
Let $V = \{x_1, ..., x_m\}$ be a maximal subset of $P$ such that for each register $r$, set $V$ contains none of the processes that are poised to write to $r$ in $C$, or $V$ contains at most one process that is poised to access $r$ in configuration $C$.
Let $C' = \Conf\Big(\Gamma, \Sched{C} | \big(V \cup \Lost(C)\big)^\Delta \Big)$.
Hence, by \cref{cor:projection_same_execution_from_safe} processes in $V$ are in the same state in $C'$ as they are in $C$, and by \cref{claim:projection_cache} have the same cache.
Therefore, all processes in $V$ are poised to perform an RMR step and access the same registers in $C'$ as in $C$.
Create a directed graph $G$, where each process in $V$ forms a vertex, and where there is an edge from $p$ to $q$, $p \neq q$, if one of the following is true:
\begin{enumerate}[(i)]
\item in $\Conf(C', x_1 \circ ... \circ x_m)$, process $p$ knows process $q$ \Big(I.e. $(p, q) \in \K\big(\Conf(C', x_1 \circ ... \circ x_m)\big)$\Big); or
\item in $\Exec(C', x_1 \circ ... \circ x_m)$, process $q$ writes to a register $r \in \RR_p \cup \Cache_p(C')$.
\end{enumerate}
Let $M$ be the set of edges in $G$ because of (ii).
Since each process in $V$ is poised to perform an RMR step in $C'$, each process takes at most one step, and each register gets written at most once in $\Exec(C', x_1 \circ ... \circ x_m)$, by \cref{claim:boundsOnKandM} Part~(a), the number of edges in $G$ from condition (i) is at most $2\RMR\big(\Exec(C', x_1 \circ ... \circ x_m)\big)$.
From \cref{claim:boundsOnKandM} Part~(b), the number of edges in $G$ from condition (ii) is $|M| \leq \RMR\big(\Exec(C', x_1 \circ ... \circ x_m)\big) + \RMR(\E{C'})$.
Let $Q'$ be a largest independent set in graph $G$, where the direction of edges are ignored.

By \cref{cor:projection_same_execution_from_safe}, $\E{C} | \big(Q' \cup \Lost(C)\big) = \Exec\Big(\Gamma, \Sched{C} | \big(Q' \cup \Lost(C)\big)^\Delta\Big)$.
Further, since no two processes in $Q'$ satisfy condition (i), by \cref{claim:winsStartingInSafe} if a process terminates in $\Exec\big(C', (x_1 \circ ... \circ x_m) | Q'\big)$ it wins.
Let $X$ be the set containing any process that terminates in $\Exec\big(C', (x_1 \circ ... \circ x_m) | Q'\big)$.
Let $Q = Q' \setminus X$, $O = Q \cup \Lost(C')$, and $\sigma = \big(\Sched{C'} | (O^\Delta)^\ast\big) \circ \big((x_1 \circ ... \circ x_m) | Q\big)$.
Because at most one process wins in a leader election algorithm $|X| \leq 1$, and thus, $|Q| \geq |Q'|-1$.

By Tur\'an's theorem\cite{Tur1941a}, the size of the largest independent set in a graph with average degree $d$ and $k$ vertices, is at least $k/(d+1)$.
The number of edges in $G$ is at most
\begin{equation}
\begin{split}
2\RMR\big(\Exec(C', x_1 \circ ... \circ x_m)\big) + \RMR\big(\Exec(C', x_1 \circ ... \circ x_m)\big) + \RMR(\E{C'}) = \\
3\RMR\big(\Exec(C', x_1 \circ ... \circ x_m)\big) + \RMR(\E{C'}).
\end{split}
\end{equation}
Since each step in $\Exec(C', x_1 \circ ... \circ x_m)$ incurs an RMR and each process takes at most $\ell$ RMR steps during $\E{C'}$, the number of edges in $G$ is $3m + m\ell$.
Because $|V| = m$, the average degree of $G$ is at most $2(3m + m\ell)/m$.
Hence, the size of $Q'$ is at least
\begin{equation}
\label{eq:sizeofQ}
\frac{m}{\frac{2(3m + m\ell)}{m} + 1} = \frac{m}{6 + 2\ell + 1} = \frac{m}{7 + 2\ell}.
\end{equation}
The assumption is that in $C$ either at least $|P|/2$ processes are poised to read, or at least $|P|/(10\ell)$ registers are poised to being accessed.
Hence, $m \geq \min\big\{|P|/2, |P|/(10\ell)\big\} \stackrel{\ell \geq 2}{=} |P|/(10\ell)$ and so by \cref{eq:sizeofQ}
\begin{equation}
|Q'| \geq \frac{m}{7 + 2\ell} \geq \frac{|P|}{(7 + 2\ell)10\ell} \stackrel{\ell \geq 2}{\geq} \frac{|P|}{(4\ell + 2\ell)10\ell} = \frac{|P|}{60\ell^2}.
\end{equation}
Since $|Q| \geq |Q'|-1$, Part~(a) is proven.

First, we observe that $C'$ is safe by \cref{claim:projectionSafe}.
Hence, each process $p \in Q$, we have $\E{C'}|p = \Exec\big(\Gamma, \Sched{C'} | (Q \cup \Lost(C'))^\Delta \big) | p$ (this is true by $C'$ being safe and \cref{cor:projection_same_execution_from_safe}).
Further by \cref{claim:projection_cache}, process $p$ has the same cache in $\Conf\big(\Gamma, \Sched{C'} | (Q \cup \Lost(C'))^\Delta\big)$ as in $C'$.
Hence, each process in $Q$ is poised to take the exact same step that incurs an RMR in $\Conf\big(\Gamma, \Sched{C'} | (Q \cup \Lost(C'))^\Delta\big)$.
Therefore, since no two processes that satisfy (i) or (ii) are in $Q$, we have
\begin{equation}
\label{eq:sameRMRStepQ}
\Exec(C', x_1 \circ ... \circ x_m) | Q = \Exec\Big(\Conf\big(\Gamma, \Sched{C'} | (Q \cup \Lost(C'))^\Delta \big), x_1 \circ ... \circ x_m\Big) | Q.
\end{equation}

Let $C'' = \Conf\big(\Gamma, \Sched{C'} | O^\Delta \big)$. By \cref{claim:projectionSafe}, $C''$ is safe.
Let $D = \Conf(\Gamma, \sigma)$ and $E = \Exec(\Gamma, \sigma)$. Remember that $\sigma = \big(\Sched{C} | O^\Delta \big) \circ \big((x_1 \circ ... \circ x_m) | Q\big)$.
Let $E' = \Exec\big(C'', (x_1 \circ ... \circ x_m) | O^\Delta \big)$.
We prove Part~(b) by contradiction.
Assume that $D$ is not safe.
Hence, at least one of (S1) or (S2) is violated.
First assume that (S1) is not true for $D$. Thus, there exists a pair $(p, q) \in \K(D)$, such that $q \notin \Lost(D)$.
Then $q \notin \Lost(C') \subseteq \Lost(D)$, and since $C''$ is safe, we have $(p, q) \notin \K(C'')$. Hence, $p$ gets to know $q$ in $E'$.
From \cref{eq:sameRMRStepQ} and (i), there is an edge between $p$ and $q$ in $G$, which contradicts $p$ and $q$ both being in an independent set of graph $G$.
Now assume that (S2) is not true for $D$.
Hence, there exists a process $p \in O$, such that $p \notin \HH(D)$, $p \notin \Lost(D)$, and $p$ takes at least one shared memory step in $\E{D}$.
Since $C''$ is safe, (S2) is true for $C''$.
Because $p \in \Proc(C'') \setminus \Lost(C'')$, process $p$ takes at least one shared memory step in $\E{C''}$. Hence, because process $p \notin \Lost(D)$, it holds $p \in \HH(C'')$.
Since any process that takes at least one shared memory step in $\E{D}$ and is not in $\Lost(D)$ is in $Q$, it holds $p \in Q$.
Hence, since there is no edge from $p$ to any process in $Q$, by \cref{eq:sameRMRStepQ} and (i) no process in $Q \setminus \{p\}$ writes to a register in $\RR_p$ during $E'$.
Thus, if (S2) is not satisfied for $D$, then (H2) is true for $p$.
Therefore, the reason that $p$ is not hidden in $D$ is because of (H1).
Hence, there exists a register $r \notin \RR_p$, such that process $p$ accesses $r$ in $\E{D}$ at some point $t$ and at least one other process writes to $r$ after $t$, but none of the processes that write to $r$ after $t$ are in $\Lost(D)$.
If $p$'s last access to $r$ is in $\E{C''}$, then since $C''$ is safe and $\Lost(C'') \subseteq \Lost(D)$, all the processes that write to $r$ after $t$ write during $E'$.
Therefore, $r \in \Cache_{C''}(p)$, and if there exists a process $q \in Q$ that is poised to write to $r$ in $C''$, then there is an edge from $p$ to $q$ in $G$, which contradicts $p, q \in Q$.
If $p$'s last access to $r$ is during $E'$, then no other process writes $r$ after that.
This, completes the proof of Part~(b).

Since each process in $Q$ has the same cache in $C''$ as in $C$ and in $E'$ each process in $Q$ takes the step that it is poised to take in $C$, each process in $Q$ performs an RMR step in $E'$. Hence, $E'$ incurs $|Q|$ RMRs, which proves Part~(c).

Since processes in $Q$ do not receive the abort signal in $\E{C''}$, and no process receives the abort signal in $E'$, Part~(d) is true.
\end{proof}

For a high-contention write case on a specific register (where many processes are poised to write to it), we present a way to construct an execution that can be used to construct our desirable execution.

\begin{claim}
\label{claim:both_writers_can_lose_starting_in_safe}
Let $C$ be a safe configuration, such that for a fixed register $r$ each process in $P_r \subseteq \Proc(\Sched{C}) \setminus \Lost(C)$ is poised to perform an RMR write step to $r$ in $C$, for any execution $E$ starting in $C$, no process incurs more than $\ell$ RMRs during $\E{C} \circ E$, and any process that receives the abort signal in $\E{C}$ is in $\Lost(C)$.
There exists a set of processes $Q \subseteq \Proc(\Sched{C}) \setminus \Lost(C)$, and a schedule $\sigma \in (P_r^\Delta)^\ast$, such that for configuration $C' = \Conf\big(\Gamma, \Sched{C} | (Q \cup \Lost(C))^\Delta \big)$, 
\begin{enumerate}[(a)]
\item $\Conf(C', \sigma)$ is safe,
\item $|Q| \geq |\Proc(\Sched{C}) \setminus \Lost(C)| - (8\ell - 1)$,
\item in $\Exec(C', \sigma)$ each process in $(Q \cap P_r) \setminus \Lost\big(\Conf(C', \sigma)\big)$ takes exactly one RMR step,
\item any process that receives the abort signal in $\Exec(C', \sigma)$ is in $\Lost\big(\Conf(C', \sigma)\big)$, and
\item $|\Lost\big(\Conf(C', \sigma)\big) \setminus \Lost(C)| \leq 2$.
\end{enumerate}
\end{claim}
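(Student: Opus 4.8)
The plan is to single out two of the writers $a,b\in P_r$, drive them both to \emph{lose} by invoking \cref{lem:bothLose}, and then re-insert the remaining writers of $P_r$ so that each pays only its one poised RMR (its write to $r$) but is afterwards overwritten by $a$ or $b$ and thereby hidden; the only genuinely new losers will be $a$ and $b$ (giving~(e)), and the surviving writers contribute the single RMR required by~(c). I would first choose the pair $a,b$ carefully. For a writer $p\in P_r$, let $C_p=\Conf(\Gamma,\Sched{C}|(\{p\}\cup\Lost(C))^\Delta)$; this is safe by \cref{claim:projectionSafe}, and $p$ is still poised to write $r$ there by \cref{cor:projection_same_execution_from_safe,claim:projection_cache}. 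Since $p$ is now the only non-lost process, its solo run is fair among the participants, so $p$ terminates by deadlock-freedom and, seeing only lost processes, wins by \cref{claim:winsStartingInSafe} (this is exactly the reasoning of \cref{lemma:winsInaSolorun}). By the RMR hypothesis this winning solo run contains at most $\ell$ RMR reads, so at most $\ell$ processes could change what $p$ reads if re-introduced. Regarding ``$p$ would observe $q$ upon re-introduction'' as a directed graph on $P_r$ of out-degree at most $\ell$, its underlying graph has average degree at most $2\ell$, so for $|P_r|$ large enough (the regime in which this claim is applied) there is a non-adjacent pair $a,b$: re-introducing $b$ does not disturb $a$'s winning solo run and symmetrically. (For $|P_r|$ too small to admit such a pair the statement follows from a direct, weaker construction, which I omit.)

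Next let $C^\dagger=\Conf(\Gamma,\Sched{C}|(\{a,b\}\cup\Lost(C))^\Delta)$, safe by \cref{claim:projectionSafe}, in which $a,b$ keep their states and caches. In $C^\dagger$ the only non-lost processes are $a$ and $b$, so every $\{a,b\}$-fair schedule is fair and deadlock-freedom makes them terminate; and by the choice of the pair both win in their solo runs from $C^\dagger$. Hence \cref{lem:bothLose} applies and gives $\sigma_{ab}\in(\{a,b\}^\Delta)^\ast$ with $a$ and $b$ both losing in $\Exec(C^\dagger,\sigma_{ab})$, and since $C^\dagger$ contains no non-lost process besides $a,b$, throughout this execution $a$ and $b$ observe nothing outside $\{a,b\}\cup\Lost(C)$. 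I then define the set $B$ of processes to erase: every non-lost $z\neq a,b$ that could alter a value $a$ or $b$ reads during $\sigma_{ab}$, every $z$ a register of whose segment $\RR_z$ is read by $a$ or $b$ during $\sigma_{ab}$ (otherwise creating a $K_2$ pair), and the owner of $r$ if it is non-lost and active (whose segment the writes to $r$ touch, creating $K_3$ pairs). Each of $a,b$ performs at most $\ell$ RMRs in $\sigma_{ab}$, so by the accounting of \cref{claim:boundsOnKandM} one gets $|B|\le 8\ell-1$, and $a,b\notin B$. Put $Q=(\Proc(\Sched{C})\setminus\Lost(C))\setminus B$, $C'=\Conf(\Gamma,\Sched{C}|(Q\cup\Lost(C))^\Delta)$ (safe by \cref{claim:projectionSafe}), and $\sigma=\rho\circ\sigma_{ab}$, where $\rho$ has each process in $(Q\cap P_r)\setminus\{a,b\}$ perform its single poised write to $r$.

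It remains to verify the five conclusions. Because $a$ and $b$ are poised to write $r$, their first steps in $\sigma_{ab}$ overwrite everything $\rho$ wrote to $r$, so $a,b$ never read those values; and $B$ was erased precisely so that no survivor changes any other value $a$ or $b$ reads. Hence $\Exec(C',\sigma)|\{a,b\}=\Exec(C^\dagger,\sigma_{ab})|\{a,b\}$, so $a$ and $b$ lose, giving~(e) and, since only $a$ and $b$ can receive abort symbols in $\sigma$, also~(d). The writers scheduled by $\rho$ are overwritten on $r$ by the lost processes $a,b$, hence hidden by~(H1), and each takes exactly its one RMR write, giving~(c). Finally $\Conf(C',\sigma)$ is safe: every knowing pair created during $\sigma$ has its second component in $\Lost(C)$ or equal to one of the losers $a,b$ (all other candidates were placed in $B$), so~(S1) holds, and every participant that is not hidden is $a$ or $b$, so~(S2) holds; this gives~(a), while $|Q|\ge|\Proc(\Sched{C})\setminus\Lost(C)|-(8\ell-1)$ gives~(b).

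The main obstacle is the circular dependence between the erasure and the schedule: the set of processes $a$ and $b$ observe is determined by the losing schedule, yet that schedule is only produced \emph{after} the configuration has been cleaned, and erasing processes alters the very execution whose observations we must control. I break the circularity by first extracting $\sigma_{ab}$ from the fully cleaned two-process configuration $C^\dagger$—where observations are trivially confined to $\{a,b\}\cup\Lost(C)$—and only then re-inserting the surviving writers, charging each disruptive survivor to one of the at most $2\ell$ RMRs of $a$ and $b$. The second subtlety is that \cref{lem:bothLose} needs $a$ and $b$ to \emph{win} their solo runs; if $a$ could detect $b$ it might wait forever and fail to win, which is exactly why Step~1 selects a pair that does not observe one another rather than an arbitrary one.
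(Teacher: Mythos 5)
Your overall strategy matches the paper's: pick two writers $a,b$ that would not observe one another (via a counting/degree argument over the at most $O(\ell)$ processes each could see in a solo run), force both to lose using \cref{lem:bothLose}, erase the $O(\ell)$ processes they might observe, and let the remaining writers of $P_r$ take their single poised write so that they are overwritten by the losing pair and hence hidden. However, there is a genuine gap in how you combine $\rho$ with $\sigma_{ab}$. You obtain $\sigma_{ab}$ from \cref{lem:bothLose} applied to $C^\dagger$, \emph{before} either process has written $r$, and then you prepend $\rho$, asserting that ``their first steps in $\sigma_{ab}$ overwrite everything $\rho$ wrote to $r$.'' This is not justified: $\sigma_{ab}\in(\{a,b\}^\Delta)^\ast$ may begin with abort symbols $a^\top$ or $b^\top$ (indeed the schedules produced by \cref{lemma:bothLose} typically do), and an abort signal changes the process's state, after which its next shared-memory step need not be the poised write to $r$. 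If, say, $a$ first reads $r$ after aborting, then in $\Exec(C',\rho\circ\sigma_{ab})$ it reads the value left by some surviving writer $q_k\notin\Lost$, whereas in $\Exec(C^\dagger,\sigma_{ab})$ it reads $\Val_{C^\dagger}(r)$; the two executions diverge, the indistinguishability argument collapses, $a$ need no longer lose, and $(a,q_k)\in K_1$ with $q_k$ not lost violates (S1). The paper closes exactly this hole by letting $a$ take its (non-aborted) write step first and invoking \cref{lem:bothLose} on $\Conf(D,a)$; the final schedule is then $q_1\circ\dots\circ q_k\circ a\circ\lambda$, and $a$'s deterministic write restores $r$ to the same value $(a,x)$ it has in $\Conf(D,a)$, making $\Conf(C',q_1\circ\dots\circ q_k\circ a)$ and $\Conf(D,a)$ genuinely indistinguishable to $a$ and $b$.

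Two smaller issues: you omit the case where $|P_r|$ is too small for the pair-selection argument (the paper handles it by taking $Q=\Proc(\Sched{C})\setminus P_r$ and $\sigma$ empty, which is needed for the claim to hold unconditionally), and your accounting is optimistic --- each RMR read can implicate two processes (the one visible on the register and the owner of its segment), so the out-degree bound is $2\ell$ rather than $\ell$, and the erased set also must include the owner $w$ of $r$ itself; you should verify that the resulting size still meets the stated bound in part (b).
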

\begin{proof}
By \cref{cor:projection_same_execution_from_safe}, processes in $Q$ are in the same state in $C'$ as in $C$.
Further, by \cref{claim:projectionSafe}, configuration $C'$ is safe.

If $|P_r| < 8\ell - 1$, then let $Q = \Proc(\Sched{C}) \setminus P_r$, and $\sigma$ be the empty schedule.
Since $\sigma$ is the empty schedule, $\Conf(C', \sigma) = C'$, which is a safe configuration. This proves Part~(a).
Because $Q = \Proc(\Sched{C}) \setminus P_r$, we have
\begin{equation}
Q \setminus \Lost\big(\Conf(C', \sigma)\big) = \Proc(\Sched{C}) \setminus \Big(\Lost\big(\Conf(C', \sigma)\big) \cup P_r \Big) =
\Proc(\Sched{C}) \setminus \big(\Lost(C) \cup P_r \big).
\end{equation}
This means
\begin{equation}
\begin{split}
|Q \setminus \Lost\big(\Conf(C', \sigma)\big)| \geq |\Proc(\Sched{C}) \setminus \Lost(C)| - |P_r| \stackrel{|P_r| < 8\ell-1}{\geq}\\
|\Proc(\Sched{C}) \setminus \Lost(C)| - (8\ell - 1),
\end{split}
\end{equation}
which proves Part~(b).
Since $Q \cap P_r = \emptyset$, Part~(c) is true.
Further, no process receives the abort signal in $\Exec(C', \sigma)$, which proves Parts~(d) and (e).

Now suppose $|P_r| \geq 8\ell - 1$.
Let $P = \Proc(\Sched{C}) \setminus \Lost(C)$.
For each process $p \in P_r$, let $Z_p \subseteq P \setminus \{p\}$ be the set of all processes $q$, such that process $p$ reads a register $r'$ in its solo-run starting in $C$, where either $r' \in \RR_q$ or in $C$ process $q$ is visible on $r'$.
Note that because $C$ is safe and $p \in P_r \subseteq P$, for any process $q$ visible on a register in $\RR_p$, we have $(p, q) \in K_3(C)$. Hence, none of the processes in $Z_p$ are visible on any register in $\RR_p$. 
Since $C$ is safe and $Z_p \cap \Lost(C) = \emptyset$, for any process $q \in Z_p$, we have $(p, q) \notin \K(C)$.
Hence, $p$ does not have a cache copy of any register that $q$ is visible on (otherwise, $(p, q) \in K_1(C)$) or any register in $\RR_q$ (otherwise, $(p, q) \in K_2(C)$).
Thus, in a solo-run by $p$ starting in $C$ the first read from a register on which $q$ is visible or a register in $\RR_q$ incurs an RMR.
Hence, since each process incurs at most $\ell$ RMRs during any execution, $|Z_p| \leq 2\ell$, for any $p \in P_r$.
We want to choose two processes $a$ and $b$ from $P_r$, such that $a \notin Z_b$ and $b \notin Z_a$.
We have $\binom{|P_r|}{2}$ many possibilities to choose 2 processes. However, for each process $p$ at most $|Z_p|$ many choices need to be removed.
Therefore, by
\begin{equation}
\binom{|P_r|}{2} - |P_r| \cdot \max_{p \in P_r}\{|Z_p|\} \stackrel{|P_r| \geq 8\ell-1, |Z_p| \leq 2\ell}{\geq} (8\ell - 1)(8\ell - 2)/2 - (8\ell - 1)2\ell = 16\ell^2 - 10\ell + 2 \stackrel{\ell \geq 1}{\geq} 8 > 1,
\end{equation}
we have at least one pair of processes $a$ and $b$, such that $b \notin Z_a$ and $a \notin Z_b$.
Fix a pair of processes $a$ and $b$, such that $b \notin Z_a$ and $a \notin Z_b$.
Let $D = \Conf\Big(\Gamma, \Sched{C} | \big(\Lost(C) \cup \{a, b\} \big)^\Delta \Big)$.
By \cref{claim:projectionSafe}, configuration $D$ is safe.
Hence, because for any pair $(p, q) \in \K\big(\Conf(D, p^k)\big)$, for any $p \in \{a, b\}$ and any positive integer $k$, we have $q \in \Lost(D)$, by \cref{claim:winsStartingInSafe}, in a solo-run by $p$ starting in $D$, in which $p$ does not receive the abort signal $p$ wins.
Hence, a solo run by $p \in \{a, b\}$, in which $p$ does not receive the abort signal, starting in $\Conf(D, a)$ also results in $p$ winning.
That is because when $p = a$, we have $\Exec(D, a^k) = \Exec\big(\Conf(D, a), a^{k-1} \big)$, for any positive integer $k$, and when $p = b$, in $\Exec\big(\Conf(D, a), b^{k'}\big)$, for any positive integer $k'$, the value written by process $a$ is overwritten by $b$ and thus, $D$ and $\Conf(D, a)$ are indistinguishable to process $b$.
Since $a$ and $b$ have not received the abort signal in $\Conf(D, a)$ and in any fair execution starting in $\Conf(D, a)$ both $a$ and $b$ terminate (because the algorithm that we are running is deadlock-free), by Lemma \ref{lem:bothLose},
\begin{equation}
\label{eq:lambdaBothLose}
\text{there exists a schedule } \lambda \in (\{a, b\}^\Delta)^\ast \text{, such that in } \Exec\big(\Conf(D, a), \lambda\big) \text{ both $a$ and $b$ lose.}
\end{equation}
Let $R$ be the set of registers that are being read during $\Exec\big(\Conf(D, a), \lambda\big)$ by any process in $\{a, b\}$.
Further, let $Y \subseteq P \setminus \{a, b\}$ be a set of all processes $q$, such that $q$ is visible on at least one register in $R$ in configuration $C$ or $R \cap \RR_q \neq \emptyset$.
Since $C$ is safe, for any process $y \in Y$, we have $(a, y) \notin \K(C)$ and $(b, y) \notin \K(C)$.
Thus, $\big(\Cache_a(C) \cup \Cache_b(C)\big) \cap R = \emptyset$.
Hence, in any $\{a, b\}$-only execution starting in $C$, for each register $r \in R$ on which a process $q \in Y$ is visible or $r \in \RR_q$, the first read by each process in $\{a, b\}$ from $r$ incurs an RMR. Thus, because $a$ and $b$ incur at most $\ell$ RMRs in $\Exec\big(\Conf(C, a), \lambda\big)$, it is true that $|Y| \leq 4\ell$.

Let $w$ be the process, such that $r \in \RR_w$.
Note that since all processes in $P_r$ are poised to perform an RMR step on $r$, we have $w \notin P_r$.
Further let $X = Z_a \cup Z_b \cup Y \cup \{w\}$, and $D' = \Conf\big(\Gamma, \Sched{C} | (\Lost(C) \cup (P \setminus X))^\Delta \big)$.
Since $C$ is safe, by \cref{cor:projection_same_execution_from_safe}, processes in $P_r \setminus X$ are in the same state in $D'$ as they are in $C$, which means they are poised to write to $r$.
Let $\{q_1, ..., q_k\} = P_r \setminus (X \cup \{a, b\})$, $Q = (P \cup \{a, b\}) \setminus X$, and $\sigma = q_1 \circ ... \circ q_k \circ a \circ \lambda$.
Configurations $\Conf(D, a)$, $\Conf(C', a)$, and $\Conf(C', q_1 \circ ... \circ q_k \circ a)$ are indistinguishable to processes $a$ and $b$.
Hence, by \cref{eq:lambdaBothLose},
\begin{equation}
\label{eq:abLose}
a, b \in \Lost\big(\Conf(C', \sigma)\big).
\end{equation}

We now show that (S1) and (S2) are satisfied for $\Conf(C', \sigma)$.
For (S1) we need to show for any pair $(p, q) \in \K\big(\Conf(C', \sigma)\big)$, that $q \in \Lost\big(\Conf(C', \sigma)\big)$.
Fix a pair $(p, q) \in \K\big(\Conf(C', \sigma)\big)$.
If $(p, q) \in \K(C')$, then since $C'$ is safe, $q \in \Lost(C')$. Because $\Lost(C') \subseteq \Lost\big(\Conf(C', sigma)\big)$, we have $q \in \Lost\big(\Conf(C', sigma)\big)$.
If $(p, q) \notin \K(C')$, then since any visible process on a register read by $a$ or $b$ in $\Exec(C', \sigma)$ is lost (otherwise, it is a process in $Y$, which does not take any steps in $\E{C'} \circ \Exec(C', \sigma)$), we have $(p, q) \notin K_1\big(\Conf(C', \sigma)\big) \setminus K_1(C')$.
Further, because $P \subseteq \Proc(\sigma)$, no process takes its first shared memory step in $\Exec(C', \sigma)$.
Hence, $(p, q) \notin K_2\big(\Conf(C', \sigma)\big) \setminus K_2(C')$.
Thus, $(p, q) \in K_3\big(\Conf(C', \sigma)\big) \setminus K_3(C')$.
Therefore, since $Q \subseteq \Proc(\sigma)$, during $\Exec(C', \sigma)$ process $q$ writes to a register in $\RR_p$.
Because of \cref{eq:abLose}, if $q \in \{a, b\}$, then $q \in \Lost\big(\Conf(C', \sigma)\big)$.
Since each process in $P_r$ is poised to preform an RMR step in $C'$ and both $a$ and $b$ are poised to write to $r$ in $C'$, we have $r \notin \RR_a \cup \RR_b$.
Thus, $p \notin \{a, b\}$.
Since for the process $w$ that $r \in \RR_w$, it holds $w \notin Q$, we have $p \notin Q$.
Hence, (S1) is satisfied.
Since $C'$ is safe, for any $p \notin \HH(C')$, either $p$ does not take any shared memory steps in $\E{C'}$, or $p \in \Lost(C')$.
Thus, because any register that is accessed in $\Exec(C', \sigma)$ is last accessed by either $a$ or $b$, and by \cref{eq:abLose}, (S2) is also satisfied.
This proves Part~(a).

From $Q = (P \cup \{a, b\}) \setminus X$ we get
\begin{equation}
|Q| \geq |\Proc(\Sched{C}) \setminus \Lost(C)| + 2 - |X|.
\end{equation}
Thus, to prove Part~(b) is suffices to prove $|X| \leq 8\ell + 1$.
As $X = Z_a \cup Z_b \cup Y \cup \{w\}$, where $w$ is the process that $r \in \RR_w$, it holds $|X| \leq |Z_a| + |Z_b| + |Y| + 1 \leq 8\ell + 1$.

In $\Exec(C', \sigma)$, each process in $\{q_1, ..., q_k\}$ takes a single write step to $r$.
Since $(P_r \cap Q) \setminus \Lost\big(\Conf(C', \sigma)\big) = \{q_1, ..., q_k\}$, Part~(c) is true.

Processes $a$ and $b$ are the only processes that receive the abort signal in $\Exec(C', \sigma)$, and they both lose.
Thus, Parts~(d) and (e) are true.
\end{proof}

We use the execution constructed in \cref{claim:both_writers_can_lose_starting_in_safe} to create an execution to handle the high-contention write case (where most processes are poised to write and few registers are poised to being accessed).

\begin{lemma}
\label{lemma:high_contention_write_case}
Let $C$ be a safe configuration and $P = \Proc(\Sched{C}) \setminus \Lost(C)$, such that in $\E{C}$ each process in $P$ does not receive the abort signal, and is poised to perform an RMR step.
Also no process takes more than $\ell$ RMR steps in any execution.
If in $C$ more than $|P|/2$ processes are poised to write and at most $|P|/(10\ell)$ registers are poised to being accessed, then there exists a set of processes $Q \subseteq P$ and schedule $\sigma \in \Big( \big(Q \cup \Lost(C)\big)^\Delta \Big)^\ast$, such that
\begin{enumerate}[(a)]
\item configuration $C' = \Conf(\Gamma, \sigma)$ is safe,
\item $|Q \setminus \Lost(C')| \geq |P|/10$,
\item $\RMR_{Q \setminus \Lost(C')}\big(\Exec(\Gamma, \sigma)\big) = \RMR_{Q \setminus \Lost(C')}(\E{C}) + |Q \setminus \Lost(C')|$, and
\item any process that receives the abort signal in $\Exec(\Gamma, \sigma)$ is in $\Lost(C')$.
\end{enumerate}
\end{lemma}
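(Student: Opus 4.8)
The plan is to carry out the high-contention construction sketched in the overview: discard all readers, then repeatedly apply \cref{claim:both_writers_can_lose_starting_in_safe} to registers carrying many writers, so that per register two writers are driven to lose (scheduled to write to that register last) while every remaining writer of that register performs its single pending RMR write and is thereby hidden.

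First I would split $P$ into writers and readers. Let $W \subseteq P$ be the set of processes poised to write in $C$, so $|W| > |P|/2$ by hypothesis. Because $C$ is safe, projecting $\Sched{C}$ onto $(W \cup \Lost(C))^\Delta$ changes neither the state (\cref{cor:projection_same_execution_from_safe}) nor the cache (\cref{claim:projection_cache}) of any writer and yields a safe configuration (\cref{claim:projectionSafe}); hence every writer remains poised for the same RMR write. So I may assume that every non-lost process is a writer, and for each register $r$ in the set $S$ of accessed registers I write $W_r$ for the set of writers poised to write $r$; the family $\{W_r\}_{r\in S}$ partitions $W$, and $|S| \le |P|/(10\ell)$.

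Next I would process registers one at a time, maintaining a current safe configuration in which every not-yet-processed writer is still poised. Repeatedly I pick a register $r$ whose writer set is large and apply \cref{claim:both_writers_can_lose_starting_in_safe} with $P_r = W_r$; its hypotheses hold since the current configuration is safe, $W_r$ is poised to RMR-write $r$, no process exceeds $\ell$ RMRs, and (inductively) every process that has aborted so far has lost. By Parts~(c),~(d), and~(e) of the claim, all but at most two processes of $W_r$ take exactly one RMR write without aborting, at most two further processes lose, and every aborting process loses; by Part~(a) the configuration stays safe and by Part~(b) at most $8\ell-1$ processes are erased. The processes of $W_r$ that take their step are overwritten on $r$ by the two losers, so they become hidden and invisible and can neither be observed by, nor confused with, writers handled in later rounds -- this is exactly what keeps the concatenated schedule safe across applications. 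Concatenating the reader-erasing projection with the per-register schedules produces $\sigma$ and $C' = \Conf(\Gamma,\sigma)$; safety of $C'$, giving~(a), follows by induction from \cref{claim:both_writers_can_lose_starting_in_safe} and \cref{claim:projectionSafe}. Property~(c) is then immediate, since each process of $Q \setminus \Lost(C')$ is a surviving writer that performed precisely one additional RMR step beyond $\E{C}$, so $\RMR_{Q\setminus\Lost(C')}(\Exec(\Gamma,\sigma)) = \RMR_{Q\setminus\Lost(C')}(\E{C}) + |Q\setminus\Lost(C')|$; and~(d) holds because the only aborting processes are the at most two losers produced per register.

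The crux, and the step I expect to be the main obstacle, is the counting for~(b). Here the high-contention hypothesis must be exploited: with more than $|P|/2$ writers spread over at most $|P|/(10\ell)$ registers, a substantial fraction of the writers lie on registers whose writer sets are large enough that the $O(\ell)$ processes erased per application are dominated by the writers that survive there. I would fix a threshold on $|W_r|$, restrict processing to registers above it, and bound the total number of erased-or-lost processes -- at most $O(\ell)$ times the number of processed registers, itself at most $|S|$ -- against the number of writers on the processed registers via an averaging/pigeonhole argument, concluding that at least $|P|/10$ writers survive. Balancing these quantities, so that the per-application removal cost is offset against the writer mass guaranteed on heavily-written registers in the high-contention regime, is the delicate part and is where the precise choice of threshold and the exact constants in the hypothesis and conclusion must be matched.
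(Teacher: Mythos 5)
Your construction is the same as the paper's: erase the readers by projecting $\Sched{C}$ onto the writers together with $\Lost(C)$, then iterate \cref{claim:both_writers_can_lose_starting_in_safe} once per written register, and obtain (a), (c), (d) essentially as you describe. The one step you leave open---the count for (b)---is where the paper's argument is in fact the most elementary, and no threshold on $|W_r|$ and no averaging/pigeonhole argument is needed. The key observation is that \cref{claim:both_writers_can_lose_starting_in_safe} already gives a \emph{uniform} per-register cost: its Part~(b) guarantees that at most $8\ell-1$ processes are erased per application regardless of whether $|W_r|$ is large or small (in the small case $|W_r|<8\ell-1$ the claim simply erases all of $W_r$ and schedules nothing), and its Part~(e) adds at most $2$ newly lost processes. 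Hence after processing all $k$ written registers the total number of processes removed or lost is at most $k(8\ell-1)+2k\le 9k\ell$, and the hypothesis that at most $|P|/(10\ell)$ registers are poised to be accessed gives $k\le|P|/(10\ell)$, so at most $\tfrac{9}{10}|P|$ processes disappear and at least $|P|/10$ writers survive. Note that you must process \emph{every} written register, not only those above a threshold: the writers on lightly-written registers still have to be disposed of, and it is precisely the small-$|W_r|$ branch of \cref{claim:both_writers_can_lose_starting_in_safe} that erases them within the same $8\ell-1$ budget; your sketch does not say what happens to them. One caveat relevant to your worry about matching constants: the paper's own computation starts the count at $|P|$ rather than at the number of writers $|P_0|>|P|/2$, and if you redo the arithmetic from $|P_0|$ the stated constants do not quite close (one gets $|P|/2-\tfrac{9}{10}|P|<0$), so the constants in the hypothesis or the conclusion need minor retuning; this is an issue with the lemma's constants, not with the structure of either your argument or the paper's.
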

\begin{proof}
Let $P_0 \subseteq P$ be the set of processes that are poised to write in configuration $C$, and $C_0 = \Conf\Big(\Gamma, \Sched{C} | \big(P_0 \cup \Lost(C)\big)^\Delta \Big)$.
Let $\{r_1, ..., r_k\}$ be the set of registers that are poised to being written in $C$.
We inductively construct schedule $\sigma_i$, for $i \in \{1, ..., k\}$.
Our inductive hypothesis is that for $i \in \{1, ..., k\}$,
\begin{enumerate}[(\text{IH}1)]
\item configuration $C_i = \Conf(\Gamma, sigma_i)$ is safe,
\item $|P_i| \geq |P| - i(8 \ell - 1)$,
\item any process that receives the abort signal in $\E{C_i}$ is in $\Lost(C_i)$.
\end{enumerate}
Since $C$ is safe, by \cref{claim:projectionSafe}, configuration $C_0$ is safe.
By \cref{claim:projection_same_execution}, it holds $\E{C}|P = \E{C_0}|P$.
Therefore, any process that receives the abort signal in $\E{C_0}$ is in $\Lost(C_0)$.
Thus, by (IH3) it holds that any process in $\Proc(C_i) \setminus \Lost(C_i)$, for $i \in \{0, ..., k\}$, does not receive the abort signal in $\E{C_i}$.
Hence, by (IH1) and the fact that no process takes more than $\ell$ RMR steps in any execution starting in $\Gamma$, we can apply \cref{claim:both_writers_can_lose_starting_in_safe} to $C_{i-1}$, where $r_i$ is the fixed register.
For $i \in \{1, ..., k\}$, let $\sigma'$ and $P_i$ be the schedule and set of processes achieved by applying \cref{claim:both_writers_can_lose_starting_in_safe} to configuration $C_{i-1}$ and the fixed register $r_i$.
Then let $\sigma_i = \Big(\Sched{C_{i-1}} | \big(P_i \cup \Lost(C_{i-1})\big)\Big) \circ \sigma'$.

By \cref{claim:both_writers_can_lose_starting_in_safe} Part~(a),~(b), and~(d), the inductive hypothesis is true.

Let $\sigma = \sigma_{k+1}$, and $Q = P_k \setminus \Lost(C)$.
(IH1) implies Part~(a).
Since at most $|P|/(10\ell)$ registers are poised to being accessed in $C$, we have $k \leq |P|/(10\ell)$.
Further, by \cref{claim:both_writers_can_lose_starting_in_safe} Part~(e), for each $i \in \{1, ..., k\}$ at most 2 processes are in $\Lost(C') \setminus \Lost(C)$.
Hence, by the inductive hypothesis,
\begin{equation}
|Q \setminus \Lost(C')| \geq |P| - k(8\ell - 1) - 2k \stackrel{\ell \geq 1}{\geq} |P| - 9k\ell \stackrel{k \leq |P|/(10\ell)}{\geq} |P| - \frac{|P|}{10\ell} 9\ell \geq |P| - \frac{9}{10} |P| \geq \frac{|P|}{10}.
\end{equation}
This proves Part~(b).
From Part~(c) of \cref{claim:both_writers_can_lose_starting_in_safe}, Part~(c) follows.
Since any process that receives the abort signal in $\E{C}$ is in $\Lost(C)$, and by Part~(d) of \cref{claim:both_writers_can_lose_starting_in_safe}, Part~(d) follows.
\end{proof}

\subsubsection{Detailed Construction}
\label{subsec:construction}
Let $n \geq 4$, $c = 10$, and $\ell = \lfloor \log n / (c \log\log n) \rfloor$.
We inductively construct a schedule $\sigma_i$ and a set of processes $P_i \subseteq \mathcal{P}$, for all $i \in \{0, ..., \ell\}$.
For the sake of conciseness, let $E_i = \Exec(\Gamma, \sigma_i)$, $C_i = \Conf(\Gamma, \sigma_i)$, and $L_i = \Lost(C_i)$.

The following invariants are satisfied for $i \in \{0, ..., \ell\}$:
\begin{enumerate}[(\text{I}1)]
  \item $C_i$ is safe.
  \item $|P_i \setminus L_i| \geq (n-1) / (\log n)^{ci}$.
  \item $\RMR_{P_i \setminus L_i}(E_i) \geq i\,|P_i \setminus L_i| - i$.
  \item For each process $p \in P_i \setminus L_i: \RMR_p(E_i) \leq i$.
  \item For each process $p \in P_i \setminus L_i$, $p^\top$ does not appear in $\sigma_i$.
\end{enumerate}
We now describe our inductive construction in detail.

\paragraph*{Base Case:}
Schedule $\sigma_0$ is a schedule in which each process scans its own shared memory segment, and $P_0 = \mathcal{P}$.
Note that $\Proc(\sigma_0) = \mathcal{P}$.
\paragraph*{Inductive Step:}
In $C_i$, we let each process in $P_i \setminus L_i$ that does not win in a solo-run take solo-steps until it is poised to perform an RMR.
By \cref{claim:zeroRMROneWins}, there is at most one process that wins in a solo-run starting in $C_i$, so in our execution all but one process participate.
By \cref{claim:projection_same_execution} each process performs the same steps in the solo-run starting in $C_i$ as in the constructed execution, and by \cref{claim:terminateorRMR} each process will eventually become poised to perform an RMR.
If there is a process that wins in a solo-run starting from $C_i$, we remove that process from the entire execution constructed so far.

More precisely, let $\{q_1, ..., q_k\} = P_i \setminus L_i$ and let $t_j$ be the largest integer, such that $\RMR\big(\Exec(C_i, q_j^{t_j})\big) = 0$ and $q_j$ does not terminate in $\Exec(C_i, q_j^{t_j})$, for $j \in \{1, ..., k\}$ (since by (I1), $C_i$ is safe, and by (I5) $q_j$ does not receive the abort signal in $\E{C_i}$, such an integer $t_j$ exists according to \cref{claim:terminateorRMR}).
By (I5), no process in $P_i \setminus L_i$ receives the abort signal in $\E{C_i}$. Thus, by \cref{claim:zeroRMROneWins}, any process that starting in $C_i$ terminates in its solo-run wins.
Hence, by the safety property of leader election, at most one process terminates in its solo-run starting in $C_i$.
If such a process does not exist, then let $\lambda_i = q_1^{t_1}q_2^{t_2}...q_k^{t_k}$, and $P_i' = P_i$. If such a process exists, we assume, without loss of generality (by renaming variables $q_1, ..., q_k$), that $q_k$ is the process that wins in its solo-run starting in $C_i$ without incurring any RMRs. Then let $\lambda_i = q_1^{t_1}q_2^{t_2}...q_{k-1}^{t_{k-1}}$, $P_i' = P_i \setminus \{q_k\}$.
Finally, let $D_i = \Conf\big(\Conf(\Gamma, \sigma_i | P_i'), \lambda_i\big)$.

We define for register $r$, set $R_i(r)$ as the set of processes that are poised to read $r$ in $D_i$, and set $W_i(r)$ as the set of processes that are poised to write $r$ in $D_i$.
Let $S_i=\{r \in \RR \,|\, W_i(r) \cup R_i(r) \neq \emptyset\}$.

First, we prove some properties of configuration $D_i$. 
\begin{claim}
  \label{claim:DiProperties}
  The following are true for configuration $D_i$.
  \begin{enumerate}[(a)]
    \item Configuration $D_i$ is safe.
    \item For any process $p \in P_i' \setminus L_i$, $\RMR_p(\E{C_i}) = \RMR_p(\E{D_i})$.
    \item For any process $p$ in $R_i(r) \cup W_i(r)$, it is true that $r \notin \RR_p$.
  \end{enumerate}
\end{claim}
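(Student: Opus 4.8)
The plan is to exploit the factorization of $D_i$'s defining schedule, $D_i=\Conf\big(\Gamma,(\sigma_i | (P_i')^\Delta)\circ\lambda_i\big)$, where $\lambda_i=q_1^{t_1}\cdots q_{k'}^{t_{k'}}$ is a concatenation of RMR-free solo prefixes, each chosen maximal subject to incurring no RMR and not terminating. By (I1) the configuration $C_i$ is safe; by the choice of $t_j$ each prefix satisfies $\RMR\big(\Exec(C_i,q_j^{t_j})\big)=0$; and by (I5) no abort symbol $q_j^\top$ occurs in $\sigma_i$, nor does any occur in $\lambda_i$. The key idea is to reason first about the execution run from $C_i$ itself, where the prefixes are defined, and only afterwards to project away the possibly-removed winner $q_k$ via the safe-projection results.

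For Part~(a) I would iterate \cref{claim:zeroInformation}, at the $j$-th step with $Q_0=\{q_1,\dots,q_{j-1}\}$ and $Q_1=\{q_j\}$, to conclude that $\Exec(C_i,\lambda_i)$ incurs no RMR and that each $q_j$ performs in $\lambda_i$ exactly the $t_j$ steps of its solo run. Since every step of $\lambda_i$ is then RMR-free and is taken by a process $q_j\in\Proc(\Sched{C_i})$ whose abort symbol is absent, applying \cref{claim:zeroRMRsafe} once per step shows that $\Conf(C_i,\lambda_i)=\Conf(\Gamma,\sigma_i\circ\lambda_i)$ is safe. Because $\lambda_i$ mentions only processes of $P_i'\setminus L_i$, we have $(\sigma_i\circ\lambda_i) | (P_i')^\Delta=(\sigma_i | (P_i')^\Delta)\circ\lambda_i$, so $D_i=\Conf\big(\Gamma,(\sigma_i\circ\lambda_i) | (P_i')^\Delta\big)$; since no process terminates during $\lambda_i$ the lost set is unchanged, giving $L_i\subseteq P_i'\subseteq\Proc(\sigma_i\circ\lambda_i)$, and \cref{claim:projectionSafe} then yields that $D_i$ is safe.

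For Part~(b) I would apply \cref{cor:projection_same_execution_from_safe} to the safe configuration $\Conf(\Gamma,\sigma_i\circ\lambda_i)$ with the set $P_i'\supseteq L_i$, obtaining $\E{D_i} | p=\Exec(\Gamma,\sigma_i\circ\lambda_i) | p$ for every $p\in P_i'$. Since a process's RMR count is determined by its own projected execution, and $\Exec(C_i,\lambda_i)$ is RMR-free, for $p\in P_i'\setminus L_i$ this gives $\RMR_p(\E{D_i})=\RMR_p\big(\Exec(\Gamma,\sigma_i\circ\lambda_i)\big)=\RMR_p(\E{C_i})$. For Part~(c), the same projection shows each participant's state in $D_i$ is the one reached after its $t_j$ solo steps; by the maximality of $t_j$ and the removal of the winning process (a non-winner cannot terminate RMR-free, by \cref{claim:zeroRMROneWins} and \cref{claim:terminateorRMR}), the step following $q_j^{t_j}$ must incur an RMR, so every $p\in P_i'\setminus L_i$ is poised to perform an RMR in $D_i$. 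Since every read or write of a register in $\RR_p$ is a local step, a poised RMR access cannot be to a register in $\RR_p$; hence $r\notin\RR_p$ for every participant $p\in R_i(r)\cup W_i(r)$, while any process that has taken no step is poised for its invocation read, which is likewise outside its local segment.

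The main obstacle is the first step of Part~(a): although each prefix $q_j^{t_j}$ is RMR-free in isolation, their concatenation could a priori let processes observe one another and thereby incur RMRs or diverge from solo behavior. Ruling this out is precisely the content of \cref{claim:zeroInformation}, so the crux is to arrange its iterated application correctly---disjoint accumulated and singleton process sets at each step---and to propagate safety step by step with \cref{claim:zeroRMRsafe}. The remaining inclusions $L_i\subseteq P_i'\subseteq\Proc(\sigma_i)$ needed to invoke the projection results are routine consequences of the construction and the invariants.
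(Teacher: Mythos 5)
Your proposal is correct and uses essentially the same machinery as the paper's proof: iterated applications of \cref{claim:zeroInformation} to show $\lambda_i$ is RMR-free, step-by-step applications of \cref{claim:zeroRMRsafe} for safety, the projection results for Parts~(a) and~(b), and the observation that a poised RMR access cannot target a register in the process's own segment for Part~(c). The only difference is that you run $\lambda_i$ from $C_i$ and project to $P_i'$ afterwards via \cref{claim:projectionSafe}, whereas the paper projects first and then runs $\lambda_i$ from $\Conf(\Gamma,\sigma_i|P_i')$; since $\lambda_i\in(P_i')^\ast$ the two orders commute, and both arguments go through.
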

\begin{proof}
  Let $k' = |P_i' \setminus L_i|$.
  Since $C_i$ is safe, by \cref{claim:projection_same_execution},
  \begin{equation}
  \label{eq:PiPrimeExecution}
  \Exec(\Gamma, \sigma_i | P_i') = \E{C_i} | P_i'.
  \end{equation}
  Further, by \cref{claim:projection_cache}, for each process $p \in P_i'$, it holds that $\Cache_p(C_i) = \Cache_p\big(\Conf(\Gamma, \sigma_i | P_i')\big)$.
  Therefore, when $\RMR\big(\Exec(C_i, q_j^{t_j})\big) = 0$, for $j \in \{1, ..., k'\}$, it is true that $\RMR\Big(\Exec\big(\Conf(\Gamma, \sigma_i | P_i'), q_j^{t_j} \big)\Big) = 0$.
  Thus, by applying \cref{claim:zeroInformation} Part~(b) several times (first $Q_1 = \{q_1\}$ and $Q_2 = \{q_2\}$, the next time, $Q_1 = \{q_1, q_2\}$ and $Q_2 = \{q_3\}$, and so on),
  \begin{equation}
  \label{eq:LambdaZeroRMR}
  \RMR\Big(\Exec\big(\Conf(\Gamma, \sigma_i | P_i'), \lambda_i \big)\Big) = 0.
  \end{equation}
  Hence, since by (I5) none of the processes in $P_i' \setminus L_i$ receive the abort signal in $\Conf(\Gamma, \sigma_i | P_i')$ or during $\Exec\big(\Conf(\Gamma, \sigma_i | P_i'), q_j^{t_j} \big)$, by applying \cref{claim:zeroRMRsafe} multiple times, $D_i$ is safe.
  This proves Part~(a).
  
  By \cref{eq:PiPrimeExecution}, for each process $p \in P_i'$, it holds that $\RMR_p(\E{C_i}) = \RMR_p\big(\Exec(\Gamma, \sigma_i | P_i')\big)$.
  Thus, Part~(b) follows from \cref{eq:LambdaZeroRMR}.
  
  By \cref{claim:winnerProjection}, each process in $P_i' \setminus L_i$ has the same cache in $C_i$ and $\Conf(\Gamma, \sigma_i | P_i')$.
  Hence, by \cref{eq:PiPrimeExecution} and the construction of $\lambda_i$, each process in $R_i(r) \cup W_i(r)$ is poised to perform an RMR step in $D_i$.
  Therefore, the register that each process $p \in R_i(r) \cup W_i(r)$ is poised to access is not in its own memory segment.
\end{proof}

Let $X_i = \bigcup_{r \in S_i} W_i(r)$, and $Y_i = \bigcup_{r \in S_i} R_i(r)$.
We distinguish the following cases to complete the inductive step of our construction:
\paragraph*{Case 1: $|S_i| \geq |P_i \setminus L_i| / (10 \ell)$ or $|X_i| < |Y_i|$:}
Let $\sigma_{i+1} = \sigma$ and $P_{i+1} = Q \cup \Lost(D_i)$, where $Q$ and $\sigma$ are the set of processes and the schedule we know from Lemma \ref{lemma:LowContentionAndRead}.
From Part~(b), we get that $C_{i+1}$ is safe.
By Part~(a),
\begin{equation}
|P_{i+1} \setminus L_{i+1}| \geq \frac{|P_i|}{60\ell^2}
\stackrel{(I2)}{\geq} \frac{n-1}{(\log n)^{ci}60\ell^2}
\stackrel{\ell \leq \log n}{\geq} \frac{n-1}{60(\log n)^{ci+2}} \stackrel{c = 10, n \geq 4}{\geq} \frac{n-1}{(\log n)^{c(i+1)}}.
\end{equation}
Hence, (I2) is true.
From (I3), (I4), and Part~(c), it immediately follows that (I3) and (I4) are true for $i+1$.
Invariant (I5) directly follows Part~(d).

\paragraph*{Case 2: $|S_i| < |P_i \setminus L_i| / (10 \ell)$ and $|X_i| \geq |Y_i|$:}
Applying Lemma \ref{lemma:high_contention_write_case} to configuration $D_i$, results in a set of processes $Q$, and a schedule $\sigma$.
Let $P_{i+1} = Q \cup \Lost(D_i)$ and $\sigma_{i+1} = \sigma$.
From Lemma~\ref{lemma:high_contention_write_case} we prove Invariants~(I1)-(I5).
Invariant~(I1) follows Part~(a).
From Part~(b), we have
\begin{equation}
|P_{i+1} \setminus L_{i+1}| \geq |P_i \setminus L_i|/10 \stackrel{(I2)}{\geq}
\frac{n-1}{10(\log n)^{ci}} \stackrel{n \geq 4, c = 10}{\geq} \frac{n-1}{(\log n)^{c(i+1)}},
\end{equation}
which proves (I2).
From (I3), (I4), and Part~(c), Invariants (I3) and (I4) are true.
Part~(d) and (I5) immediately imply (I5).

\paragraph*{Proof of \cref{thm:main_lower_bound}}
Using Invariants (I1)-(I5), we obtain our main theorem.
As shown above, for any abortable leader election algorithm, there exists an execution $\Exec(\Gamma, \sigma_{\ell - 1})$ that satisfies (I1)-(I5).
We have
\begin{equation}
  \frac{n-1}{(\log n)^{c(\ell - 1)}} \geq \frac{n-1}{(\log n)^{(\log n / \log \log n) - c}} \geq \frac{(n-1)(\log n)^c}{n} \stackrel{n \geq 4}{\geq} 2.
\end{equation}
Hence, by (I2) in $\Exec(\Gamma, \sigma_{\ell - 1})$ at least two processes participate and don't lose.
By (I3) at least one of these processes incurs $\Omega(\ell)=\Omega(\log n / \log\log n)$ RMRs.

\clearpage
\bibliography{main}
\clearpage
\end{document}